\documentclass[11pt]{article}
\input{style.sty}

\newcommand{\PRP}{\mathrm{PRP}}

\begin{document}

\title{Spectrum Estimation is Almost as Hard as Tomography}
\author{
 Marco Fanizza\thanks{Inria, T\'el\'ecom Paris--LTCI, Institut Polytechnique de Paris, Palaiseau, France. Email: \href{mailto:marco.fanizza@inria.fr}{marco.fanizza@inria.fr}.}
\and
Ryan O'Donnell\thanks{Computer Science Department, Carnegie Mellon University. Email: \href{mailto:odonnell@cs.cmu.edu}{odonnell@cs.cmu.edu}.}
\and
Chirag Wadhwa\thanks{School of Informatics, University of Edinburgh. Email: \href{mailto:chirag.wadhwa@ed.ac.uk}{chirag.wadhwa@ed.ac.uk}.}
}
\date{}
\maketitle

\begin{abstract}
    We study the sample complexity of estimating and testing fundamental unitarily invariant properties of unknown quantum states; namely, the tasks of spectrum estimation, von Neumann entropy estimation, and rank-testing. For $d$-dimensional states, and for every $\gamma>0$, we prove a sample complexity lower bound of $\Omega(d^{2-\gamma})$ for spectrum estimation to constant sorted total-variation error, entropy estimation to constant additive error, and rank-testing to constant trace distance. Our hard instances are constructed from sandwiched products of Haar-random projectors, suitably normalized using a novel technique that lets us derive explicit expressions for high-order tensor moments of the resultant states. For our instances, these moments can be expressed as symmetric functions of Jucys--Murphy elements of the symmetric group algebra. To show that two such mixtures are indistinguishable, we analyze the log-likelihood ratio and perform moment-matching, i.e., we set its low-order Jucys--Murphy components to zero. Indistinguishability is then obtained by bounding an $f$-divergence through the high-order components; the separated high-order terms and concentration of functions of Haar-random unitaries also imply separations in typical spectra, entropies, and ranks, proving all our lower bounds.
\end{abstract}

\newpage

\section{Introduction}
\label{sec:intro}

The spectrum captures the intrinsic properties of a quantum state, invariant under changes of basis. Among the functions of the spectrum, the von Neumann entropy plays a central role: it characterizes the optimal asymptotic compression rate of i.i.d.\ quantum sources~\cite{schumacher1995quantum}, even without prior knowledge of the source~\cite{hayashi2002quantum}, and functionals built from von Neumann entropies determine optimal asymptotic performance of many quantum information processing tasks~\cite{wilde2013quantum}. For example, for bipartite pure states, the entropy of a reduced state equals the optimal asymptotic rate of entanglement concentration and the entanglement cost~\cite{bennett1995concentrating}.
The spectrum also provides information on quantum correlations. For bipartite pure states, the eigenvalues of reduced density operators completely characterize deterministic entanglement transformations ~\cite{nielsen1999conditions}. In multipartite systems, marginal spectra constrain possible entanglement classes~\cite{walter2013entanglement}, and share important connections with representation-theoretic quantities~\cite{christandl2006spectra,klyachko2004quantum}. From a physical perspective, entanglement entropy and the entanglement spectrum have become important tools for characterizing quantum many-body systems and phases of matter~\cite{amico2008entanglement,li2008entanglement}.

These foundational roles of spectra and von Neumann entropies make the task of estimating them an important primitive. Although optimal measurements and asymptotic concentration properties of a natural empirical spectrum estimator have been known for a long time~\cite{keyl2001estimating, hayashi2002quantum, christandl2006spectra}, the precise fundamental limits of spectrum and entropy estimation have remained a challenging problem. In fact, learning the spectrum is clearly an easier task than learning the full density matrix, 
and algorithms for learning quantum states have typically proceeded in two stages: first, estimate the state's eigenvalues, and then use this information to learn its eigenvectors \cite{keyl2006quantum,haah2016sample,pelecanos2026debiased}. While describing a $d$-dimensional mixed state requires $\Theta(d^2)$ parameters, it only has $d$ eigenvalues; one may thus expect the first stage of such algorithms to be far more efficient than the second. This reasoning also applies even more strongly to entropy estimation, since the entropy is a scalar. Contrary to this expectation, the standard ``Empirical Young Diagram'' (EYD) algorithm used in such two-stage procedures is actually known to require $\Omega(d^2)$ copies for both spectrum estimation~\cite{o2021quantum} and entropy estimation (as entropy of the estimated spectrum)~\cite{acharya2020estimating}, which are also sufficient for tomography. 

Despite this limitation, recent work has developed sophisticated algorithms tailored to spectrum estimation that succeed with asymptotically fewer copies than those needed for full state tomography \cite{pelecanos2026beating,pelecanos2026keyl}, at least in the constant-precision regime. However, these algorithms are only known to provide polylogarithmic savings over learning the entire state. Moreover, the only known lower bound for spectrum estimation is $\Omega(d)$, and is implied by that for the much simpler task of mixedness testing \cite{childs2007weak,o2021quantum}. While \cite{pelecanos2026beating} provided compelling numerical evidence that $d^{2 - o(1)}$ copies of a state are necessary for spectrum estimation, proving any superlinear lower bound has long remained open, as highlighted by \cite{wright2016learn,pelecanos2026keyl}. Thus, we aim to address the following concrete question:

\begin{center}
    \emph{Does learning a state's eigenvalues require nearly as many copies as learning it fully?}
\end{center}

Our central contribution is to prove this conjectured $d^{2 - o(1)}$ lower bound for constant-precision spectrum estimation, answering this long-standing open question in the affirmative. This result also provides a near-optimal characterization of spectrum estimation in the regime of constant precision, partially answering an open question of \cite{anshu2024survey,pelecanos2026debiased}.

Before stating our results, let us formally define the task of spectrum estimation. In the sequel, $\mathrm{d_{TV}^\downarrow}$ denotes the total variation distance between sorted distributions (see \Cref{def:sorted-total-variation} for a formal definition).

\begin{definition}
    \label{def:spectrum-estimation}
    Given $n$ copies of an unknown state $\rho \in \mathbb{C}^{d \times d}$, we say an algorithm estimates its spectrum to within precision $\eps$ if it outputs a vector $\widehat{\balpha} \in \mathbb{C}^d$ such that, with probability at least $2/3$, 
    \begin{equation}
        \dtvsorted{\widehat{\balpha}}{\spec(\rho)} \leq \eps.
    \end{equation}
\end{definition}

Here, $\spec(\rho)$ is the spectrum of $\rho$, i.e., a list of its eigenvalues. We show the following lower bound for estimating this quantity:

\begin{theorem}
\label{thm:intro-spec}
    For every even $k \geq 2$, there exists a dimension-independent $\eps_k > 0$ such that the copy complexity of spectrum estimation with precision $\eps_k$ is at least $\Omega(d^{2 - \frac{2}{k/2+2}})$. In other words, at least $d^{2 - o(1)}$ copies are necessary for constant-precision spectrum estimation of $d$-dimensional states.
\end{theorem}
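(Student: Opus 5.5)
The plan is a two-point (more precisely, two-mixture) indistinguishability argument, following the outline in the abstract.

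\textbf{Step 1: the two ensembles.} I will build two families of random $d$-dimensional states, $\rho_{\mathcal{A}}$ and $\rho_{\mathcal{B}}$, each a unitarily invariant mixture. Each state is a normalized sandwiched product of independent Haar-random projectors, schematically
\[
\rho \propto P_1 P_2 \cdots P_2 P_1,
\]
with ranks proportional to $d$. Taking suitable convex combinations of such products with different parameters gives the two ensembles. Because both ensembles are unitarily invariant, any optimal distinguisher may be taken to be weak Schur sampling. The task therefore reduces to distinguishing two distributions on Young diagrams $\lambda \vdash n$, induced by the $n$-th tensor moments $\mathbb{E}[\rho^{\otimes n}]$.

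The normalization trick is to divide by a suitable scalar functional, or to condition on a concentrated trace, instead of exactly by $\mathrm{tr}(\cdot)$. This makes $\mathbb{E}[\rho^{\otimes n}]$ exactly computable via Weingarten calculus. The result is a symmetric function of the Jucys--Murphy elements $J_1,\dots,J_n$ acting on $(\mathbb{C}^d)^{\otimes n}$. Since symmetric functions of Jucys--Murphy elements act on the isotypic component $\lambda$ as the same symmetric function evaluated at the contents of $\lambda$, the likelihood ratio between the two induced Schur--Weyl distributions becomes an explicit ratio of content polynomials in $\lambda$.

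\textbf{Step 2: moment matching.} I will expand the log-likelihood ratio $\log(p_{\mathcal{A}}(\lambda)/p_{\mathcal{B}}(\lambda))$ in powers of $1/d$, grading by Jucys--Murphy degree. The mixture weights and parameters of the two ensembles have enough freedom (a linear system of size about $k/2$) to zero out all components of degree $< k/2+1$, for a given even $k$. The first surviving term then has magnitude about $p_{k}(\text{contents})/d^{k/2+2}$, or similar, which under $\lambda$ of size $n$ is typically of order $n^{(k/2+2)/2}/d^{k/2+2}$ times constants.

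The next task is to bound an $f$-divergence, using $\chi^2$ or Hellinger, by $\mathbb{E}[(\text{high-order part})^2]$ plus tail terms. This quantity is $o(1)$ as long as $n \ll d^{2-2/(k/2+2)}$. The tail terms come from rare $\lambda$ with long rows or columns and are controlled by standard concentration of the Schur--Weyl distribution.

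\textbf{Step 3: separation of spectra.} I must show that typical spectra of $\rho_{\mathcal{A}}$ and $\rho_{\mathcal{B}}$ differ by $\geq 2\eps_k$ in $\mathrm{d_{TV}^\downarrow}$. Because the low moments $\mathrm{tr}\,\rho^j$ are matched for small $j$ but the high-order terms are separated, the limiting spectral distributions are given by free multiplicative convolutions of Bernoulli laws, and these are distinct. Concentration of Lipschitz functions of Haar unitaries then puts each sample's empirical spectrum within $o(1)$ of its limit with high probability. Finally, a spectrum estimator with precision $\eps_k$ would distinguish the two ensembles with advantage $\geq 1/3 - o(1)$, which contradicts Step 2.

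\textbf{Main obstacle.} I expect Step 2 to be the hardest part: the precise $f$-divergence bound after moment matching. The concern is that the cancellations have to be exact at the level of Jucys--Murphy symmetric functions, not just asymptotically. Controlling the remainder uniformly over $\lambda$ also requires care, because content sums can be large on atypical diagrams. My first attempt would be to use the exact moment formula to write the ratio as $\exp$ of a polynomial in power-sum contents $p_j(\lambda)$. I would then truncate at $|p_j(\lambda)| \le C n^{(j+1)/2}$, which holds with high probability under Plancherel-like measures, and apply $\chi^2 \le \mathbb{E}[e^{2X}]-1$ on the good event.
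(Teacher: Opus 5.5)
\textbf{Normalization is the main gap.} Your overall architecture matches the paper's: Jucys--Murphy product formulas, moment matching in the log-likelihood, a second-moment $f$-divergence bound, and Haar concentration for separation. But neither normalization mechanism you propose in Step 1 works.

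Dividing $\bX$ by a deterministic scalar $c$ gives operators with $\mathrm{Tr}(\bX/c)\neq 1$, so $\mathbb{E}[(\bX/c)^{\otimes n}]$ is not the $n$-copy moment of any ensemble of states. Its discrepancy from a normalized ensemble is governed by $(\mathrm{Tr}\,\bX/c)^n$. Since $\mathrm{Tr}\,\bX$ fluctuates by $\Theta(1)$ around $d\nu_e$, this factor fluctuates by $e^{\Theta(n/d)}$, which is uncontrolled in exactly the regime $n\gg d$ that matters.

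Conditioning on a trace window fails too. After dividing by $\mathrm{Tr}\,\bX$, the factor $(\mathrm{Tr}\,\bX)^{-n}$ is nearly constant only on windows of width $o(d/n)$. Conditioning on such an event destroys the product formula $\prod_t f_e(\widetilde{J}_t)$ that Weingarten calculus gives.

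The missing idea is the $n$-dependent tilt. Draw $\widetilde{\bX}$ with density $\mathrm{Tr}(\bX)^n/\mathbb{E}[\mathrm{Tr}(\bX)^n]$ relative to $\bX\sim\PRP_d(e)$, and set $\brho=\widetilde{\bX}/\mathrm{Tr}\,\widetilde{\bX}$. This is a genuine state, and $\mathbb{E}\,\brho^{\otimes n}=\mathbb{E}[\bX^{\otimes n}]/\mathbb{E}[\mathrm{Tr}(\bX)^n]$ exactly.

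The tilt must then be paid for in Step 3, because separation has to hold under the tilted law. The paper uses $\Pr(\widetilde{\bX}\in A)\le\nu_e^{-n}\Pr(\bX\in A)$. So your Haar-concentration bounds need failure probability $e^{-\Omega(d^2)}$ to beat the $e^{O(n)}$ cost when $n=o(d^2)$. This is available because $\mathrm{tr}(\bX^j)$ is $O(d^{-1/2})$-Lipschitz; ``within $o(1)$ with high probability'' is not enough.

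\textbf{Step 2 does not produce the stated exponent.} Convex combinations of ensembles break the structure you need. The quantity $\log\sum_\ell w_\ell\prod_t f_\ell(\widetilde{J}_t)$ is not of the form $\lambda_0+\sum_t h(\widetilde{J}_t)$, and under the tilt the weights are in any case distorted by factors $\nu_\ell^{\,n}$. The paper's freedom is instead the ranks $d/e_i$ of the $K$ projectors: $\log(f_b/f_a)(z)=\sum_j\frac{(-1)^{j+1}}{j}\bigl(\sum_i b_i^j-\sum_i a_i^j\bigr)z^j$. Killing low degrees is therefore a Prouhet--Tarry--Escott problem, solved with $K=2^{k-1}$.

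Your count is also off. With first surviving degree $m$, the second-moment analysis only gives threshold $d^{2-2/(\lfloor m/2\rfloor+2)}$. So you must vanish through degree $k-1$, not just $k/2$. Your heuristic size $n^{(k/2+2)/2}/d^{k/2+2}$ is $o(1)$ for every $n\ll d^2$, so it cannot be the source of the threshold you state.

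What actually sets the threshold is the variance $\mathrm{Tr}\bigl(\overline{\rho}_e^{(n)}(\mathcal{P}_k^\circ)^2\bigr)$ of the centered Jucys--Murphy power sum. Its worst contribution, of order $n^{k+4}/d^{2k+4}$, comes from pairs of non-inverse permutations in the class expansion whose supports only partially overlap. The bad-set mass, $O(n^{k/2+2}/d^{k+2})$, and the remainder $n\,\mathcal{P}_{2k+4}$ give the same threshold. Bounding these requires three ingredients:
\begin{itemize}
\item the class expansion of $\sum_t J_t^m$;
\item the genus inequality $|\sigma|+|\tau|+|\sigma\tau|\ge 2(n-o)$;
\item the pointwise decay $\mathrm{Tr}(\pi\,\overline{\rho}_e^{(n)})\le(C_0/d)^{|\pi|}$, which itself relies on the tilt.
\end{itemize}

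Finally, triangular discrimination with an arbitrary scalar shift needs only these second moments. Your $\chi^2\le\mathbb{E}[e^{2X}]-1$ route would instead require exponential moments of $\sum_t h(\widetilde{J}_t)$.
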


Together with the $O(d^2 \cdot (\log \log d/ \log d)^2)$ upper bound of \cite{pelecanos2026keyl}, our result presents a near-optimal dimension dependence for the task of spectrum estimation. It also implies that the EYD algorithm, which succeeds at spectrum estimation with $\Theta(d^2)$ copies, is near-optimal for this task.

We prove our lower bounds by showing the hardness of distinguishing between a certain pair of mixtures of quantum states. This pair is chosen such that a pair of randomly drawn states from these mixtures will have distant spectra. We also show that such states are separated in their \emph{von Neumann entropies}, i.e., $S(\rho) \coloneqq -\Tr(\rho \ln(\rho))$. This entropy separation and the former indistinguishability result immediately imply the following lower bound.

\begin{theorem}
    \label{thm:intro-entropy}
    For every even $k \geq 2$, there exists a dimension-independent $\eps_k > 0$ such that the copy complexity of estimating a $d$-dimensional state's von Neumann entropy to within precision $\eps_k$ is at least $\Omega(d^{2 - \frac{2}{k/2+2}})$.
\end{theorem}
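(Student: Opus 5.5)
The proof is a standard reduction from estimation to distinguishing, built on the hard pair of mixtures that also underlies \Cref{thm:intro-spec}. Fix an even $k$. Take two ensembles $\mathcal{D}_0,\mathcal{D}_1$ of $d$-dimensional states, each obtained as a normalized sandwiched product of Haar-random projectors, with their free parameters tuned so that the low-order Jucys--Murphy components of the log-likelihood ratio vanish. The argument then has three steps.

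\textbf{Indistinguishability.} We show that for $n = o(d^{2-\frac{2}{k/2+2}})$ the averaged states $\E_{\rho\sim\mathcal{D}_0}\rho^{\otimes n}$ and $\E_{\rho\sim\mathcal{D}_1}\rho^{\otimes n}$ are close in trace distance, say at most $1/10$. This is the same indistinguishability statement used for spectrum estimation, so we invoke it as given. It rests on bounding an $f$-divergence by the surviving high-order Jucys--Murphy terms, and the exponent $2-\frac{2}{k/2+2}$ comes from how many orders of moments can be matched.

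\textbf{Entropy separation.} We show there are constants $a_0 \neq a_1$, independent of $d$, with $|a_0-a_1| \ge 3\eps_k$ for a suitable $\eps_k>0$, such that $\Pr_{\rho\sim\mathcal{D}_b}[|S(\rho)-a_b|\le \eps_k] \ge 0.99$ for $b=0,1$. To do this, first compute the typical spectrum of each ensemble. Because the states are built from Haar projectors, their spectra concentrate around a deterministic limiting law, for example a free-probability distribution or one read off from the moment expressions $\Tr\rho^m$. Then pass from spectrum to entropy. On the support, eigenvalues are of order $1/d$, so write $S(\rho) = \ln d - \sum_i \lambda_i \ln(d\lambda_i)$, where the second term is a bounded functional of the rescaled spectral measure. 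The constructions force the rescaled limiting measures $\mu_0 \neq \mu_1$ to have different high-order moments. We therefore need $\int x\ln x \, d\mu_0 \neq \int x\ln x\, d\mu_1$, and we check this explicitly from the limiting laws rather than relying on moment differences alone.

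\textbf{Reduction.} An entropy estimator with precision $\eps_k$ and success probability $2/3$, run on $n$ copies, can be thresholded at $(a_0+a_1)/2$. This distinguishes the two mixtures with advantage at least $2/3 - 1/3 - 0.02 > 0.1$. That contradicts the indistinguishability step, which gives the lower bound.

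\textbf{Main obstacle.} The hard part is the entropy separation. Concentration requires a Lipschitz argument: $\rho \mapsto S(\rho)$ composed with Haar unitaries, using Lévy's lemma, or alternatively concentration of the empirical spectral measure together with continuity of $x\ln x$. Care is needed near zero eigenvalues and near eigenvalues of unbounded rescaled size, which a truncation handles. Beyond that, we must verify that matching low moments does not accidentally make $\int x\ln x$ equal for the two laws. I would first attempt this by computing the limiting spectra in closed form. If that proves intractable, I would use the fact that $x\ln x$ is not a polynomial and show non-equality by a perturbative argument in the construction's parameters.
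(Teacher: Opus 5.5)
Your reduction and your use of the indistinguishability theorem are fine, but the entropy-separation step has a genuine gap.

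\textbf{The gap.} You want to separate the entropies of the \emph{original} states, i.e.\ show $\int x\ln x\,d\mu_a\neq\int x\ln x\,d\mu_b$ for the limiting rescaled spectral laws. The construction only guarantees that the normalized power sums agree through degree $k$ and differ at degree $k+1$, and that does not transfer to $\int x\ln x\,d\mu$. Each $\mu_e$ has mean $1$ and an atom of mass $1-1/\max_i e_i$ at $0$. Its nonzero part therefore has mean $\max_i e_i$, which is at least $3$ for the Thue--Morse parameters. So the expansion of $x\ln x$ about $x=1$, whose radius of convergence is $1$, diverges on the spectrum. Even apart from convergence, there is no small parameter that would make the first unmatched moment dominate the others.

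Neither fallback fixes this. The limiting laws are free multiplicative convolutions of up to $2^{k-1}$ Bernoulli laws, with no usable closed form for general $k$. For the $k=2$ example $a=(1,3)$, $b=(2,2)$ you can compute by hand: the laws are $\tfrac23\delta_0+\tfrac13\delta_3$ and an equal mixture of $\delta_0$ with the arcsine law on $[0,4]$, giving entropies $\ln d-\ln 3$ versus $\ln d-1+o(1)$. This does not extend to all even $k$. Your perturbative alternative is not specified. Any perturbation of $a,b$ must stay on the solution set of the Prouhet--Tarry--Escott system to keep the exact low-degree matching that indistinguishability relies on, and you give no mechanism showing the entropy difference is nonzero there.

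\textbf{The missing idea.} Introduce the small parameter by post-processing rather than by changing the construction. Apply the depolarizing channel $\mathcal N_p(\rho)=p\rho+(1-p)\Id/d$ to both ensembles. By data processing, $\E\,\mathcal N_p(\brho_a)^{\otimes n}$ and $\E\,\mathcal N_p(\brho_b)^{\otimes n}$ remain indistinguishable, so it suffices to separate entropies after depolarizing.
\begin{itemize}
\item On the typical event $\|\rho\|_\infty\le B/d$, each rescaled eigenvalue $1+p(d\lambda_i-1)$ of $\mathcal N_p(\rho)$ lies within $pC_B<1$ of $1$. Hence $\ln d-S(\mathcal N_p(\rho))=\sum_{j\ge2}\frac{(-1)^jp^j}{j(j-1)}\mathrm{cm}_j(\rho)$ converges uniformly (\Cref{lem:entropy-expansion}).
\item The centered moments match through degree $k$ up to $O(\eta_k+d^{-1})$, where $\eta_k$ is the concentration tolerance, and differ by $\kappa_k\neq0$ at degree $k+1$ (\Cref{lem:first-spectral-gap,lem:power-sum-concentration}).
\item The entropy gap therefore has magnitude $|\kappa_k|p^{k+1}/(k(k+1))$ up to $O(\eta_k+d^{-1}+p^{k+2})$. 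Choosing first $p=p_k$ small and then $\eta_k$ small makes it a positive constant, for every even $k$, without computing any limiting law.
\end{itemize}

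\textbf{A separate omission: the tilt.} Your concentration argument must hold under the $n$-tilted law that defines $\brho_e$, which you never mention. You need untilted failure probabilities $\exp(-\Omega(d^2))$ to absorb the factor $\nu_e^{-n}=e^{O(n)}$ from \Cref{prop:prob_tilt}. L\'evy-type bounds for normalized traces of polynomials in $\bX_e$ do give this when $n=o(d^2)$, but it has to be said.
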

To our knowledge, the prior best-known lower bound for entropy estimation was $\Omega(d/\log d)$, implied by the classical lower bounds for Shannon entropy estimation \cite{valiant2011estimating,wu2016minimax}. We have provided a near-quadratic improvement by showing that $d^{2 - o(1)}$ copies of a state are also necessary to estimate its von Neumann entropy. Together with the bounds of \cite{acharya2020estimating}, our result shows that the EYD achieves a near-optimal performance for this task.

We also show that the pairs of ensembles considered here have separated ranks, implying a near-quadratic lower bound for testing the rank of a state.

\begin{theorem} 
    \label{thm:intro-rank}
    For all $\gamma > 0$ and $0 < \beta < 1$, there exists a dimension-independent $\eps_{\gamma,\beta} > 0$ such that testing whether a state has rank at most $\beta \cdot d$ or is $\eps_{\gamma,\beta}$-far in trace distance from all such states requires $\Omega(d^{2 - \gamma})$ copies of the state.
\end{theorem}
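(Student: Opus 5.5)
The plan is to reduce rank-testing to the indistinguishability of the two ensembles that also drive \Cref{thm:intro-spec} and \Cref{thm:intro-entropy}, after tuning the construction so that one ensemble lands on low-rank states and the other on states far from low rank. Given $\gamma>0$, fix an even $k$ with $2/(k/2+2)<\gamma$, so that the indistinguishability threshold $d^{2-2/(k/2+2)}$ exceeds $d^{2-\gamma}$.

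\textbf{Construction.} The instances are normalized sandwiched products of Haar-random projectors, for example $\rho \propto P_1 P_2 \cdots P_m \cdots P_2 P_1$ with projectors of prescribed ranks. The rank of such a state is at most the smallest rank of any projector in the sandwich, and for Haar-random projectors in general position it equals that minimum almost surely. Within the moment-matching family, I would choose the rank parameters of the two mixtures as follows. Mixture $A$ has its outermost projector (and hence the state itself) of rank at most $\beta d$. Mixture $B$ has all projectors of rank at least $\beta' d$ with $\beta'>\beta$, and its low-order Jucys--Murphy components still agree with those of $A$ up to order $k$.

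The delicate point is to check that the linear system used for moment-matching still has a solution once one side is constrained to rank $\le \beta d$ and the other to rank $> \beta d$. Since the low-order tensor moments are polynomials in the rank ratios and the normalization, there are enough free parameters when the number of projectors is chosen large relative to $k$. The solution should be explicit, like the one underlying \Cref{thm:intro-spec}, but for arbitrary $\beta\in(0,1)$.

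\textbf{Distance from low rank.} Next I would show that a typical state from $B$ is $\eps$-far in trace distance from every rank-$\beta d$ state. The distance from $\rho$ to the nearest rank-$r$ state is, up to constants, $\sum_{i>r}\lambda_i^{\downarrow}(\rho)$, so it suffices to show that the tail mass beyond index $\beta d$ is at least a constant. The spectrum of the sandwiched product converges to a free multiplicative convolution of Bernoulli laws, which is a deterministic limiting measure with no atom at zero beyond its rank deficit. Its mass on the top $\beta' d$ eigenvalues beyond $\beta d$ is then a positive constant $\eps_{\gamma,\beta}$. Concentration of Lipschitz spectral functions of Haar unitaries upgrades this to hold with high probability. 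This is the same separated-typical-spectrum argument the paper uses, specialized to the tail functional.

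\textbf{Conclusion.} A tester using $n=o(d^{2-2/(k/2+2)})$ copies would distinguish $\mathbb{E}_A \rho^{\otimes n}$ from $\mathbb{E}_B\rho^{\otimes n}$ with constant advantage, after conditioning on the high-probability events. This contradicts the $f$-divergence bound.

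I expect the main obstacle to be the first step: making the moment-matched pair realize an arbitrary rank threshold $\beta$ with a strict rank gap while keeping a constant tail mass. If exact parametrization is awkward, my fallback is to pad with a maximally mixed block or rescale dimensions, embedding a $\beta$-dependent instance into $\mathbb{C}^d$. This changes the constants but not the exponent.
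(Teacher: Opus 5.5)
Your overall route is the paper's: the same tilted PRP ensembles built from a moment-matched pair; the low-rank side has rank equal to its minimal projector rank almost surely (general position, preserved under the absolutely continuous tilt); the other side has constant tail mass via the atom at zero of the free multiplicative convolution of Bernoulli laws, plus concentration; and padding handles general $\beta$. However, you leave open the step you call the main obstacle, and you mis-frame it. The matching is not a linear system in the rank ratios. It is the Prouhet--Tarry--Escott condition $\sum_i a_i^j=\sum_i b_i^j$ for $j<k$ on the inverse normalized ranks $a_i=d/\mathrm{rank}(\bPi_i)$, and no constrained solution has to be found. Two disjoint matched multisets automatically have distinct maxima. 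Concretely, the Thue--Morse pair shifted by one has $\max_i a_i=2^k$ and $\max_i b_i=2^k-1$. So $\mathrm{rank}(\brho_a)=d/2^k$, while every projector on the $b$-side has rank at least $d/(2^k-1)$, which is exactly the strict gap you need at $\beta_0=2^{-k}$. The paper then reaches arbitrary $\beta$ by a simple padding of the dimension, i.e., your fallback; this changes $d$ by a constant factor and preserves indistinguishability by data processing. Alternatively, the affine map $x\mapsto 1+\lambda x$ preserves the matching and would place $\max_i a_i$ at $1/\beta$ directly. Your ``enough free parameters'' argument is therefore unnecessary, and as written it is not an argument.

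You should also make the tilt explicit. The ensembles must be the tilted ones, since otherwise $\E[\brho^{\otimes n}]\propto\E[\bX^{\otimes n}]$ fails and the indistinguishability bound does not apply. Haar concentration, however, controls only the untilted law. The paper's transfer $\Pr(\wt{\bX}\in A)\le\nu^{-n}\Pr(\bX\in A)$ costs a factor $e^{O(n)}$. So you need the failure probability of the tail-mass event to be $e^{-\Omega(d^2)}$, which Lipschitz concentration of the normalized power sums provides, together with $n=o(d^2)$. A failure probability that merely vanishes, as weak convergence to the limiting law would give, does not survive this transfer bound.
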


Our result provides a quadratic improvement over the prior best-known lower bound for this task implied by a result of O'Donnell and Wright \cite{o2021quantum}, who showed that $\Omega(r)$ copies are necessary for testing whether a state has rank $r$. Further, \cite{o2021quantum} showed that the more stringent task of rank-testing with \emph{one-sided} error\footnote{Here, testing with one-sided error requires the tester to always accept when the state has rank at most $r$.} can be solved with $\Theta(r^2)$ copies. Our lower bound is thus near-optimal for this task, and demonstrates that the problems of rank-testing with one-sided and two-sided error are almost equally hard.

\subsection{Technical Overview}
\label{sec:tech-overview}

To prove our lower bounds, we will construct a sequence of unitarily-invariant testing problems, one for each even $k \geq 2$, such that the complexity of each test is $\Omega(d^{2 - \frac{4}{k+4}})$. Prior to our work, the best-known lower bounds for testing any unitarily invariant property of a state were for state certification, and specifically, its special case of mixedness testing. These lower bounds are typically proven by showing the hardness of distinguishing the maximally mixed state from a state drawn at random from a suitable mixture (see e.g., \cite{bubeck2020entanglement,chen2022tight,odonnell2026instance,wadhwa2026optimal}). However, state certification \emph{upper} bounds \cite{buadescu2019quantum} imply that $O(d)$ copies are sufficient for any such point-versus-mixture test. Thus, to prove superlinear lower bounds, we must shift away from this paradigm and consider harder testing problems.

In particular, we will consider distinguishing between two mixtures, labeled as $a$ and $b$. These tasks are of the following form: one either receives a random state $\brho_a$ or $\brho_b$, sampled from different laws, and must determine which of the two laws the unknown state came from. Proving such mixture-versus-mixture lower bounds is notoriously hard, and to our knowledge, no superlinear testing bounds have appeared in the quantum information literature for such problems. Nevertheless, we are motivated to consider mixture-versus-mixture tasks due to their wide success in lower bounds for estimating symmetric properties of \emph{classical distributions}, including for the analogous tasks of sorted distribution estimation \cite{han2018local} and Shannon entropy estimation \cite{valiant2011estimating,wu2016minimax}. While such distinguishing tasks have proven fruitful for distribution testing lower bounds, proving these lower bounds required several technical developments, some of which do not immediately generalize to our quantum setting. 

We start by outlining these ideas from classical distribution testing in \Cref{sec:classical-techniques}. In \Cref{sec:quantum-mixtures}, we use these ideas to motivate our choices of random states $\brho_a$ or $\brho_b$ that we will consider in this work. Lastly, in \Cref{sec:Jucys--Murphy-moment-matching}, we outline our lower bound proofs for distinguishing between such mixtures of states.

\subsubsection{Classical mixture-versus-mixture lower bounds}
\label{sec:classical-techniques}

To prove lower bounds for testing symmetric properties of distributions, a typical strategy is to construct pairs of distributions with matched moments, with the intuition that if two distributions $p,q$ have matched moments $\sum_{i = 1}^d p_i^j = \sum_{i = 1}^d q_i^j$ up to some large degree $j \leq k-1$, then distinguishing between them requires estimating $k$-wise collision probabilities. However, for balanced distributions, such collisions are only likely to be seen when the number of samples is roughly $n \gtrsim d^{1 - \frac{1}{k}}$.

The above observation encourages us to construct pairs of distributions that have as many matched moments as possible while preserving other desirable properties, e.g., separated entropies. A typical strategy for this is to sample each entry $\bp_i$ appropriately at random, such that (a)~$\bp_i$ and $\bq_i$ have matched moments, (b)~$\Ex[\bp_i] = \frac1d$, ensuring that the resulting vector is close to being normalized with high probability. However, one must formally argue that such unnormalized vectors closely represent the behaviour of actual distributions. While a few different strategies exist for this, the most prominent one is Poissonization. 

The method of Poissonization suggests that the number of samples itself can be randomized, i.e., drawn from the distribution $\mathrm{Poisson}(n)$. It is well-known that the sample complexity in this model is equivalent to that in the standard model up to constant factors (see e.g.,~\cite{valiant2008testing,jiao2015minimax}). While this Poissonized access model may appear strange, it can be helpful for the following reason: this perspective is equivalent to imagining one independently receives $\mathrm{Poisson}(n\bp_i)$ samples of each entry $i \in [d]$. Consequently, it suffices to bound the statistical distance between the distributions $\mathrm{Poisson}(n\bp_i)$ and $\mathrm{Poisson}(n\bq_i)$. In other words, Poissonization transforms a mixture-versus-mixture task with $n$ samples to one with a \emph{single} sample, greatly simplifying the analysis. Moreover, this is a completely valid setting even if the randomly sampled $\bp_i$s don't exactly add up to 1, completely bypassing the normalization issue. The technique of Poissonization has been widely applicable in distribution testing, and has yielded tight lower bounds for testing many symmetric properties~\cite{jiao2015minimax,valiant2011estimating,han2018local,wu2016minimax}.

\subsubsection{Mixtures of quantum states}
\label{sec:quantum-mixtures}

Moving to the quantum setting, the most natural idea for candidate hard instances is to take unitarily-invariant state ensembles with far enough, fixed spectra. It is not hard to guess pairs of separated spectra with low-order moment-matching, and indeed some examples were considered by~\cite{pelecanos2026beating} to present striking numerical evidence of a superlinear lower bound. However, it is unclear how to prove that the resulting ensembles are indistinguishable, i.e., that $\Dtr{\E\rho_{a}^{\otimes n}}{\E\rho_{b}^{\otimes n}}$ is small if $n=o(d^{2-\gamma})$, for some $\gamma<1$. 

Generalizing arguments from the classical case, one could instead imagine independently randomizing each eigenvalue of a quantum state to aid with moment-matching, followed by the application of a Haar-random unitary to make the ensembles as hard to distinguish as possible. More generally, one could pick any well-studied unitarily invariant ensemble of matrices to construct the hard mixtures of quantum states. However, having the random matrices obtained in either case suitably normalized and ensuring that the resulting mixtures of states are tractable to analyze seem incredibly difficult to achieve simultaneously. It does not appear that drawing the number of copies of a state from a Poisson distribution would help with either of these concerns in the quantum case. 

A key contribution of this work is a novel idea that gets around both of these issues simultaneously. As we will discuss in depth later, many unitarily invariant ensembles of positive semidefinite matrices have well-characterized $n$-fold tensor moments, i.e., for a random matrix $\bX$ drawn from certain ensembles, one can derive explicit closed-form expressions for $\Ex [\bX^{\ot n}]$. While a typical strategy for drawing random states would be to draw $\bX$ and then produce the state $\brho = \bX / \Tr(\bX)$, $n$-fold moments of such states have the form
\begin{equation}
\label{eq:proportionality}
   \Ex \left[
        \frac{\bX^{\otimes n}}{\Tr(\bX^{\otimes n})}
    \right],
\end{equation}
which is much harder to characterize. To make use of the elegant tensor moment formulas available, we instead introduce an $n$-dependent tilt to our initial distributions. In particular, we draw a random matrix $\wt{\bX}$ with density $\Tr(\bX)^n / \Ex[\Tr(\bX)^n]$ relative to that of $\bX$. This has the effect of producing a global normalization, such that the state ${\brho} = \wt{\bX} / \Tr( \wt{\bX})$ satisfies
\begin{equation}
    \Ex {\brho}^{\otimes n} = \frac{\Ex[\bX^{\otimes n}]}{\Ex [\Tr(\bX)^n]}.
\end{equation}
By decoupling the global normalization factor from the random matrix $\bX$, we can exploit explicit expressions for $\Ex[\bX^{\otimes n}]$ while ensuring that we produce valid quantum states. Going forward, we will discuss initial ensembles of $\bX$ without this tilt, but imagine that the state was indeed drawn with respect to the tilted law.

Let us now describe the kind of distributions we will draw $\bX$ from. In general, the operator $\Ex \bX^{\otimes n}$, for $\bX$ with a unitarily invariant law, commutes with representations of both the unitary group and the symmetric group. Via Schur-Weyl duality~\cite{goodman2009symmetry}, the commutant of $\{U^{\otimes n}:U\in\mathrm U(d)\}$ is the algebra of permutations on $(\mathbb{C}^{d})^{\otimes n}$, and the unitary twirl projects onto it~\cite{CS06}.
This implies that $\E \bX^{\otimes n}$ lies in its center, generated by symmetric polynomials in the (commuting) \emph{Jucys--Murphy elements}~\cite{JUCYS1974107}, i.e.
 
 \begin{equation}
    J_t \coloneqq \sum_{1 \leq i < t} (i\ t),
\end{equation}
where $(i\ t)$ is a transposition. We will often find it convenient to write a normalized form of these elements, i.e., $\wt{J}_t \coloneqq J_t/d.$

In light of the above observation, our central idea is to use ensembles with particularly simple explicit formulas for $\E \bX^{\otimes n}$ in terms of the Jucys--Murphy elements, and exploit the fact that expressing the trace distance or other distinguishability measures in terms of Jucys--Murphy makes it feasible to advance the calculation with simple combinatorics, rather than careful control of cancellations in expressions involving Schur polynomials~\cite{o2021quantum}. Several classical random-matrix ensembles admit explicit formulas of this kind, and behind the scenes these expressions are related to generalizations of Selberg integrals~\cite{forrester2008importance,Kadell1993}. We find it sufficient to build on the simplest examples, which already have a notable use in quantum information theory~\cite{KarolZyczkowski_2001, nechita2007asymptotics, collins2016random}, and they let us avoid using representation theory explicitly. In exchange for this simplicity, the pairs of spectra we consider are no longer deterministic, but we still manage to show that they are typically separated using concentration inequalities for functions of Haar-random unitaries~\cite{Meckes_2019}.

\subsubsection{ Matching moments of Jucys--Murphy elements}
\label{sec:Jucys--Murphy-moment-matching}

As discussed in the previous section, we can draw random states $\brho_a,\brho_b$ satisfying \Cref{eq:proportionality}. Let $\ol{\rho}_a^{(n)},\ol{\rho}_b^{(n)}$ denote $\Ex \brho_a^{\ot n}, \Ex \brho_b^{\ot n}$ respectively. To avoid handling the decoupled normalization factors explicitly, we consider the log-likelihood ratios on a common subspace where their eigenvalues are positive:
\begin{equation}
    \log(\rho_b^{(n)}) -  \log(\rho_a^{(n)}) = \lambda_0 \mathbbm{1} + \log(\Ex \bX_b^{ \otimes n}) - \log(\Ex \bX_a^{\otimes n}),
\end{equation}
where $\lambda_0$ consists of the contributions due to these normalization factors. As we show in \Cref{lem:triangular-upper}, these scalar terms can be entirely neglected when proving statistical indistinguishability. 

Now, taking inspiration from classical moment-matching, a natural strategy is to consider mixtures that allow us to approximate this log-likelihood ratio as a symmetric polynomial in Jucys--Murphy elements whose first non-zero term (barring the scalar) has the highest degree possible. In other words, we aim to perform moment-matching at the level of Jucys--Murphy elements. 

As a warmup, we consider distinguishing between the cases where a) $\bX_a$ is a random projector of rank $d/2$, and b) $\bX_b$ is obtained by tracing out a $d$-dimensional subsystem from a $d^2$-dimensional Haar-random state\footnote{Of course, here, $\bX_b$ is already a quantum state and no normalization is necessary.}. These ensembles are natural candidates to begin our analysis with: every state from the first ensemble has a deterministic spectrum (and thus also entropy and rank), and the asymptotic spectral and entropic properties of the second ensemble have been studied in detail in quantum information theory~\cite{nechita2007asymptotics,page1993average, vivo2016random,wei2017proof,bianchi2019typical}; we further prove that these spectral properties of states from the second ensemble are well-separated from those of the first with high probability (see \Cref{lem:warmup-separation}). Thus, showing that these ensembles are indistinguishable is sufficient to obtain new lower bounds for spectrum and entropy estimation as well as rank testing.

Now, towards indistinguishability, we show that for these particular ensembles,
\begin{equation}
    \rho_a^{(n)} \propto \prod_{t = 1}^n \frac{1+2 \wt{J}_t}{1 + \wt{J}_t}, \qquad \rho_b^{(n)} \propto \prod_{t = 1}^n (1 + \wt{J}_t);
\end{equation}
see \Cref{prop:the-key} and \Cref{lem:pagestates} for exact statements and proofs. Using the specific form of the log-likelihood ratio thus obtained, we show that its first two moments cancel out and the relative entropy depends only on the third moments of Jucys--Murphy elements:
\begin{equation}
    D(\rho_a^{(n)} \| \rho_b^{(n)}) \leq 2 \sum_{t = 1}^n \Tr(\rho_a^{(n)} \wt{J}_t^3) \leq C\frac{n^3}{d^4},
\end{equation}
implying $n \geq \Omega(d^{4/3})$ to distinguish between these ensembles.

For the more general lower bounds, we will aim to match far more than $2$ moments, with the intuition that this should lead to stronger lower bounds. The only degree of freedom above was in the rank of the random projector considered. Matching more moments will require many more parameters, and so we will instead consider products of many random matrices. In particular, we will use sandwiched products of random projectors of suitably chosen ranks~\cite{collins2005product}. Suppose we wish to match $k-1$ moments for some even $k \geq 2$. Picking an appropriate $K = K(k)$, we choose suitable parameters $a,b \in \mathbb{N}^K$. For $e \in \{a,b\}$, we multiply $K$ random projectors of ranks $d/e_1, \dots, d/e_K$, and output the random matrix
\begin{equation}
    \bX_e = \bPi_1 \dots \bPi_{K-1} \bPi_K \bPi_{K-1} \dots \bPi_1,
\end{equation}
where the sandwiching ensures the resulting operator is Hermitian and positive semidefinite, ensuring that we sample genuine random states after normalization. 

The key property we use of these ensembles is the following explicit expression for their $n$-fold tensor moments:
\begin{equation}
    \Ex \bX_e^{\otimes n} \propto \prod_{t = 1}^n f_e(\wt{J}_t), \quad \textnormal{where } \quad f_e(z) = \prod_{i = 1}^K \frac{1+e_i z}{1+z},
\end{equation}
and similarly for the parameters $b$. The log-ratio of these functions is remarkably well-behaved, with the following expression:

\begin{equation}
    h(z) \coloneqq \log \frac{f_b(z)}{f_a(z)} = \sum_{j = 1}^\infty \frac{(-1)^{j+1}}{j} \parens*{\sum_{i = 1}^K b_i^j - \sum_{i = 1}^Ka_i^j} z^j.
\end{equation}
Thus, to remove low-degree moments from the log-likelihood ratio, we need to find two vectors of $K$ integers with matching moments. This is the well-known Prouhet--Tarry--Escott problem \cite{wright1959prouhet,borwein2002prouhet}, and it is known that one can match $k-1$ moments for $K = 2^{k-1}$ (see \Cref{prop:prouhet}).

We are thus able to find choices of $a,b$ for which we can write
\begin{equation}
\label{eq:h-approximation}
    h(z) = \beta_k z^k + \beta_{k+1} z^{k+1} + O(z^{k+2}),
\end{equation}
for sufficiently small $z$. Consequently, the log-likelihood ratio is expressible in the desirable form
\begin{equation}
\label{eq:log-likelihood-good-1}
    \log\ol{\rho}_b^{(n)} - \log\ol{\rho}_a^{(n)} = \lambda_0 \mathbbm{1} + \beta_k \sum_{t = 1}^n \wt{J}_t^k + \beta_{k+1} \sum_{t = 1}^n \wt{J}_t^{k+1} + R, 
\end{equation}
with
\begin{equation}
\label{eq:log-likelihood-good-2}
    |R|\preceq C \sum_{t = 1}^{n} \wt{J}_t^{k+2};
\end{equation}
at least, the above holds on the space corresponding to small eigenvalues of the $\wt{J}_t$s. As \Cref{eq:h-approximation} only holds for small $z$, we cannot apply it to arbitrary eigenvalues of $\wt{J}_t$.

Consequently, even though $\log\ol{\rho}_b^{(n)}$ and $\log\ol{\rho}_a^{(n)}$ commute and their divergences can be computed as classical divergences and in terms of the eigenvalues of Jucys--Murphy elements, one cannot directly resort to a uniform bound on the log-likelihood. 
To handle this obstacle, one can carve out the contributions to such divergences from the low-probability space where eigenvalues of $\wt{J}_t$ are large, and apply the log-likelihood ratio bound only on the remaining space where these eigenvalues are bounded. Then, the contribution from the well-conditioned eigenvalues can safely be bounded using the form in \Cref{eq:log-likelihood-good-1,eq:log-likelihood-good-2} and from bounds on high-order moments of Jucys--Murphy elements which we prove in \Cref{sec:indistinguishability}. Among the various $f$-divergences, we found the triangular discrimination (equivalent to the Hellinger squared and the Jensen-Shannon divergences up to constant factors) to give the cleanest path to an upper bound along these lines. These arguments finally lead to the $\Omega(d^{2 - \frac{4}{k+4}})$ lower bounds for distinguishing between these mixtures.

Lastly, we outline the only remaining step of our proofs, which is to show that these ensembles typically have separated spectra, entropies, and ranks. The simplest and elementary part of the argument is that if the (normalized) power sums $\mathrm{pow}_j(\alpha), \mathrm{pow}_j(\beta),$ of two distributions $\alpha,\beta$ match up to degree $L-1$, and $\alpha_i,\beta_i\leq B$, then the sorted total variation can be bounded by the $L$th moment mismatch

\begin{equation}
    \label{eq:spectral-separation-via-power-sums}
    \dtvsorted{\alpha}{\beta} \geq \frac{1}{2LB^{L-1}} \cdot \abs*{\mathrm{pow}_L(\alpha)-\mathrm{pow}_L(\beta)}.
\end{equation}

However, for random ensembles, the power sums are also random variables. In fact, since the ranks of the projectors in the construction are fixed, they are functions of independent Haar-random unitaries. 
 A fundamental result on the concentration of Lipschitz functions of independent Haar-random unitaries can be used to say that moment-matching holds with high probability. Specifically, we use~\cite[Theorem~5.17]{Meckes_2019}:
\begin{equation}
\Pr(|F(\bU_1,\cdots, \bU_K)-\E F(\bU_1,\cdots, \bU_K)|\geq u)\leq 2e^{-\frac{(d-2)u^2}{24\Lambda^2}}\,,
\end{equation}
for $\Lambda$-Lipschitz function $F$. 

Then, the strategy to obtain concentration of the power sums crucially uses two key facts:

\begin{itemize}
\item The average normalized trace stays bounded: 
\begin{equation}
        \Ex \tr(\bX_e)= \nu_e, \qquad \nu_e=\Theta_k(1)\,, 
\end{equation}
which is a peculiar property of our engineered ensembles.
\item Probabilities of events under the tilted law can be bounded in terms of the untilted one:
\begin{equation}
\Pr(\wt{\bX}_e\in A)\leq \nu_e^{-n}\Pr({\bX}_e\in A)\,,
\end{equation}
so that exponential decay of $\Pr({\bX}_e\in A)$ can win over $\nu_e^{-n}$\,.
\end{itemize}
Once these two ingredients are in place, it is easy to see that the spectral power sums of $\brho_{e}$ can be controlled by the Lipschitz functions $F_j=\tr(\bX_e^j)$ in the high-probability region. There, they concentrate around the moment proxies
\begin{equation}
\Psi_{e,j}
\coloneqq
\frac{
\E\!\left[\tr(\bX_e^j)\right]
}{
\E[\tr(\bX_e)]^j
}\,,
\end{equation}
which approximately match up to degree $k$ by construction. With~\Cref{eq:spectral-separation-via-power-sums}, we now obtain the high-probability separation in spectra, yielding our spectrum estimation lower bound.

To prove a high-probability entropy separation, we cannot directly work with the instances above, as they may have some incredibly small eigenvalues which make it impossible to handle this quantity. To get around this issue, we obtain new hard instances by applying the depolarizing channel to the above ones. By the data processing inequality, these new mixtures remain indistinguishable. Having used the depolarizing channel to ensure the eigenvalues are well-conditioned, we can now express the entropy difference as a series of gaps of power sums, which can be lower bounded by looking at the first non-matched degree. This entropy separation and the statistical indistinguishability now yield our entropy estimation lower bound.

Lastly, to prove rank separation, we handle the two mixtures separately. Recall that our random matrices are products of random projectors of fixed ranks. For the mixture constructed with the lowest-rank projector, we argue that the resultant matrix's rank matches that of this projector almost surely; call this rank $r$.  For the other mixture, we show that there is a sufficiently large set of eigenvalues that are smaller than the largest $r$ eigenvalues, yet are likely to lie above a desirable threshold. The distance of the resulting state from any state of rank $r$ can then be lower bounded by the total spectral mass of this set, giving us our desired rank separation, and consequently our rank testing lower bound.

\subsection{Related Work}

\paragraph{Symmetric properties of probability distributions:} Our work is closely related to analogous classical problems in distribution testing. For the problem of estimating a sorted distribution, the complexity is known to be $\Theta(d / \epsilon^2 \log d)$ for sufficiently large $\epsilon$, and $\Theta(d/\epsilon^2)$ otherwise~\cite{valiant2011estimating,han2018local}. The complexity of estimating a distribution's Shannon entropy was shown to be $\Theta(\max\{\frac{d}{\epsilon \log d}, \frac{\log^2d}{\epsilon^2}\})$~\cite{valiant2011estimating,wu2016minimax,jiao2015minimax}, and that of estimating its support size is $\Theta(\frac{d}{\log d} \log^2(1/\eps))$. As alluded to previously, these lower bounds for symmetric property testing and estimation rely on \emph{moment matching}, which can be traced back to~\cite{raskhodnikova2009strong,valiant2008testing}.

\paragraph{Related results in quantum property testing:} For the problems considered in our work, the representation-theoretic analysis singles out weak Schur sampling as the information-theoretic optimal measurement. Its output is the so-called empirical Young diagram (EYD), which is an asymptotically correct estimator of the spectrum, once normalized. Despite the measurement being information-theoretically optimal, the EYD itself may need to be classically post-processed to obtain optimal rates. However, most prior results analyzed the performance of \emph{plug-in estimators} that directly evaluate target spectral properties on the EYD. In particular, \cite{o2021quantum} test a state's rank using the empirical Young diagram, and \cite{o2021quantum,acharya2020estimating} showed that $\Theta(d^2)$ copies are necessary and sufficient for the plug-in estimator to succeed at spectrum and entropy estimation. We improve on these lower bounds by proving nearly matching ones against \emph{all} possible algorithms. We also note that for spectrum estimation with constant precision, \cite{pelecanos2026keyl} recently developed a new algorithm that succeeds with $o(d^2)$ copies, outperforming the EYD estimator and beating the complexity of tomography. Previously, \cite{pelecanos2026beating} showed that unentangled measurements on $o(d^3)$ copies are sufficient for this, again beating the corresponding tomography rate.

\paragraph{Concurrent and independent work:} Shortly after we posted the first version of our paper on arXiv, two new papers \cite{lee2026sample,wang2026unified} were posted as well. \cite{lee2026sample} considers the tasks of Uhlmann fidelity and spectrum estimation; for the latter, they prove a slightly stronger $\Omega(d^2/ \mathrm{polylog}(d))$ copy complexity lower bound. Their proofs make use of a similar normalizing tilt, but they analyze divergences between Schur-Weyl distributions instead of directly bounding divergences between the state mixtures as we have done here. \cite{wang2026unified} proves even stronger lower bounds, in particular obtaining an $\Omega(\frac{d^2}{\epsilon^2 \log^2d })$ spectrum estimation lower bound for $\epsilon \leq O(1/\log d)$. More generally, \cite{wang2026unified} provides a``quantum moment-matching" master theorem, which can be applied to obtain lower bounds for estimating any unitarily invariant property. The first version of \cite{wang2026unified} also handled indistinguishability via a representation-theoretic analysis, but the latest version proves these lower bounds at the level of operators.

\subsection{Outlook}

 In the classical field of distribution testing, moment-matching and Poissonization have been crucial for proving mixture-versus-mixture lower bounds, yielding tight lower bounds for a wide array of problems, including sorted distribution estimation, entropy estimation, and support size estimation \cite{valiant2008testing,valiant2011estimating,han2018local,jiao2015minimax,wu2016minimax}. By drawing distributions from tilted densities as considered in this work, one can also recover classical moment-matching-based lower bounds, providing an alternative normalization technique that generalizes well to the quantum setting. Further, we have simplified the task of proving mixture-vs-mixture lower bounds by considering distributions of operators whose $n$-fold moments have explicit and convenient forms as functions of Jucys--Murphy elements. We hope these techniques will further inspire the development of new lower bounds for learning and testing quantum states.

 While most results in quantum learning and testing were initially obtained through the use of representation theory of the unitary and symmetric groups \cite{o2021quantum,odonnell2016efficient,haah2016sample,pelecanos2026debiased}, a slew of recent work has recovered these results, and in many cases proven new ones, without or with minimal use of this machinery \cite{buadescu2019quantum,odonnell2026instance,pelecanos2025mixed}. Our work adds to this growing body of literature, and we hope that the new techniques developed here will aid in the development of more lower bounds obtained by directly proving the indistinguishability of states via quantum divergences. 

 Specifically for spectrum estimation, we have considered lower bounds against the most general class of algorithms that can make fully entangled measurements. Practical considerations motivate the study of more restricted algorithms, such as those that can only perform unentangled measurements. We expect the use of such tilted distributions to aid with lower bounds for spectrum estimation even against algorithms that can only perform these weaker operations. We remark that the only known bounds for spectrum estimation in such settings are the unentangled $\Omega(d^{3/2})$-lower bounds implied by state certification \cite{chen2022tight}, and the $O(d^3 \cdot (\log \log d / \log d)^4)$ upper bounds of \cite{pelecanos2026beating}, leaving a large gap to be addressed.

 While we prove near-optimal lower bounds for spectrum estimation for constant $\epsilon$, getting the tight $\epsilon$-dependence also remains open. While the upper bound of \cite{pelecanos2026keyl} has a $1/\eps^4$-dependence, they conjectured that the correct dependence is $\Theta\parens*{\frac{d^2}{\epsilon^2\log^2 d}}$, at least for $\epsilon$ not too small. Close to the completion of this work, and with access to an earlier version of our manuscript, ChatGPT 5.6 produced a candidate proof of the conjectured $\Omega\parens*{\frac{d^2}{\epsilon^2\log^2 d}}$ lower bound. We have made this LLM-generated proof available \href{https://github.com/chirag-w/spectrum-estimation/blob/main/near-optimal-spectrum-estimation.pdf}{online}, and will disseminate a simplified proof in the near future. We note that an  $\Omega(\frac{d^{4/3}}{\epsilon^{2/3}})$ bound for $\epsilon\gtrsim d^{-1/4}$ can also be obtained by a slight modification of the warmup hard instance, taking a random projector of rank $r=\frac{td}{t+1}$ and Haar random states with auxiliary register of size $k=td$, and tuning $t=\Theta\parens*{\frac{1}{\epsilon^2}}$. The proof goes along the same separation and moment-matching ideas of the manuscript, and we do not report it.
 
\subsection{Organization}

We present preliminaries in \Cref{sec:prelim} and relevant results on random projections in \Cref{sec:projections}. We then present our warmup $\Omega(d^{4/3})$ lower bounds in \Cref{sec:warmup}. We present the full details of our hard mixtures in \Cref{sec:contruction}, along with the skeleton of our lower bound proofs. Statistical indistinguishability of these instances is shown in \Cref{sec:indistinguishability}, with the separations in spectra, entropies, and ranks shown in \Cref{sec:spectral-separation,sec:entropy-separation,sec:rank-separation} respectively.

\subsection*{Acknowledgments}

M.F.\ thanks T.C.\ Fraser and Harold Nieuwboer for valuable discussions. 
R.O.\ thanks Norah Tan for helpful discussions.
C.W.\ thanks Ewin Tang for sharing an earlier version of \cite{pelecanos2026keyl}.
Part of this work was carried out while the authors visited the African Institute for Mathematical Sciences, Cape Town, for the 1st AIMS Workshop on the Theory of Quantum Learning Algorithms (2025).

\subsection*{AI Use Disclosure}

We can identify two central ideas of this work: a)~exploiting that several classic random matrix ensembles have closed-form expressions for the tensor moments in terms of Jucys--Murphy elements; b)~using the tilted law to construct ensembles of states whose average tensor moments are proportional to those of the random matrix ensembles. These were found and developed by the authors after several other failed attempts. During this phase, Gemini 3.1 Pro and ChatGPT 5.5 were used as a computational aid, and we produced examples similar to the one in the current warm-up section. Subsequently, interactions with ChatGPT 5.5 and 5.6 together with substantial human input led to extending the method to moment-matching of arbitrary degree, and the current manuscript is the result of the work of the authors to give the most elementary version of the arguments. The authors take full responsibility for the content of the manuscript.

\section{Preliminaries}\label{sec:prelim}

A $d$-dimensional quantum state $\rho$ is described by a density matrix, i.e., a positive semidefinite operator with unit trace. We denote the space of $d$-dimensional states as $\mathcal{D}_d$. The spectrum of a quantum state is the list of its eigenvalues, and is denoted by $\spec(\rho)$; note that this forms a probability distribution by virtue of $\rho$ being a density operator. The central task considered in this work is estimating this spectrum; let us first state the metric with respect to which we will characterize this task.

\begin{definition}
\label{def:sorted-total-variation}
    Given two probability distributions $p,q$, their total variation distance is given by $\dtv{p}{q} \coloneqq \frac12 \|p - q\|_1$. We will also consider the sorted-TV distance,  $\dtvsorted{p}{q} \coloneqq \dtv{p^\downarrow}{q^\downarrow}$, where $p^\downarrow,q^\downarrow$ are distributions formed by sorting the entries of $p,q$ in non-increasing order.
\end{definition}

We will now describe important distances between quantum states. First, let us define useful norms of matrices.

\begin{definition}
    For $p \geq 1$, we write $\|A\|_p$ for the Schatten $p$-norm of matrix~$A$.
\end{definition}

Let us now recall some standard distances between states:
\begin{definition}
    Let $\rho, \sigma$ be density matrices.  We write $\Dtr{\rho}{\sigma} = \frac12\|\rho - \sigma\|_1$ for their \emph{trace distance}.
\end{definition}

We will now define normalized power sums of distributions and states.

\begin{definition}
    For $L \in \N$ and a probability distribution $\alpha$ on $[d]$, we write \begin{equation}
        \mathrm{pow}_L(\alpha) = d^{L-1}  \cdot \sum_{i=1}^d \alpha_i^L \geq 1
    \end{equation}
    for the \emph{normalized $L$th power sum} of~$\alpha$.
    If $\rho$ is a density matrix, we also write $\mathrm{pow}_L(\rho)$ for $\mathrm{pow}_L(\spec(\rho))$.
\end{definition}

The utility of power sums is that they are a good proxy for sorted-TV distance:
\begin{proposition} \label{prop:power-sum-to-TV}
    Suppose $\alpha, \beta$ are probability distributions on~$[d]$ satisfying $\alpha_i, \beta_i \leq B/d$ for all~$i$.
    Then
    \begin{equation}
        \dtvsorted{\alpha}{\beta} \geq \frac{1}{2LB^{L-1}} \cdot \abs*{\mathrm{pow}_L(\alpha)-\mathrm{pow}_L(\beta)}.
    \end{equation}
\end{proposition}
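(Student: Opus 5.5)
We reduce to a statement about sorted vectors and then use a mean-value bound on the power function. Sorting does not change power sums, so we may assume $\alpha = \alpha^\downarrow$ and $\beta = \beta^\downarrow$. Then $\dtvsorted{\alpha}{\beta} = \frac12\sum_i |\alpha_i - \beta_i|$, and it suffices to bound $|\mathrm{pow}_L(\alpha) - \mathrm{pow}_L(\beta)|$ from above by $2LB^{L-1}$ times this quantity.

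The main step is the elementary inequality
\begin{equation}
    |x^L - y^L| = |x-y|\cdot\Bigl|\sum_{j=0}^{L-1} x^j y^{L-1-j}\Bigr| \leq L\max(x,y)^{L-1}|x-y|,
\end{equation}
valid for $x,y\ge 0$. Apply it with $x=\alpha_i$, $y=\beta_i$, both at most $B/d$, and sum over $i$:
\begin{equation}
    \Bigl|\sum_i \alpha_i^L - \sum_i \beta_i^L\Bigr| \leq \sum_i |\alpha_i^L-\beta_i^L| \leq L (B/d)^{L-1}\sum_i|\alpha_i-\beta_i| = 2L(B/d)^{L-1}\dtvsorted{\alpha}{\beta}.
\end{equation}
Multiplying by the normalization $d^{L-1}$ cancels the factor $d^{-(L-1)}$ and yields $|\mathrm{pow}_L(\alpha)-\mathrm{pow}_L(\beta)| \le 2LB^{L-1}\dtvsorted{\alpha}{\beta}$. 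Rearranging gives the claim.

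The only subtle point is the reduction to sorted vectors. The bound holds for any coordinate pairing, so it is not actually needed for the inequality itself. It matters only because the definition of $\mathrm{d_{TV}^\downarrow}$ uses the sorted pairing, and power sums are permutation invariant, so nothing is lost by using that pairing. The hypothesis $\alpha_i,\beta_i\le B/d$ is preserved under sorting, so the argument is complete. I expect no real obstacle.
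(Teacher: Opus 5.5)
Your proposal is correct and follows the paper's proof: sort without loss of generality, bound $|\alpha_i^L-\beta_i^L|\le L(B/d)^{L-1}|\alpha_i-\beta_i|$, sum over $i$, and multiply by $d^{L-1}$. The only difference is cosmetic: you get the pointwise bound from the factorization of $x^L-y^L$, while the paper bounds the derivative of $x\mapsto x^L$ on $[0,B/d]$.
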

\begin{proof}
    Without loss of generality, $\alpha, \beta$ are already sorted.  Then
    \begin{equation}
        \abs*{\mathrm{pow}_L(\alpha)-\mathrm{pow}_L(\beta)}
        \leq d^{L-1} \sum_{i=1}^d |\alpha_i^L - \beta_i^L|
        \leq d^{L-1}\sum_{i=1}^d L(B/d)^{L-1} |\alpha_i - \beta_i|
        = 2LB^{L-1}\cdot \dtvsorted{\alpha}{\beta},
    \end{equation}
    where the second inequality used that
    $x\mapsto x^L$ has derivative at most
    $L(B/d)^{L-1}$ on $[0,B/d]$.  
\end{proof}

Lastly, let us mention some notational conventions that will be followed throughout this paper. We use $\Tr$ to denote the trace of an operator. When the dimension $d$ is clear from context, we will use $\tr = \frac1d \Tr$ to denote the \emph{normalized} trace. We will use asymptotic notation $O,\Omega,o,\omega,\Theta$ with their
standard interpretation. Moreover, we will include variables in the subscript to suppress multiplicative factors depending only on such variables. For instance, $O_m(f(n,m)) = O(g(m) f(n,m))$ for some finite function $g$. We may also extend this notation naturally to multiple variables in the subscript or in the arguments.

\subsection{Permutations}
\label{sec:permutations}
We will denote by $S_n$ the symmetric group, i.e., the group of permutations over $n$ elements. Any permutation $\pi \in S_n$ can be decomposed into a set of cycles. For instance, the cycle notation $(1~2) (3)$ represents the permutation in $S_3$ that swaps the first two elements and leaves the third unchanged. 

\begin{definition}[Permutation notation]
    For any permutation $\pi \in S_n$, we define $\cyc(\pi)$ to be the set of its cycles and $\numcyc(\pi)$ to be their number. For a cycle $\tau \in \cyc(\pi)$, we use $\mathrm{len}(\tau)$ to denote its length. We define the cycle type of $\pi$ to be the sorted list of its cycle lengths. The Cayley length of a permutation is defined to be the minimum number of transpositions (2-cycles) necessary to implement $\pi$, and is denoted by $|\pi| \coloneqq n - \numcyc(\pi)$. The support of $\pi$ is defined as the set of elements it acts non-trivially on, i.e., $\supp(\pi) \coloneqq \{i \in [n]: \pi(i) \neq i\}$. 
\end{definition}

 We will often restrict our attention to the support of a permutation and omit the $1$-cycles; we define the non-fixed or non-trivial cycle type of a permutation as the cycle type with all $1$-cycles removed. E.g., a permutation with cycle type $(3,2,1,1)$ has non-fixed type $(3,2)$. Note that this omission of $1$-cycles does not affect the Cayley length; for instance, both cycle types $(3,2,1,1)$ and $(3,2)$ correspond to Cayley length $3$.

We will be concerned with the $(\mathbb{C}^{d})^{\otimes n}$-representations of $S_n$, and by a slight abuse of notation we will use the same notation for permutations and their representations as unitary operators on $(\mathbb{C}^{d})^{\otimes n}$. In particular, a permutation's action is given by
\begin{equation}
    \pi \ket{j_1, \dots, j_n} = \ket{j_{\pi^{-1}(1)}, \dots, j_{\pi^{-1}(n)}}, \quad \textnormal{for } j_1, \dots, j_n \in [d].
\end{equation}
We will also be interested in inner products of these permutation operators with other operators. Specifically, we will make use of the following standard fact:
\begin{fact}
\label{fact:perm-product}
    For operators $X_1, \dots, X_n \in \mathbb{C}^{d \times d}$ and a permutation $\pi \in S_n$, we have
    \begin{equation}
        \Tr(\pi \cdot X_1 \otimes \dots \otimes X_n) = \prod_{\tau \in \cyc(\pi)} \Tr\parens*{\prod_{i \in \tau} X_i}.
    \end{equation}
    In particular, for $X \in \mathbb{C}$, 
    \begin{equation} \label{eqn:tr}
        \Tr(\pi X^{\otimes n}) = \prod_{\tau \in \cyc(\pi)} \Tr(X^{\mathrm{len}(\tau)}).
    \end{equation}
\end{fact}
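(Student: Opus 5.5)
We prove \Cref{fact:perm-product} by expanding both sides in the computational basis and matching terms. Take $X_1,\dots,X_n$ as matrices with entries $(X_i)_{ab} = \bra{a} X_i \ket{b}$. Using the stated action $\pi\ket{j_1,\dots,j_n} = \ket{j_{\pi^{-1}(1)},\dots,j_{\pi^{-1}(n)}}$, we get
\begin{equation}
\Tr(\pi \cdot X_1\otimes\cdots\otimes X_n) = \sum_{j_1,\dots,j_n\in[d]} \bra{j_1,\dots,j_n}\pi\, X_1\otimes\cdots\otimes X_n\ket{j_1,\dots,j_n}.
\end{equation}
Write $\bra{j}\pi = \bra{j_{\pi(1)},\dots,j_{\pi(n)}}$; this is the adjoint of the action of $\pi^{-1}$, and $\pi$ is unitary. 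The summand then becomes
\begin{equation}
\prod_{i=1}^n (X_i)_{j_{\pi(i)}\, j_i}.
\end{equation}

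Next we group the factors by cycles of $\pi$. The index map only links $j_i$ to $j_{\pi(i)}$, so the sum factorizes over the cycles of $\pi$.

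Fix a cycle $\tau = (i_1\ i_2\ \cdots\ i_m)$ with $\pi(i_s)=i_{s+1}$, indices taken mod $m$. Its factors are $(X_{i_1})_{j_{i_2} j_{i_1}} (X_{i_2})_{j_{i_3} j_{i_2}}\cdots (X_{i_m})_{j_{i_1} j_{i_m}}$. Summing over the $j_{i_s}$ gives the closed matrix-product chain $\Tr(X_{i_m}\cdots X_{i_2}X_{i_1})$.

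This agrees with the stated $\Tr(\prod_{i\in\tau}X_i)$ once the product over a cycle is read in the cyclic order that the convention induces. The opposite orientation arises if one uses $\pi^{-1}$, and the trace is only invariant under cyclic rotation, not reversal. So the one real point of care is fixing the orientation of the product within each cycle consistently with the chosen action.

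For the special case $X_i = X$ for all $i$, orientation no longer matters. Each cycle of length $\mathrm{len}(\tau)$ contributes $\Tr(X^{\mathrm{len}(\tau)})$, which yields \Cref{eqn:tr}. (The statement's ``$X\in\mathbb{C}$'' should read $X\in\mathbb{C}^{d\times d}$.)

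No step is a genuine obstacle; the only delicate point is the index bookkeeping for the cycle orientation described above.
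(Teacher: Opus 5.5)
Your proof is correct. The paper states this as a standard fact without giving a proof, and your computational-basis expansion is exactly the standard verification: the summand is $\prod_i (X_i)_{j_{\pi(i)}\,j_i}$, it factorizes over cycles, and a cycle $i_1\to i_2\to\cdots\to i_m$ contributes $\Tr(X_{i_m}\cdots X_{i_1})$. You also correctly flag the two blemishes in the statement, namely that the product within a cycle needs a specified orientation and that ``$X\in\mathbb{C}$'' should read $X\in\mathbb{C}^{d\times d}$. The orientation issue is immaterial for the paper, which only ever uses the equal-$X$ case \Cref{eqn:tr}.
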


Lastly, we define the Jucys--Murphy elements of the symmetric group algebra, which will play a key role in our analysis. For $1 \leq t \leq n$, the $t$th Jucys--Murphy element is given by
\begin{equation}
    J_t = \sum_{1 \leq i < t} (i\ t), \quad \text{and we will find it convenient to write } \wt{J}_t = \tfrac1d J_t.
\end{equation}
Here $(i\ t)$ is often denoted $\swap_{i,t}$ in the quantum literature. We note here that all Jucys--Murphy elements are Hermitian and $d + J_t$ is invertible, which is part of~\Cref{prop:the-key}.

It is a standard fact that all Jucys--Murphy elements commute.
Moreover, we have the following generating function identity, the \emph{Jucys Identity}~\cite{JUCYS1974107}:
\begin{equation}    \label{eqn:jucys}
    (z + J_1)(z + J_2)\cdots(z+J_n) = \sum_{\pi \in S_n} z^{\numcyc(\pi)}\pi.
\end{equation}
This generalizes the more elementary identity
\begin{equation}
    z(z+1)\cdots (z+n-1) = \sum_{\pi \in S_n} z^{\numcyc(\pi)},
\end{equation}
and both can be given a simple proof by induction on~$n$.

We will also define power sums of the Jucys--Murphy elements; the $k$th Jucys--Murphy power sum is defined as
\begin{equation}
    \calP_k \coloneqq\sum_{i \in [n]} \wt{J}_i^k.
\end{equation}

\section{Random projections}
\label{sec:projections}
Consider a Haar-random rank-$r$ projection $\bPi$ on $\C^d$.  
This is the quantum analogue of an indicator random variable $1_{\bS}$ for a uniformly random subset $\bS \subseteq [d]$ with $|\bS| = r$.
Indeed, if we identify random variables on $[d]$ with diagonal matrices, and write $R = \smash[b]{\diag(\underbrace{1, \dots, 1}_{r \text{ times}}, 0, \dots 0)}$, then
\begin{equation}    \label{eqn:sim}
    1_{\bS} \equiv \bP R \bP^\dagger, \qquad 
    \bPi \equiv \bU R \bU^\dagger,
\end{equation}
where $\bP \sim S_d$ is a uniformly random permutation matrix and $\bU \sim U(d)$ is a Haar-random unitary.
Observe the following trivial computation:
\begin{fact}
    Let $j_1, \dots, j_n \in [d]$, and write $m$ for the number of distinct elements among them.  
    Then for $\bS \subseteq [d]$ uniformly random with $|\bS| = r$, the probability all of the $j_t$'s are in~$\bS$ is
    \begin{equation}
        \E[1_{\bS}^{\otimes n}(j_1, \dots, j_n)] = \frac{r(r-1)\cdots (r-m+1)}{d(d-1) \cdots (d-m+1)}.
    \end{equation}
\end{fact}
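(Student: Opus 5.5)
The plan is to prove this by a direct counting argument over uniformly random $r$-subsets. The quantity $\E[1_{\bS}^{\otimes n}(j_1,\dots,j_n)]$ is the expectation of the product $\prod_{t=1}^n 1_{\bS}(j_t)$. This product equals $1$ exactly when every $j_t$ lies in $\bS$, and equals $0$ otherwise. So the expectation is the probability that $\bS$ contains every $j_t$.

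First, I reduce to the set of distinct values. Let $T=\{j_1,\dots,j_n\}\subseteq[d]$, so $|T|=m$. Since indicators are idempotent ($1_{\bS}(j)^2=1_{\bS}(j)$), repeated indices contribute nothing extra. Hence the event in question is simply $T\subseteq\bS$.

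Next, I count the favorable subsets. The subset $\bS$ is uniform over the $\binom{d}{r}$ subsets of size $r$. Those containing $T$ correspond bijectively to $(r-m)$-subsets of $[d]\setminus T$, so there are $\binom{d-m}{r-m}$ of them. This count is $0$ when $m>r$, which matches the right-hand side, since the factor $(r-r)$ appears in the numerator.

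Finally, I simplify the ratio:
\begin{equation}
\frac{\binom{d-m}{r-m}}{\binom{d}{r}}=\frac{(d-m)!\,r!\,(d-r)!}{(r-m)!\,(d-r)!\,d!}=\frac{r(r-1)\cdots(r-m+1)}{d(d-1)\cdots(d-m+1)}.
\end{equation}

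As a cross-check, the same formula follows sequentially. Order the distinct elements of $T$ as $i_1,\dots,i_m$. Conditioned on $i_1,\dots,i_{s-1}\in\bS$, the remaining $r-s+1$ slots of $\bS$ are a uniform subset of the remaining $d-s+1$ elements. So $i_s\in\bS$ with probability $(r-s+1)/(d-s+1)$, and multiplying these conditional probabilities gives the stated product.

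There is no real obstacle here. The only point needing care is the reduction from the $n$ indices to the $m$ distinct ones. That step is exactly where idempotence of the indicator is used.
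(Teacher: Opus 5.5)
Your proof is correct. You reduce to the event $T\subseteq\bS$ via idempotence of the indicator, count the $\binom{d-m}{r-m}$ favorable subsets (handling $m>r$ correctly), and simplify the ratio, which is exactly the ``trivial computation'' the paper states without writing out a proof.
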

A crucial formula for us is the below noncommutative analogue of this fact. We could not find it explicitly in the literature, but we would judge it to be standard for experts.
\begin{proposition} \label{prop:the-key}
    For a Haar-random rank-$r$ projection $\bPi$ on $\C^d$,
    \begin{equation} \label{eqn:the-key}
        \E[\bPi^{\otimes n}] = \frac{(r+J_1)(r+J_2)\cdots(r+J_n)}{(d+J_1)(d +J_2)\cdots(d+J_n)}.
    \end{equation}
    Here, the ratio of operators makes sense since the Jucys--Murphy elements commute and each denominator factor $(d+J_t)$ is invertible.
    
\end{proposition}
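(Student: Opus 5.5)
The plan is to use the Weingarten formula for the unitary twirl and then express every operator appearing in it as a central element of $\C[S_n]$ built from Jucys--Murphy elements. Throughout, $n \le d$ is not assumed, so everything is read as operators on $(\C^d)^{\otimes n}$.

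\textbf{Step 1: Twirl identity.} Write $\bPi = \bU R \bU^\dagger$. The operator $\E[\bPi^{\otimes n}]$ commutes with every $U^{\otimes n}$, so it lies in the span of the permutation operators. Since $\bPi$ commutes with permutations, it is also central in the image of $\C[S_n]$.

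The standard twirl identity is
\begin{equation}
\E[\bU^{\otimes n} A \bU^{\dagger\otimes n}] = \sum_{\pi,\sigma\in S_n} \mathrm{Wg}(\pi^{-1}\sigma,d)\,\Tr(\pi^{-1}A)\,\sigma.
\end{equation}
Apply it with $A = R^{\otimes n}$. By \Cref{fact:perm-product}, $\Tr(\pi^{-1}R^{\otimes n}) = r^{\numcyc(\pi)}$. This gives
\begin{equation}
\E[\bPi^{\otimes n}] = \Big(\sum_\pi r^{\numcyc(\pi)}\pi\Big)\cdot \mathrm{Wg}_d ,
\end{equation}
where $\mathrm{Wg}_d = \sum_\sigma \mathrm{Wg}(\sigma,d)\sigma$, using that $r^{\numcyc(\pi)}$ is a class function.

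\textbf{Step 2: Identify the Weingarten element.} The element $\mathrm{Wg}_d$ is the (pseudo-)inverse of $\sum_\pi d^{\numcyc(\pi)}\pi$, the Gram element. By the Jucys identity \eqref{eqn:jucys}, $\sum_\pi z^{\numcyc(\pi)}\pi = \prod_t (z+J_t)$. Taking $z = r$ and $z = d$ yields the stated ratio.

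\textbf{Step 3: Invertibility and inverse versus pseudo-inverse.} This is the main obstacle. The eigenvalues of $J_t$ are contents $c$ of boxes in Young diagrams $\lambda\vdash n$. On $(\C^d)^{\otimes n}$, only $\lambda$ with at most $d$ rows appear, by Schur--Weyl duality. For those, every content satisfies $c \ge -(d-1)$, so $d + c \ge 1$ and $d+J_t$ is invertible as an operator on $(\C^d)^{\otimes n}$. The zero eigenspaces of $\prod(d+J_t)$ in the abstract group algebra lie exactly in isotypic components that act as zero on the tensor space. Hence the pseudo-inverse used in the Weingarten calculus agrees with the honest operator inverse there.

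To avoid invoking Weingarten theory at all, I would give a self-contained check instead. Let $G = \prod_t (d+J_t)$ and $N = \prod_t (r+J_t)$, both as operators. It suffices to verify
\begin{equation}
\E[\bPi^{\otimes n}]\, G = N
\end{equation}
on $(\C^d)^{\otimes n}$. Both sides lie in the commutant span of permutations. Since $G = \sum_\sigma d^{\numcyc(\sigma)}\sigma$, it is enough to check the traces
\begin{equation}
\Tr\big(\tau^{-1}\,\E[\bPi^{\otimes n}]\,G\big) = \Tr(\tau^{-1} N)
\end{equation}
against every permutation $\tau$. Pairing with permutations determines elements of the span because the trace form is nondegenerate on it.

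\textbf{Step 4: Verify the trace identity.} Expand the left side as
\begin{equation}
\E \sum_\sigma d^{\numcyc(\sigma)}\,\Tr(\tau^{-1}\sigma\,\bPi^{\otimes n}).
\end{equation}
Then $\Tr(\tau^{-1}\sigma\,\bPi^{\otimes n}) = r^{\numcyc(\tau^{-1}\sigma)}$ deterministically, since $\bPi^k = \bPi$ and $\Tr\bPi = r$. The right side is
\begin{equation}
\sum_\pi r^{\numcyc(\pi)}\, d^{\numcyc(\tau^{-1}\pi)},
\end{equation}
which matches after reindexing. This makes the proof essentially elementary, with the only subtle point being the invertibility argument via contents.
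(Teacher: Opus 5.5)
Your proposal is correct and takes the paper's route: apply the Collins--\'Sniady/Weingarten twirl formula to $R^{\otimes n}$, then use the Jucys identity to turn $\sum_\pi r^{\#\mathrm{cyc}(\pi)}\pi$ and the Gram element into $\prod_t (r+J_t)$ and $\prod_t (d+J_t)$. Your Step 4 trace-pairing check is a self-contained re-derivation of the Collins--\'Sniady identity the paper cites; it is valid because the Hilbert--Schmidt form is positive definite on the span of permutations. Your content argument gives directly the invertibility of each $d+J_t$, which the paper instead reads off from the invertibility of $\prod_t (d+J_t)$ supplied by that citation.
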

\begin{proof}
    We use the notation $\bU$ and $R$ from \Cref{eqn:sim}.
    Collins and \'{S}niady~\cite[Prop.~2.3, (2)+(3)]{CS06} gave the unitary twirl formula
    \begin{equation}    \label{eqn:plugme}
        \E[\bU^{\otimes n} \cdot A \cdot  \bU^{\dagger \otimes n}] = \Phi(A) \Phi(\Id)^{-1}, \quad  \text{where }\Phi(A) \coloneqq \sum_{\pi \in S_n} \Tr(\pi^{-1} A) \pi.
    \end{equation}  
    Observe, using \Cref{eqn:tr} and $R^\ell = R$, that 
    \begin{equation}
        \Phi(R^{\otimes n}) = \sum_{\pi \in S_n} r^{\numcyc(\pi)} \pi = (r + J_1)(r+J_2) \cdots (r+J_n),
    \end{equation}
    where the last identity is Jucys's \Cref{eqn:jucys}. Replacing $r$ with $d$ in the above formula gives the value for~$\Phi(\Id)$. Plugging both of these into \Cref{eqn:plugme} gives the required formula for $\E[\bPi^{\otimes n}]$. In particular, each factor $(d+J_i)$ is invertible. 
\end{proof}
\begin{remark}
    We only needed \Cref{eqn:plugme} for $A$ of the form $X^{\otimes n}$, and in this case we can sketch an alternative proof.  Let $\bZ \in \C^{d \times d}$ be a standard complex Gaussian random matrix, with polar decomposition $\bZ = \bS^{1/2} \bU$, where $\bS = \bZ \bZ^\dagger$ and $\bU$ is an independent Haar-random unitary. Defining
    \begin{equation}
        \calE(X^{\otimes n}) \coloneqq \E[\bU^{\otimes n} \cdot X^{\otimes n} \cdot  \bU^{\dagger \otimes n}],
    \end{equation}
    we have
    \begin{equation} \label{eqn:combi1}
        \E[(\bZ X \bZ^\dagger)^{\otimes n}] = \E[(\bS^{1/2})^{\otimes n} \cdot \calE(X^{\otimes n}) \cdot (\bS^{1/2})^{\otimes n}] = \calE(X^{\otimes n})  \cdot \E[\bS^{\otimes n}],
    \end{equation}
    where the last equation used that the twirl $\calE(X^{\otimes n})$ commutes with any operator $Y^{\otimes n}$.  
    On the other hand, by Isserlis's/Wick's Theorem, it is not too hard to derive
    \begin{equation} \label{eqn:combi2}
        \E[(\bZ X \bZ^\dagger)^{\otimes n}] = \sum_{\pi \in S_n} \prod_{\tau \in \cyc(\pi)} \Tr(X^{\mathrm{len}(\tau)}) \pi = \Phi(X^{\otimes n}).
    \end{equation}
    Putting $X = \Id$ into this yields $\E[\bS^{\otimes n}] = \Phi(\Id)$, and the proof is complete by combination with \Cref{eqn:combi1,eqn:combi2}.
\end{remark}

\begin{definition} \label{def:a}
    Let $a = (a_1, \dots, a_K)$ be a sequence of positive integers and let $d$ be a multiple of each~$a_i$.  We write $\bX \sim \PRP_d(a)$ to denote that $\bX$ is drawn from the ``product of random projections'' distribution given by
    \begin{equation}
        \bX = \bW^\dagger \bW, \quad \bW \coloneqq \bPi_K \bPi_{K-1} \cdots \bPi_2 \bPi_1,
    \end{equation}
    where $\bPi_i$ is a Haar-random projection of rank $d/a_i$ on~$\C^{d}$, drawn independently for $i = 1 \dots K$\footnote{This assumes that each $a_i$ divides $d$. In general, one can embed these PRPs into a space with dimension $d^\prime \leq d$ for the largest possible $d^\prime$ divisible by each $a_i$; as the $a_i$s are dimension-independent constants, this would only affect our lower bounds by constant factors.}.
    It is easy to verify that
    \begin{equation}\label{eq:average1}
        \E[\bX] = \nu_a \cdot \Id, \qquad \nu_a \coloneqq 1/(a_1 a_2 \cdots a_K).
    \end{equation}
    (This is the $n = 1$ case of \Cref{prop:the-key}.)
\end{definition}

More generally, we can explicitly write the $n$-fold tensor moment.

\begin{proposition} \label{prop:f}
    With the notation of \Cref{def:a},
    for $\bX \sim \PRP_d(a_1, \dots, a_K)$, 
    \begin{equation}
        \E[\bX^{\otimes n}] = \nu_a^n \cdot \prod_{t=1}^n f_a(\wt{J}_t),
    \quad
    \text{where}
    \quad
        f_a(z) = \prod_{i=1}^K \frac{1+a_i z}{1+z}.
    \end{equation}
\end{proposition}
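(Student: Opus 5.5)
The plan is to prove the formula by induction on $K$. The key observation is that the sandwich structure lets us peel off one projector at a time, and that every intermediate tensor moment is a function of the Jucys--Murphy elements, hence a linear combination of permutation operators.

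First, since $\bPi_K^2 = \bPi_K$, we have
\begin{equation}
\bX = \bW^\dagger \bW = \bPi_1 \bPi_2 \cdots \bPi_{K-1}\bPi_K \bPi_{K-1} \cdots \bPi_1 .
\end{equation}
Setting $\bY \coloneqq \bPi_2\cdots\bPi_K\cdots\bPi_2$, we get $\bX = \bPi_1 \bY \bPi_1$. Here $\bY \sim \PRP_d(a_2,\dots,a_K)$ (after relabeling), and $\bY$ is independent of $\bPi_1$. (For $K=1$ we take $\bY=\Id$, whose tensor moment is trivially $\Id = \nu_{()}^n \prod_t f_{()}(\wt J_t)$ with the empty product, so the base case is contained in the inductive step.)

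For the inductive step, condition on $\bPi_1$ and use independence:
\begin{equation}
\E[\bX^{\otimes n}] = \E\big[\bPi_1^{\otimes n}\, \E[\bY^{\otimes n}]\, \bPi_1^{\otimes n}\big].
\end{equation}
By the induction hypothesis, $C \coloneqq \E[\bY^{\otimes n}] = \nu_{a'}^n \prod_t f_{a'}(\wt J_t)$ with $a'=(a_2,\dots,a_K)$. This is a rational function of the commuting Jucys--Murphy elements, so it is an element of the symmetric group algebra, i.e.\ a linear combination $\sum_\pi c_\pi \pi$ of permutation operators. Every permutation operator commutes with $P^{\otimes n}$ for any matrix $P$. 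Hence, using $\bPi_1^2=\bPi_1$,
\begin{equation}
\bPi_1^{\otimes n}\, C\, \bPi_1^{\otimes n} = C\, \bPi_1^{\otimes n}.
\end{equation}
Taking expectation and applying \Cref{prop:the-key} with $r = d/a_1$ gives
\begin{equation}
\E[\bX^{\otimes n}] = C \prod_{t=1}^n \frac{d/a_1 + J_t}{d+J_t} = C\prod_{t=1}^n \frac{1}{a_1}\cdot\frac{1+a_1\wt J_t}{1+\wt J_t}.
\end{equation}
Multiplying this factor into $C$, and noting that all factors commute, yields $\nu_a^n\prod_t f_a(\wt J_t)$ with $\nu_a = \nu_{a'}/a_1$. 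This completes the induction.

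The only point needing care is the claim that $C$ lies in the span of permutation operators. This requires that the rational expression in the $J_t$ is a genuine element of the group algebra, which holds because each $(d+J_t)$ is invertible there, as noted in \Cref{prop:the-key}; the inverse of an invertible element of a finite-dimensional algebra lies in that algebra. Given that, the commutation with $\bPi_1^{\otimes n}$ is immediate, and I expect no real obstacle.
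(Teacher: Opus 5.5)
Your proof is correct and is essentially the paper's argument. Both induct on $K$ using \Cref{prop:the-key} for a single projector together with the fact that functions of the Jucys--Murphy elements commute with $P^{\otimes n}$. The only difference is the order of peeling: the paper removes the innermost projector $\bPi_K$, averaging it first and commuting it out of the sandwich $\wh{\bW}^{\dagger\otimes n}(\cdot)\,\wh{\bW}^{\otimes n}$, whereas you remove the outermost $\bPi_1$ and use $\bPi_1^2=\bPi_1$, which is a mirror image of the same step.

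One small imprecision: $d+J_t$ is invertible as an operator on $(\C^d)^{\otimes n}$, i.e.\ in the span of permutation operators there. It is not invertible in the abstract group algebra $\C[S_n]$ once $n>d$, which is the regime that matters here. So either apply your finite-dimensional-algebra argument to that operator algebra, or simply note that $P^{\otimes n}$ commutes with each $d+J_t$ and hence with its inverse.
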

\begin{proof}
    We have
    \begin{equation}
        \E[\bX^{\otimes n}] = \E[\wh{\bW}^{\dagger \otimes n} \cdot \E[\bPi_K^{\otimes n}] \cdot \wh{\bW}^{\otimes n}], \quad \text{where }\wh{\bW} \coloneqq \bPi_{K-1} \cdots \bPi_1.
    \end{equation}
    By \Cref{prop:the-key}, $\displaystyle \E[\bPi_K^{\otimes n}] = \prod_{t=1}^n \frac{d/a_K + J_t}{d + J_t}$.  Moreover, since $\bPi_K$ is unitarily invariant, this commutes with $\wh{\bW}^{\otimes n}$ and $\wh{\bW}^{\dagger \otimes n}$.  Thus we can pull out this factor and continue by induction, concluding
    \begin{equation}
        \E[\bX^{\otimes n}] = \prod_{i=1}^K \prod_{t=1}^n \frac{d/a_i + J_t}{d + J_t}.
    \end{equation}
    Rearranging this completes the proof.
\end{proof}

\section{Warmup: rank-$d/2$ projectors vs Haar--random marginals}
\label{sec:warmup}

We first show a quite simple proof of a $\Omega(d^{4/3})$ lower bound for both spectrum and entropy estimation and for rank testing. We take $d$ to be even without loss of generality. The argument is based on the hardness of distinguishing between $n$ copies of a state sampled from either:

\begin{itemize}
\item[a)] Draw a Haar-random rank-$d/2$ projection
$\bPi$ on $\C^d$ and set
\begin{equation}
\brho_{\mathrm{proj}}
\coloneqq
\frac{2}{d}\bPi.
\end{equation}

\item[b)] 
Draw a Haar random unit vector $\ket{\bpsi}$ from $\C^d\otimes\C^d$ and set
\begin{equation}
\brho_{\mathrm{Haar}}
\coloneqq
\Tr_{2}\!\left[
\ketbra{\bpsi}{\bpsi}
\right],
\end{equation}
where $\Tr_2$ denotes the partial trace over the second tensor factor. 
\end{itemize}

For the first ensemble, the spectrum and entropy are fixed. The spectral and entropic properties of the second ensemble varying the dimensions of the subsystems have been studied in detail, first to model black hole evaporation~\cite{page1993average}. In~\cite{nechita2007asymptotics} several key properties were proved, including (Thm.~5) that the empirical distribution of rescaled eigenvalues of a sequence $\rho_d$ 
\begin{equation}
L_{d}(x)\coloneqq\frac{1}{d}\sum_{i=1}^{d}\delta_{{d\lambda_i}(\brho_{\mathrm{Haar}})}(x)
\end{equation}
converges almost surely in the topology of weak convergence to the Marchenko--Pastur density $\mu_{MP}$ as $d\rightarrow \infty$, where
\begin{equation}
\mu_{\mathrm{MP}}(dx)=\frac{1}{2\pi}\sqrt{\frac{4-x}{x}}\mathbb{I}_{[0,4]}(x)dx,
\end{equation}
where $\mathbb{I}_{[0,4]}$ is the indicator function on $[0,4]$.
The exact formulas for the mean and variance of the entropy of a
Haar-induced state imply, in the balanced case,
\begin{equation}
\E S(\brho_{\mathrm{Haar}})
=
\log d-\frac12+O(d^{-2}),
\qquad
\Var\parens*{S(\brho_{\mathrm{Haar}})}
=
\frac{1}{4d^2}+O(d^{-4}).
\label{eq:warmup-entropy-asymptotics}
\end{equation}
The mean formula was conjectured in~\cite{page1993average} and proved, for example,
by Sen~\cite{sen1996average}. The exact variance
formula was conjectured by Vivo, Pato, and
Oshanin~\cite{vivo2016random} and proved by
Wei~\cite{wei2017proof}; see also~\cite{bianchi2019typical}
for a later independent derivation and generalization.

We can then prove the following lemma.

\begin{lemma}
\label{lem:warmup-separation}
Let
\begin{equation}
{\balpha}_{d/2}
=
\Bigl(
\underbrace{\frac{2}{d},\ldots,\frac{2}{d}}_{d/2},
\underbrace{0,\ldots,0}_{d/2}
\Bigr).
\end{equation}
There exists a universal constant $\eps_0>0$ such that, for
$\brho_{\mathrm{Haar}}$, with probability tending
to one as $d\to\infty$ through even integers,
\begin{align}
\dtvsorted{
\spec(\brho_{\mathrm{Haar}})
}{
{\balpha}_{d/2}
}
&\geq \eps_0,
\\
S(\brho_{\mathrm{Haar}})
-
\log(d/2)
&\geq \eps_0,
\\
\inf_{\substack{\rho\in\mathcal D_d\\
\mathrm{rank}(\rho)\leq d/2}}
\Dtr{\brho_{\mathrm{Haar}}}{\rho}
&\geq \eps_0.
\end{align}
\end{lemma}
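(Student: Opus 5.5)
Each of the three separations will be reduced to a statement about the empirical rescaled spectrum $L_d$ of $\brho_{\mathrm{Haar}}$. We then invoke the almost-sure weak convergence $L_d\to\mu_{\mathrm{MP}}$ from \cite{nechita2007asymptotics}, together with the entropy asymptotics \eqref{eq:warmup-entropy-asymptotics}. We take $\eps_0$ to be the minimum of the three constants produced below. Since each event holds with probability tending to one, so does their intersection.

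\emph{Entropy.} By \eqref{eq:warmup-entropy-asymptotics}, $\E S(\brho_{\mathrm{Haar}})-\log(d/2)=\log 2-\tfrac12+O(d^{-2})$. Since $\log 2-\tfrac12\approx 0.193>0$ and the variance is $O(d^{-2})$, Chebyshev's inequality gives $S(\brho_{\mathrm{Haar}})-\log(d/2)\geq 0.19$ with probability $1-O(d^{-2})$. This case is immediate.

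\emph{Sorted TV.} We argue directly rather than through power sums. In rescaled form, the $d/2$ largest entries of ${\balpha}_{d/2}$ equal $2$ and the rest are $0$. Fix a threshold $c\in(0,2)$, say $c=1$. Every index $i$ at which the sorted rescaled eigenvalue $d\lambda_i^\downarrow$ lies in $(0,1)$ incurs a contribution $|\lambda_i^\downarrow-\alpha_i|\ge 1/d$ when $i\le d/2$, and a contribution $\lambda_i^\downarrow$ when $i>d/2$. A cleaner bound comes from the mass beyond position $d/2$. The sorted TV is at least $\sum_{i>d/2}\lambda_i^\downarrow$, because $\sum_i(\alpha_i-\lambda_i)^+$ dominates the total mass that $\lambda$ puts outside the first $d/2$ sorted coordinates. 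This mass equals $\frac1d\sum_{i>d/2} d\lambda^\downarrow_i$, which by weak convergence tends to $\int x\,\mathbb{I}[x\le m]\,\mu_{\mathrm{MP}}(dx)$, where $m$ is the median of $\mu_{\mathrm{MP}}$. This limit is strictly positive because the density is positive on $(0,4)$. To make the convergence step rigorous, we use that $x\mapsto x$ is bounded on the relevant range: the top eigenvalue is at most $4+o(1)$ almost surely, or else we truncate at a large constant and control the tail by $\tr$-second moments. We also use that quantiles converge when the limit has a continuous distribution function. We expect this quantile/partial-sum limit to be the main technical step, though it is routine.

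\emph{Rank.} For any $\rho$ of rank at most $d/2$, let $P$ be the projector onto its support. Then $\Dtr{\brho_{\mathrm{Haar}}}{\rho}\ge \Tr((\Id-P)\brho_{\mathrm{Haar}})$, since $\Tr((\Id-P)\rho)=0$ and the trace distance dominates the difference of expectations of the two-outcome measurement $\{P,\Id-P\}$. By Ky Fan's principle, $\Tr((\Id-P)\brho)\ge\sum_{i>d/2}\lambda_i^\downarrow(\brho)$. This is the same tail mass as in the sorted-TV case, so the bound from that case applies verbatim.
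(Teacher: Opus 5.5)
Your proposal is correct and follows the paper's proof. Entropy is handled by the mean/variance asymptotics plus Chebyshev, and both the sorted-TV and rank bounds reduce to the bottom-half mass $\sum_{i>d/2}\lambda_i^\downarrow$, which is then controlled by Marchenko--Pastur convergence.

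The only difference is how that tail mass is bounded below. You identify its exact limit $\int_0^m x\,\mu_{\mathrm{MP}}(dx)$ via quantile convergence. The paper instead uses $\mu_{\mathrm{MP}}([0,1/4])\le 1/\pi<1/2$ to force a positive fraction of bottom-half eigenvalues above $1/(4d)$, which needs only convergence of one interval's mass.

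Your detour through the top eigenvalue or truncation is unnecessary. The bottom-half values are already bounded by the empirical median, which converges to $m$.
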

\begin{proof}
Set $\Delta_{\mathrm{ent}}=\log 2-\frac{1}{2}>0$.
The entropy separation is immediately proven by~\Cref{eq:warmup-entropy-asymptotics} and Chebyshev inequality:
\begin{equation}
\Pr\!\left[
S(\brho_{\mathrm{Haar}})
\leq
\log(d/2)+\frac{\Delta_{\mathrm{ent}}}{2}
\right]
=
O(d^{-2}).
\end{equation}
Via the convergence to the Marchenko--Pastur law one, for the interval $I=[\frac{1}{4},\infty)$ (which is valid as test function because $\mu_{MP}$ has no atoms at the endpoints),
\begin{equation}
\lim_{d\rightarrow \infty}L_{d}(I)=\mu_{MP}(I)=\beta>0\, ,\, \mathrm{almost\,surely.}
\end{equation}

In particular, one has the elementary bound
\begin{align}
1-\beta=\mu_{\mathrm{MP}}\!\left([0,1/4]\right)
&=
\int_0^{1/4}
\frac{1}{2\pi}
\sqrt{\frac{4-x}{x}}\,dx
\nonumber\leq
\int_0^{1/4}
\frac{1}{\pi\sqrt{x}}\,dx
=
\frac{1}{\pi}<\frac{1}{2}.
\end{align}
Since almost sure convergence implies convergence in probability, 
 fixed any $\delta\in(0,1)$ and $\gamma\in(1/2,\beta)$
 , there is a $d_0$ large enough such that for every $d\geq d_0$, with probability larger than $1-\delta$, the fraction of eigenvalues in $[\frac{1}{4d},\infty)$ of the sequence $\brho_{\mathrm{Haar}}$ is larger than $\gamma$. Since only $\frac{d}{2}$ eigenvalues can be in the larger half of the spectrum, this means that at least $(\gamma-\frac{1}{2})d$ are in the lower half and larger than $\frac{1}{4d}$. Therefore, their mass $\tau_d$ is at least $\frac{\gamma-\frac{1}{2}}{4}$. Consequently, for every
\begin{equation}
c<
\frac14\left(\beta-\frac12\right),
\end{equation}
we have
\begin{equation}
\Pr[\tau_d(\brho_{\mathrm{Haar}})\geq c]
\longrightarrow1.
\end{equation}
For any probability vector $q$ supported on at most $d/2$ coordinates, one has the elementary inequality
\begin{equation}
\left\|
\spec(\brho_{\mathrm{Haar}})^\downarrow-q
\right\|_1
\geq
2\tau_d.
\label{eq:spectral-tail-distance}
\end{equation}
Moreover, for every density matrix $\rho$ with decreasing eigenvalues $\lambda_1,\ldots,\lambda_d$,
\begin{equation}
\inf_{\substack{\sigma\in\mathcal D_d\\
\mathrm{rank}(\sigma)\leq d/2}}
\|\rho-\sigma\|_1
=
2\sum_{i>d/2}\lambda_i(\rho),
\label{eq:rank-approximation}
\end{equation}
where the lower bound follows by measuring the support projector of $\sigma$ and applying the Ky Fan maximum principle, while the upper bound is from choosing the normalized restriction of $\rho$ to its largest $\frac{d}{2}$ eigenspaces. Therefore, with $\tau_d= \sum_{i>d/2}\lambda_i(\brho_{\mathrm{Haar}})$, we have that $\brho_{\mathrm{Haar}}$ is also far from any state of rank at most $\frac{d}{2}$. The claim follows by a union bound. 
\end{proof}

Let the average state of the first ensemble be 
\begin{equation}
\ol{\rho}_{\mathrm{proj}}^{(n)}
\coloneqq
\E\!\left[\brho_{\mathrm{proj}}^{\otimes n}\right].
\end{equation}
Applying \Cref{prop:the-key} with $r=d/2$ gives
\begin{equation}
\ol{\rho}_{\mathrm{proj}}^{(n)}
=
\left(\frac{2}{d}\right)^n
\prod_{t=1}^n
\frac{1+2\wt{J}_t}{1+\wt{J}_t}.
\label{eq:warmup-projection-moment}
\end{equation}

The following lemma expresses the average state of the second ensemble in terms of Jucys--Murphy elements

\begin{lemma}\label{lem:pagestates}
The average $n$-fold state of $\brho_{\mathrm{Haar}}$ satisfies:

\begin{equation}
\ol{\rho}_{\mathrm{Haar}}^{(n)}
\coloneqq
\E\!\left[
\brho_{\mathrm{Haar}}^{\otimes n}
\right]=\frac{d^n}{(d^2)_n}\prod_{t=1}^n(1+\wt{J}_t).
\end{equation}
\end{lemma}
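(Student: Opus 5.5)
We will realize $\brho_{\mathrm{Haar}}$ as a normalized Wishart matrix and combine Gaussian moment formulas with a radial/angular decoupling, mirroring the Remark after \Cref{prop:the-key}. Let $\bZ \in \C^{d\times d}$ be a standard complex Gaussian matrix, viewed as the vector $\ket{\bz} = \sum_{i,j} \bZ_{ij}\ket{i}\ket{j} \in \C^d\otimes\C^d$. Then $\ket{\bz} = \|\bz\|\,\ket{\bpsi}$, where $\ket{\bpsi}$ is Haar-random on the unit sphere of $\C^{d^2}$ and is independent of the radius $\|\bz\|$. Moreover $\Tr_2\ketbra{\bz}{\bz} = \bZ\bZ^\dagger = \bS$. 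Hence
\begin{equation}
\E[\bS^{\otimes n}] = \E\bigl[\|\bz\|^{2n}\bigr]\cdot \E\bigl[\brho_{\mathrm{Haar}}^{\otimes n}\bigr].
\end{equation}

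\textbf{Step 1: the radial factor.} The quantity $\|\bz\|^2$ is a sum of $d^2$ i.i.d.\ unit-mean exponential variables, so it is Gamma$(d^2,1)$ distributed. Its moments are therefore
\begin{equation}
\E\|\bz\|^{2n} = d^2(d^2+1)\cdots(d^2+n-1),
\end{equation}
the rising factorial. Accordingly, the notation $(d^2)_n$ in the statement must be read as this rising factorial (Pochhammer symbol). As a sanity check, for $n=1$ we get $\E\brho = \Id/d$, which matches $d/d^2$.

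\textbf{Step 2: the Gaussian moment.} Taking $X=\Id$ in \Cref{eqn:combi2} (Wick/Isserlis), or computing directly, gives
\begin{equation}
\E[\bS^{\otimes n}] = \sum_{\pi\in S_n} d^{\numcyc(\pi)}\pi.
\end{equation}
For completeness, the direct computation runs as follows. Expand $\E\bigl[\prod_t \bZ_{i_t j_t}\overline{\bZ_{i'_t j_t}}\bigr]$ as a sum over pairings $\sigma\in S_n$, where each pairing forces $i_t = i'_{\sigma(t)}$ and $j_t = j_{\sigma(t)}$. Summing over the $j$'s then produces a factor $d^{\numcyc(\sigma)}$, and the constraint on the $i$'s is exactly the matrix of the permutation operator $\sigma$ (possibly $\sigma^{-1}$, which does not matter after summing over $S_n$).

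\textbf{Step 3: Jucys identity.} By \Cref{eqn:jucys} with $z=d$,
\begin{equation}
\sum_{\pi} d^{\numcyc(\pi)}\pi = \prod_{t=1}^n (d+J_t) = d^n\prod_{t=1}^n(1+\wt J_t).
\end{equation}
Dividing by the radial factor from Step 1 yields the claimed formula.

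The only point requiring care is the independence of the radius and direction of $\bz$, together with the correct identification of the Wick pairing with the permutation action convention. Both are standard, so I expect the main (minor) obstacle to be bookkeeping of the rising-factorial normalization rather than anything conceptual.
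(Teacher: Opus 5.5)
Your proof is correct, but it takes a different route from the paper's. The paper starts from the standard fact that $\E[\ketbra{\bpsi}{\bpsi}^{\otimes n}]$ is the normalized projector onto the symmetric subspace of $(\C^{d^2})^{\otimes n}$. After regrouping $(\C^d\otimes\C^d)^{\otimes n}\cong(\C^d)^{\otimes n}\otimes(\C^d)^{\otimes n}$, it writes this projector as $\frac{1}{(d^2)_n}\sum_{\pi\in S_n}\pi\otimes\pi$. Taking the partial trace over the second block turns each $\pi\otimes\pi$ into $\Tr(\pi)\,\pi=d^{\numcyc(\pi)}\pi$, and the Jucys identity finishes.

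In that argument the constant $(d^2)_n$ arises as $n!$ times the dimension of the symmetric subspace. In yours it arises as the Gamma moment $\E[(\Tr\bS)^n]$, and your reading of $(d^2)_n$ as the rising factorial agrees with the paper's definition. Likewise, the operator $\sum_\pi d^{\numcyc(\pi)}\pi$ comes from Wick's theorem (the $X=\Id$ case of \Cref{eqn:combi2}) rather than from a partial trace of the symmetric projector.

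The paper's version is shorter once the symmetric-subspace formula is taken as known. Yours is more self-contained, since that formula can itself be derived by exactly this kind of Gaussian computation. Yours also makes a structural point explicit: $\brho_{\mathrm{Haar}}=\bS/\Tr\bS$ is a normalized Wishart matrix whose normalization $\Tr\bS$ is independent of $\bS/\Tr\bS$. Hence $\E[\brho_{\mathrm{Haar}}^{\otimes n}]=\E[\bS^{\otimes n}]/\E[(\Tr\bS)^n]$ holds automatically. This is precisely the relation that the tilt of \Cref{def:normalize} has to engineer for the product-of-projections ensembles, which explains why the warmup's second ensemble needs no tilt.
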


\begin{proof}
We identify
\begin{equation}
\left(\C^d\otimes\C^d\right)^{\otimes n}
\cong
\left(\C^d\right)^{\otimes n}
\otimes
\left(\C^d\right)^{\otimes n},
\end{equation}
and let $\Tr_2$ denote the partial trace over the second factor
in this decomposition. The Haar-random average of $\ketbra{\bpsi}{\bpsi}^{\otimes n}$ is proportional to the projector onto the symmetric subspace (of dimension $D_{n,d^2}=\frac{(d^2)_{n}}{n!}$, where $(x)_n
\coloneqq
x(x+1)\cdots(x+n-1)$):
\begin{equation}
\Ex \!\left[\ketbra{\bpsi}{\bpsi}^{\otimes n}\right]=\frac{1}{D_{n,d^2}}\frac{1}{n!}\sum_{\pi\in S_n}\pi\otimes \pi=\frac{1}{(d^2)_n}\sum_{\pi\in S_n}\pi\otimes \pi\,.
\end{equation}
By taking the partial trace,
\begin{equation}
\ol{\rho}_{\mathrm{Haar}}^{(n)}=\E\!\left[
\brho_{\mathrm{Haar}}^{\otimes n}
\right]=\Tr_{2}\left[\E \!\left[\ketbra{\bpsi}{\bpsi}^{\otimes n}\right]\right]=\frac{1}{(d^2)_n}\sum_{\pi}d^{\numcyc(\pi)}\pi=\frac{1}{(d^2)_n}\prod_{t=1}^n(d+J_t) \,.
\end{equation}
where in the last equality we used the Jucys identity~\Cref{eqn:jucys}.
\end{proof}

The following lemma gives relevant information on the spectrum of Jucys--Murphy elements. The claim can be easily obtained from the characterization of the spectrum of Jucys--Murphy as contents of a Young diagram, but we give a self-contained proof here.

\begin{lemma}
\label{lem:warmup-support-spectrum}
The states $\ol{\rho}_{\mathrm{proj}}^{(n)}$ and
$\ol{\rho}_{\mathrm{Haar}}^{(n)}$ commute. Moreover,
$\ol{\rho}_{\mathrm{Haar}}^{(n)}$ is invertible and, on the support
of $\ol{\rho}_{\mathrm{proj}}^{(n)}$,
\begin{equation}
J_t>-\frac d2
\qquad
\text{for every $t\in[n]$}.
\label{eq:warmup-Jucys--Murphy-support}
\end{equation}
\end{lemma}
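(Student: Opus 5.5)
Commutation is immediate from the explicit formulas. By \Cref{eq:warmup-projection-moment} and \Cref{lem:pagestates}, both $\ol{\rho}_{\mathrm{proj}}^{(n)}$ and $\ol{\rho}_{\mathrm{Haar}}^{(n)}$ are rational functions of the commuting Hermitian operators $J_1,\dots,J_n$. They therefore lie in a commutative algebra and commute.

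The substantive part is a joint spectral analysis. Since the $J_t$ are commuting Hermitian operators on $(\C^d)^{\otimes n}$, they are simultaneously diagonalizable, so we work on a joint eigenvector $v$ with $J_t v = c_t v$, $c_t\in\R$.

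\textbf{Invertibility of the Haar average.} By \Cref{lem:pagestates}, $\ol{\rho}_{\mathrm{Haar}}^{(n)}$ acts on $v$ as $\frac{1}{(d^2)_n}\prod_t (d+c_t)$, so it suffices to show $d+c_t\neq 0$ for all $t$. We will prove the sharper bound $d+J_t\succ 0$. The cleanest route is the identity
\begin{equation}
\Tr_2\!\left[\textstyle\sum_{\pi}\pi\otimes\pi\right]=\prod_t(d+J_t),
\end{equation}
which follows from Jucys' identity. The left side is $(d^2)_n$ times the partial trace of a positive operator, hence positive semidefinite, so $\prod_t(d+J_t)\succeq 0$. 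To upgrade this to strict positivity of each factor, we argue by induction on $t$. The operator $J_t$ is the Jucys--Murphy element in $S_t$, and $\prod_{s\le t}(d+J_s) = \Phi_t(\Id)$, where $\Phi_t(\Id)=\sum_{\pi\in S_t}d^{\numcyc(\pi)}\pi = \E[(\bZ\bZ^\dagger)^{\otimes t}]$ on the first $t$ tensor factors, by the Remark's Wick computation. We claim this is positive definite: for any nonzero $w$,
\begin{equation}
\langle w,\E[(\bZ\bZ^\dagger)^{\otimes t}]w\rangle=\E\|(\bZ^\dagger)^{\otimes t}w\|^2>0,
\end{equation}
because $\bZ$ is almost surely invertible and the distribution of $\bZ$ has full support. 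Hence each partial product $P_t=\prod_{s\le t}(d+J_s)$ is positive definite, and therefore $d+J_t=P_tP_{t-1}^{-1}$ is a product of commuting positive definite operators, hence positive definite. This gives both the invertibility of $\ol{\rho}_{\mathrm{Haar}}^{(n)}$ and the invertibility of $d+J_t$ that was deferred to \Cref{prop:the-key}.

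\textbf{The bound on the support of the projection average.} By the same Gaussian argument with $X=R$, where $R$ is the diagonal projection of rank $r=d/2$ from \Cref{eqn:sim}, the partial product
\begin{equation}
Q_t=\prod_{s\le t}(r+J_s)=\E[(\bZ R\bZ^\dagger)^{\otimes t}]
\end{equation}
is positive semidefinite for every $t$. Since $\ol{\rho}_{\mathrm{proj}}^{(n)}\propto Q_n P_n^{-1}$ with $P_n\succ 0$ commuting with $Q_n$, the support of $\ol{\rho}_{\mathrm{proj}}^{(n)}$ is the span of the joint eigenvectors with $\prod_{t\le n}(r+c_t)\neq 0$. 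On such an eigenvector each partial product $\prod_{s\le t}(r+c_s)$ is an eigenvalue of $Q_t\succeq 0$, so it is $\ge 0$. It is also nonzero, because it divides the full nonzero product. Every partial product is therefore strictly positive, and taking ratios of consecutive partial products gives $r+c_t>0$, i.e.\ $c_t>-d/2$.

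\textbf{Main obstacle.} The delicate step is the support claim: we must establish the sign of every individual factor $r+c_t$, not merely of the full product. The induction over partial products handles this, and it relies on $Q_t$ being realized on the first $t$ tensor factors tensored with the identity on the rest. That holds because $J_s$ for $s\le t$ acts only on the first $t$ factors. As a fallback, one can invoke the content characterization, $c_t\in\{j-i\}$ with column lengths at most $d$, and observe that $r+c_t$ vanishes in the first row-index position exceeding $r$.
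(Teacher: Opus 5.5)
Your proof is correct and follows the paper's argument. Both show that every partial product $\prod_{s\le t}(d/2+J_s)$ is positive semidefinite, and then take ratios of consecutive partial products on the support of the full product. The only difference is the positivity certificate. You realize the partial products as Gaussian moments $\E[(\bZ R\bZ^\dagger)^{\otimes t}]$ via the Remark's Wick identity, and you get $d+J_t\succ 0$ self-containedly from almost-sure invertibility of $\bZ$. The paper instead writes $\mathcal G_{d/2,t}$ as the product of the commuting positive semidefinite operators $\E[\bPi^{\otimes t}]$ and $(d^2)_t\,\ol{\rho}_{\mathrm{Haar}}^{(t)}$, and takes invertibility from \Cref{prop:the-key}.
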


\begin{proof}
By~\Cref{lem:pagestates} and~\Cref{prop:the-key}, $\ol{\rho}_{\mathrm{Haar}}^{(n)}$ is invertible. $\ol{\rho}_{\mathrm{proj}}^{(n)}$
and
$\ol{\rho}_{\mathrm{Haar}}^{(n)}$ commute because they are functions of Jucys--Murphy elements.
For $q>0$ and $t\in[n]$, define
\begin{equation}
\mathcal G_{q,0}\coloneqq\Id,
\qquad
\mathcal G_{q,t}\coloneqq\prod_{j=1}^t(q+J_j).
\end{equation}

By the random-projection moment formula in
\Cref{prop:the-key},
\begin{equation}
\E\left[\bPi^{\otimes t}\right]
=
\mathcal G_{d/2,t} [(d^2)_t\ol{\rho}_{\mathrm{Haar}}^{(t)}]^{-1}.
\label{eq:warmup-projection-gram}
\end{equation}
All the operators in~\eqref{eq:warmup-projection-gram} commute and
\begin{equation}
\mathcal G_{d/2,t}
=
\E\left[\bPi^{\otimes t}\right]
[(d^2)_t\ol{\rho}_{\mathrm{Haar}}^{(t)}]
\succeq0.
\label{eq:warmup-gram-positive}
\end{equation}
Here we used that both factors on the right-hand side are
positive semidefinite and commute. Moreover, from the product
representation of $\mathcal G_{d/2,t}$, the support of $\mathcal G_{d/2,n}$ is included in the support of each $\mathcal G_{d/2,t}$ for $t\leq n$.
Using
\begin{equation}
\mathcal G_{d/2,t}
=
\mathcal G_{d/2,t-1}
\parens*{\frac d2+J_t}
\end{equation}
and commutativity, we obtain that on the support of $\mathcal G_{d/2,n}$
\begin{equation}
\frac d2+J_t
=
\mathcal G_{d/2,t-1}^{-1}
\mathcal G_{d/2,t}
\succ0,
\end{equation}
where the inverse is taken on the support of $\mathcal G_{d/2,n}$.
Since $\ol{\rho}_{\mathrm{proj}}^{(n)}
=
\left(\frac{2}{d}\right)^n
\mathcal G_{d/2,n}\mathcal G_{d,n}^{-1}$, we have that $J_t>-d/2$ on the support of
$\ol{\rho}_{\mathrm{proj}}^{(n)}$.
\end{proof}

We now prove the following:

\begin{theorem}\label{thm:warmup-relative-entropy}
There exists a constant $C>0$ such that for any even $d$ and $1\leq n\leq d^2$,
\begin{equation}
D\!\left(
\ol{\rho}_{\mathrm{proj}}^{(n)}
\middle\|
\ol{\rho}_{\mathrm{Haar}}^{(n)}
\right)
\leq
C\frac{n^3}{d^4}\,.
\end{equation}
\end{theorem}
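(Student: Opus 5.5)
Since $\ol{\rho}_{\mathrm{proj}}^{(n)}$ and $\ol{\rho}_{\mathrm{Haar}}^{(n)}$ commute (\Cref{lem:warmup-support-spectrum}), I will work in a joint eigenbasis of the commuting family $\wt J_1,\dots,\wt J_n$. In that basis the relative entropy becomes a classical KL divergence between two distributions on joint eigenvectors. Write $p$ for the $\ol{\rho}_{\mathrm{proj}}^{(n)}$-weights and $q$ for the $\ol{\rho}_{\mathrm{Haar}}^{(n)}$-weights. On the support of $p$ every eigenvalue $z_t$ of $\wt J_t$ satisfies $z_t>-1/2$ (\Cref{lem:warmup-support-spectrum}). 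There the log-likelihood ratio is
\begin{equation}
\log\frac{p}{q}=\lambda_0+\sum_{t=1}^n g(z_t),\qquad g(z)\coloneqq\log(1+2z)-2\log(1+z),
\end{equation}
with $\lambda_0=n\log\frac{2}{d}-\log\frac{d^n}{(d^2)_n}$ a scalar. Taylor expansion gives $g(z)=(2z-2z^2+\tfrac83z^3)-(2z-z^2+\tfrac23z^3)+\dots=-z^2+2z^3+O(z^4)$. So the first moment is matched, and the quadratic term is the only low-order piece I need to control.

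The key step is to use the variational/Jensen form $D(p\|q)=\E_p[\log(p/q)]$ together with the normalization $\E_q[e^{-\log(q/p)}]\le1$, or, more cleanly, $D(p\|q)\le\E_p[\log(p/q)]+\log\E_q[p/q]$, where the last term is $\le 0$. I will absorb $\lambda_0$ as follows. Since $\sum p=1$, we have $\lambda_0=-\log\E_q[e^{\sum_t g(z_t)}]$ restricted to $\mathrm{supp}(p)$, and hence
\begin{equation}
D=\E_p\Big[\sum_t g(z_t)\Big]-\log\E_q\Big[e^{\sum_t g(z_t)}\Big].
\end{equation}
The inequality $g(z)\le -z^2+2z^3$ (to be checked for $z>-1/2$, or replaced by a cubic bound with some constant) gives the first term an upper bound of $-\sum_t\Tr(\ol\rho_{\mathrm{proj}}\wt J_t^2)+2\sum_t\Tr(\ol\rho_{\mathrm{proj}}\wt J_t^3)$. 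For the second term I will use the exact identity $\E_q[e^{\sum g}]=\E_q[\prod_t(1+2z_t)/(1+z_t)^2]$, which equals the ratio of normalizations and is a rational function of $d,n$ computable from \Cref{eq:warmup-projection-moment} and \Cref{lem:pagestates}:
\begin{equation}
\E_q[\cdot]=\frac{(d^2)_n}{d^n}\Big(\frac d2\Big)^n\frac{1}{\text{(trace normalization)}}.
\end{equation}
Its logarithm is then expanded directly in $n/d^2$. In other words, $\lambda_0=n\log\frac2d+\log\frac{(d^2)_n}{d^n}=\sum_{j<n}\log(1+j/d^2)=\frac{n^2}{2d^2}+O(n^3/d^4)$. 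Comparing this with $\sum_t\Tr(\ol\rho_{\mathrm{proj}}\wt J_t^2)$, which I expect to equal $\frac{n^2}{2d^2}+O(n^3/d^4)$ (by the same mechanism, since $\E\Tr(\rho J_t^2)$ counts transposition pairs and is $\approx (t-1)/d^2\cdot$const), the quadratic parts cancel. Writing $D=\lambda_0+\E_p[\sum_t g(z_t)]$ directly, what remains is $2\sum_t\Tr(\ol\rho_{\mathrm{proj}}\wt J_t^3)$ plus $O(n^3/d^4)$ error.

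It then remains to show $\sum_t\Tr(\ol\rho_{\mathrm{proj}}\wt J_t^3)$ is $O(n^3/d^4)$ and that the second-moment identity holds with error $O(n^3/d^4)$. I will compute these moments via the state itself: $\Tr(\brho^{\otimes n}\,(i\,t)(j\,t)\cdots)$ equals a product of $\Tr(\brho^{\ell})$ by \Cref{fact:perm-product}, with $\Tr(\brho_{\mathrm{proj}}^\ell)=(2/d)^{\ell-1}$. Expanding $J_t^2$ and $J_t^3$ into products of transpositions and counting cycle types gives $\Tr(\ol\rho J_t^2)=(t-1)+O(t^2/d)$ and $\Tr(\ol\rho J_t^3)=O(t^2/d+t^3/d^2)$, after also using that $\E_p|z_t|^3\lesssim(\E_p z_t^4)^{3/4}$ to handle odd signs. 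Summing over $t\le n$ and dividing by $d^2$, $d^3$ yields the claim for $n\le d^2$.

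The main obstacle is controlling the remainder where $z_t$ is not small. The Taylor bound on $g$ is only uniform on $z>-1/2$ away from $-1/2$, and $g\to-\infty$ there. Fortunately only an \emph{upper} bound on $\E_p[g]$ is needed, and $g(z)\le -z^2+Cz^3$ should hold on all of $(-1/2,\infty)$ for a suitable $C$ (for large negative-direction values $g$ is very negative, and for large positive $z$, $-z^2+Cz^3$ dominates $\log$ growth). So I would first verify this one-sided global inequality, which removes any need to carve out large eigenvalues. I would then bound $\E_p[z_t^3]$ by $\E_p|z_t|^3$ through the fourth-moment computation above.
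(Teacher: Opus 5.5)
Your skeleton matches the paper's proof: the common Jucys--Murphy eigenbasis, the global one-sided bound $g(z)\le -z^2+2z^3$ on $(-1/2,\infty)$ (the paper checks it via $\frac{d}{ds}\bigl(-s^2+2s^3-g(s)\bigr)=\frac{2s^3(7+6s)}{(1+s)(1+2s)}$), and cancellation of $\lambda_0\le\frac{n(n-1)}{2d^2}$ against $\sum_t\Tr(\ol{\rho}_{\mathrm{proj}}^{(n)}\wt{J}_t^2)\ge\frac{n(n-1)}{2d^2}$, which leaves $2\sum_t\Tr(\ol{\rho}_{\mathrm{proj}}^{(n)}\wt{J}_t^3)$.

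\textbf{The cubic step.} Your plan to control this term through $\E_p|z_t|^3\le(\E_p z_t^4)^{3/4}$ fails. The argument needs cancellation in the \emph{signed} third moment: $\E_p z_t^3=\Tr(\ol{\rho}_{\mathrm{proj}}^{(n)}\wt{J}_t^3)\approx 6t^2/d^4+8t^3/d^6$, whereas $z_t$ has typical size $\sqrt{t}/d$. The identity terms of $J_t^4$ alone (the $(t-1)^2+(t-1)(t-2)$ index tuples with $i_1=i_2,\,i_3=i_4$ or $i_1=i_4,\,i_2=i_3$) give $\E_p z_t^4\ge 2(t-1)(t-2)/d^4$. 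So your H\"older bound is at least of order $t^{3/2}/d^3$ and sums to order $n^{5/2}/d^3$. This exceeds $n^3/d^4$ by a factor $d/\sqrt{n}$ for all $n<d^2$, and it even diverges at $n\approx d^{4/3}$. There are no odd signs to handle. The pointwise bound carries the signed cube, so take $\E_p$ directly and compute $\Tr(\ol{\rho}_{\mathrm{proj}}^{(n)}J_t^3)$ exactly: all-distinct index triples give $4$-cycles, the rest give transpositions. For the same reason your hedge ``or a cubic bound with some constant'' is unavailable. Since $g(z)+z^2-2z^3=-\tfrac72 z^4+O(z^5)$, the bound $g(z)\le -z^2+Cz^3$ fails for small $z<0$ if $C>2$ and for small $z>0$ if $C<2$.

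\textbf{Power counting.} With $\E\Tr(\brho_{\mathrm{proj}}^\ell)=(2/d)^{\ell-1}$, which you state, a $3$-cycle contributes $4/d^2$ and a $4$-cycle $8/d^3$. With $m=t-1$, this gives $\Tr(\ol{\rho}_{\mathrm{proj}}^{(n)}J_t^2)=m+\frac{4m(m-1)}{d^2}$ and $\Tr(\ol{\rho}_{\mathrm{proj}}^{(n)}J_t^3)=\frac{2(3m^2-2m)}{d}+\frac{8m(m-1)(m-2)}{d^3}$, not your $O(t^2/d)$ and $O(t^2/d+t^3/d^2)$. With your $t^3/d^2$, summing and dividing by $d^3$ leaves an $n^4/d^5$ term, which is not $O(n^3/d^4)$ once $n\gg d$. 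The correct $t^3/d^3$ gives at most $2n^4/d^6\le 2n^3/d^4$ for $n\le d^2$, which is how the paper concludes. For the quadratic term you only need the lower bound $\Tr(\ol{\rho}_{\mathrm{proj}}^{(n)}J_t^2)\ge t-1$, since it enters with a minus sign.

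\textbf{The constant $\lambda_0$.} Your formula $n\log\frac2d+\log\frac{(d^2)_n}{d^n}$ and its claimed value $\sum_{j<n}\log(1+j/d^2)$ differ by $n\log 2$. Since $\frac{d/2+J_t}{d+J_t}=\frac12\cdot\frac{1+2\wt{J}_t}{1+\wt{J}_t}$, the prefactor of $\ol{\rho}_{\mathrm{proj}}^{(n)}$ is $d^{-n}$, not $(2/d)^n$. Hence $\lambda_0=\log\frac{(d^2)_n}{d^{2n}}$, which agrees with your normalization identity and your final value.
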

\begin{proof}
Since $\ol{\rho}_{\mathrm{Haar}}^{(n)}$ is invertible, we can restrict the evaluation of the relative entropy to the support of $\ol{\rho}_{\mathrm{proj}}^{(n)}$:
\begin{align}
\label{ineq:kl0}
D\!\left(
\ol{\rho}_{\mathrm{proj}}^{(n)}
\middle\|
\ol{\rho}_{\mathrm{Haar}}^{(n)}
\right)
&=
\Tr\left[
\ol{\rho}_{\mathrm{proj}}^{(n)}
\left(
\log\ol{\rho}_{\mathrm{proj}}^{(n)}
-
\log\ol{\rho}_{\mathrm{Haar}}^{(n)}
\right)
\right]
\nonumber\\
&=
\Tr\left[
\ol{\rho}_{\mathrm{proj}}^{(n)}
\log\left(
\frac{(d^2)_n}{d^{2n}}
\prod_{t=1}^n
\frac{1+2\wt{J}_t}
{\left(1+\wt{J}_t\right)^2}
\right)
\right].
\end{align}

On the support of $\ol{\rho}_{\mathrm{proj}}^{(n)}$, the minimum eigenvalues of $J_t$ are larger than $-d/2$ by~\Cref{lem:warmup-support-spectrum}. 
Thus, the minimum eigenvalue of $\wt{J}_t$ is strictly larger than $-1/2$. We now prove that
\begin{equation}
\log\frac{1+2s}{(1+s)^2}\leq -s^2+2s^3\,, \quad \forall s\in(-1/2,\infty)
\end{equation}
Indeed, let $\mathcal{L}(s)\coloneqq  -s^2+2s^3-\log\frac{1+2s}{(1+s)^2}$.
We have $\mathcal{L}(0)=0$ and $\frac{d}{ds}\mathcal{L}=\frac{2s^3(7+6s)}{(1+s)(1+2s)}$, which is positive for $s\geq 0$ and negative for $s\in(-1/2,0]$. Therefore, $\mathcal{L}(s)$ is minimised at $s=0$ in $(-1/2,\infty)$. This implies

\begin{align}\label{ineq:boundkl}
D\!\left(
\ol{\rho}_{\mathrm{proj}}^{(n)}
\middle\|
\ol{\rho}_{\mathrm{Haar}}^{(n)}
\right)
\leq \log\frac{(d^2)_n}{d^{2n}}-\sum_{t=1}^n\Tr\left[\ol{\rho}_{\mathrm{proj}}^{(n)}\wt{J}_t^2\right]+2\sum_{t=1}^n\Tr\left[\ol{\rho}_{\mathrm{proj}}^{(n)}\wt{J}_t^3\right] \,.
\end{align}
The first term can be bounded as
\begin{equation}\label{ineq:coeff}
\log\frac{(d^2)_n}{d^{2n}}=\sum_{t=0}^{n-1}\log\left(1+\frac{t}{d^2}\right)\leq \frac{n(n-1)}{2d^2}\,.
\end{equation}
We now have to evaluate the expectation values of $\mathcal{P}_k$ for $k=2$ and $k=3$. From the expressions of the Jucys--Murphy elements, we have

\begin{itemize}
\item $\mathcal{P}_2= \sum_{t=2}^n\sum_{i,j<t} (i\ t) (j\ t)$, and the terms with $i=j$ are identities, while the terms with $i\neq j$ are 3-cycles. Therefore
\begin{equation}\label{ineq:j2}
\sum_{t=1}^n\Tr\left[\ol{\rho}_{\mathrm{proj}}^{(n)}\wt{J}_t^2\right]=\frac{n(n-1)}{2d^2}+\frac{n^3-3n^2+2n}{3d^2} \Ex\Tr[\brho_{\mathrm{proj}}^3]=\frac{n(n-1)}{2d^2}
+
\frac{4n^3-12n^2+8n}{3d^4}\,.
\end{equation}
\item $\mathcal{P}_3= \sum_{t=2}^n\sum_{i,j,k<t} (i\ t) (j\ t) (k\ t)$, and the terms with $i,j,k$ all different are 4-cycles, while the others are swaps. Therefore
\begin{align}\label{ineq:j3}
\sum_{t=1}^n\Tr\left[\ol{\rho}_{\mathrm{proj}}^{(n)}\wt{J}_t^3\right]&=\frac{(n-3)(n-2)(n-1)n}{4d^3}\Ex\Tr[\brho_{\mathrm{proj}}^4]+\frac{2n^3-5n^2+3n}{2d^3}\Ex\Tr[\brho_{\mathrm{proj}}^2]\\
&=\frac{2(n-3)(n-2)(n-1)n}{d^6}+\frac{2n^3-5n^2+3n}{d^4}\,.
\end{align}
\end{itemize}
Combining~\Cref{ineq:coeff}, ~\Cref{ineq:j2},~\Cref{ineq:j3}, with~\Cref{ineq:boundkl}, the $\frac{n(n-1)}{2d^2}$ terms cancel and ignoring negative terms, we obtain
\begin{equation}
D(\ol{\rho}_{\mathrm{proj}}^{(n)}\|\ol{\rho}_{\mathrm{Haar}}^{(n)})\leq 2 \sum_{t=1}^n\Tr\left[\ol{\rho}_{\mathrm{proj}}^{(n)}\wt{J}_t^3\right]=\frac{4(n-3)(n-2)(n-1)n}{d^6}+\frac{4n^3-10n^2+6n}{d^4}\,.
\end{equation}
which, by inspection, proves the claim.
\end{proof}

Pinsker's inequality and
\Cref{thm:warmup-relative-entropy} show that
\begin{equation}
\Dtr{
\ol{\rho}_{\mathrm{proj}}^{(n)}
}{
\ol{\rho}_{\mathrm{Haar}}^{(n)}
}
=
o(1)
\qquad
\text{whenever $n=o(d^{4/3})$}.
\label{eq:warmup-indistinguishability}
\end{equation}
Together with \Cref{lem:warmup-separation}, and the standard reduction to state discrimination, this proves the following
warm-up consequence.

\begin{corollary}
\label{cor:warmup-lower-bound}
There is a universal constant $\eps>0$ such that estimating the
spectrum or the entropy to precision $\eps$ requires
$\Omega(d^{4/3})$ copies.  The same pair of ensembles gives an
$\Omega(d^{4/3})$ lower bound for testing whether a state has rank at
most $d/2$ or is at trace distance at least $\eps$ from every such
state.
\end{corollary}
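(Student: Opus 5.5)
The corollary follows from the standard reduction from estimation or testing to distinguishing two ensembles. I would combine \Cref{lem:warmup-separation} with the indistinguishability bound \Cref{eq:warmup-indistinguishability}.

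Fix $\eps = \eps_0/3$ (or $\eps_0/2$, depending on the task), with $\eps_0$ from \Cref{lem:warmup-separation}. Suppose an algorithm $\mathcal A$ uses $n$ copies and estimates the spectrum to sorted-TV precision $\eps$ with success probability at least $2/3$. We use it to distinguish $\brho_{\mathrm{proj}}$ from $\brho_{\mathrm{Haar}}$ as follows. Run $\mathcal A$ on $n$ copies and obtain $\widehat{\balpha}$. Output ``proj'' if $\dtvsorted{\widehat{\balpha}}{\balpha_{d/2}} \le \eps$, and ``Haar'' otherwise.

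In case (a) the spectrum is exactly $\balpha_{d/2}$, so the test answers correctly with probability at least $2/3$. In case (b), with probability $1-o(1)$ over $\brho_{\mathrm{Haar}}$ we have $\dtvsorted{\spec(\brho_{\mathrm{Haar}})}{\balpha_{d/2}} \ge \eps_0$. Sorted-TV is a metric: it is TV applied to the sorted vectors, and the sorted vectors are probability vectors. Hence when $\mathcal A$ succeeds,
\[
\dtvsorted{\widehat{\balpha}}{\balpha_{d/2}} \ge \eps_0 - \eps > \eps,
\]
and the test answers ``Haar''. Averaging over the ensemble, the test succeeds with probability at least $2/3 - o(1)$ in both cases. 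This holds even for adaptive measurements, since any $n$-copy procedure is a POVM on $n$ copies, and its acceptance probability under the mixture is $\Tr(M\,\ol{\rho}^{(n)})$.

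The Helstrom bound then gives
\[
\Dtr{\ol{\rho}_{\mathrm{proj}}^{(n)}}{\ol{\rho}_{\mathrm{Haar}}^{(n)}} \ge 1/3 - o(1).
\]
Pinsker's inequality and \Cref{thm:warmup-relative-entropy} bound the left-hand side by $\sqrt{C n^3/(2d^4)}$, so $n \ge c\, d^{4/3}$. There is one caveat: \Cref{thm:warmup-relative-entropy} assumes $n \le d^2$. If $n > d^2$, the bound $n = \Omega(d^{4/3})$ holds trivially. Alternatively, discarding copies reduces $n$, and the trace distance is monotone under partial trace.

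Entropy estimation goes the same way. The threshold test is $|\widehat S - \log(d/2)| \le \eps$, which separates the two cases because $S(\brho_{\mathrm{Haar}}) - \log(d/2) \ge \eps_0$ with high probability.

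For rank testing with parameter $\eps \le \eps_0$, every $\brho_{\mathrm{proj}}$ has rank exactly $d/2$, so a correct tester must accept in case (a). By the third item of \Cref{lem:warmup-separation}, $\brho_{\mathrm{Haar}}$ is $\eps_0$-far from every state of rank at most $d/2$ with probability $1-o(1)$, so a correct tester must reject in case (b). The tester's output therefore distinguishes the ensembles with advantage $1/3 - o(1)$, and the same trace-distance argument applies.

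There is no real obstacle here, since all the heavy lifting is in \Cref{lem:warmup-separation} and \Cref{thm:warmup-relative-entropy}. The only points needing care are these:
\begin{itemize}
\item choosing $\eps$ strictly below $\eps_0/2$, so that the triangle-inequality margin works;
\item accounting for the $o(1)$ failure probability of the separation event, which is handled by taking $d$ large; small $d$ is absorbed into the constant in $\Omega$;
\item noting that even $d$ is without loss of generality, by embedding into dimension $d-1$ when $d$ is odd.
\end{itemize}
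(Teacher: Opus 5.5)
Your proposal is correct and is the paper's own argument written out in full. The paper likewise applies Pinsker's inequality to the $O(n^3/d^4)$ relative-entropy bound to get $o(1)$ trace distance for $n=o(d^{4/3})$, and combines this with the separation lemma through the standard reduction to state discrimination. The one thing to tidy is the parenthetical ``or $\epsilon_0/2$'': the single choice $\epsilon=\epsilon_0/3$ works for all three tasks, whereas $\epsilon=\epsilon_0/2$ creates a tie in the spectrum and entropy tests (which you already fix later by requiring $\epsilon<\epsilon_0/2$).
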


\section{Construction and Proofs of Main Results}
\label{sec:contruction}

We now present the pair of hard instances used to prove our main lower bounds. Our construction will involve products of random projections as defined in \Cref{sec:projections}. In particular, we will draw operators from $\PRP_d(a)$ and $\PRP_d(b)$, for two different sequences $a,b \in \mathbb{Z_+^K}$. To prove statistical indistinguishability of these state mixtures, we will analyze their log-likelihood ratios. Moreover, we will design our mixtures such that $\Ex [\brho^{\otimes n}] \propto \Ex[\bX^{\ot n}]$, where $\bX$ is drawn from $\PRP_d(a)$ or $\PRP_d(b)$. Thus, by \Cref{prop:f}, we are motivated to consider

\begin{equation} \label{eqn:h}
    h(z) \coloneqq \log\frac{f_b(z)}{f_a(z)} = \sum_{j=1}^\infty \tfrac{(-1)^{j+1}}{j}\parens*{\sum_{i=1}^K b_i^j - \sum_{i=1}^K a_i^j}z^j.
\end{equation}

Indistinguishability will arise when the Taylor series coefficients vanish until some degree~$k$.  With our argument, we get an $\Omega(d^{4/3})$ lower bound for $k = 2$, 
no improvement for $k = 3$, an $\Omega(d^{3/2})$ lower bound for $k = 4$, etc.; in general, the exponent on~$d$ will be $2-2/(\lfloor k/2 \rfloor + 2)$.  Thus we will only consider even~$k$, for notational simplicity.

Now, for even $k$, we wish to find sequences $a,b \in \Z_+^K$ (for some $K = K(k)$) that have matching power-sums until degree~$k$.  
\begin{example}
    For $k = 2$ we may take $K = 2$ and $a = (1,3)$, $b = (2,2)$; here $1^1+3^1 = 2^1+2^1$ but $1^2+3^2 \neq 2^2 + 2^2$.  Notice that with entries of $1$ like $a_1 = 1$, the associated random projection is just the identity operator. So for this example, we are comparing a random rank-$d/3$ projection with the sandwiched-product of two random rank-$d/2$ projections.
\end{example}
Finding integer sequences with matching power-sums is the well-known Prouhet--Tarry--Escott problem, and a simple construction based on the Thue--Morse sequence is known which achieves $K = 2^{k-1}$:
\begin{proposition} \label{prop:prouhet}
    For $k \in \N$ and $K = 2^{k-1}$, there exist sequences $a,b \in \Z_+^K$ with $\sum_{i=1}^K a_i^j = \sum_{i=1}^K b_i^j$ for all $j < k$ and $\sum_{i=1}^K a_i^k \neq \sum_{i=1}^K b_i^k$.
\end{proposition}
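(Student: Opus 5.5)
We will use the classical Prouhet--Thue--Morse construction. Let $t(m)\in\{0,1\}$ denote the parity of the number of ones in the binary expansion of $m$, for $0\le m<2^k$. Split $\{0,1,\dots,2^k-1\}$ into $A_0=\{m: t(m)=0\}$ and $A_1=\{m: t(m)=1\}$; each has size $2^{k-1}=K$. Shift everything by $1$ so that all entries are positive integers, and let $a$ list the elements of $A_0+1$ and $b$ list those of $A_1+1$. Since power-sum equality for all $j<k$ is equivalent to $\sum_{x\in A_0}p(x)=\sum_{x\in A_1}p(x)$ for every polynomial $p$ of degree $<k$, this equality is invariant under shifts. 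So it suffices to work with $A_0,A_1$ directly.

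The key identity is the generating function
\begin{equation}
\sum_{m=0}^{2^k-1}(-1)^{t(m)}x^m=\prod_{i=0}^{k-1}\bigl(1-x^{2^i}\bigr).
\end{equation}
This follows by expanding the product: choosing the term $-x^{2^i}$ for the set of bits $i$ present in $m$ contributes $(-1)^{\#\text{bits}}x^m$.

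Substitute $x=e^{s}$. The right-hand side is a product of $k$ factors, each vanishing to first order at $s=0$, since $1-e^{2^is}=-2^is+O(s^2)$. Hence it equals $(-1)^k 2^{0+1+\cdots+(k-1)}s^k+O(s^{k+1})=(-1)^k2^{k(k-1)/2}s^k+O(s^{k+1})$. The left-hand side is $\sum_{j\ge0}\frac{s^j}{j!}\bigl(\sum_{x\in A_0}x^j-\sum_{x\in A_1}x^j\bigr)$.

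Comparing coefficients gives two conclusions. First, the power-sum differences vanish for all $j<k$. Second, at $j=k$ the difference equals $k!\,(-1)^k2^{k(k-1)/2}\neq 0$. Transferring to the shifted sequences: for $j<k$ the equality persists by the polynomial-invariance remark above. At $j=k$, $\sum (x+1)^k$ differs from $\sum x^k$ only by lower-degree power sums, which match between $A_0$ and $A_1$, so the $k$th difference is unchanged and still nonzero.

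No step is a real obstacle. The only points requiring care are the coefficient extraction at order $s^k$ and the observation that the shift by $1$ (needed for positivity, since $0\in A_0$) preserves both the matching and the mismatch.
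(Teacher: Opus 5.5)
Your proposal is correct and follows essentially the same route as the paper: the same Thue--Morse partition of $\{0,\dots,2^k-1\}$, the factorization $\sum_m(-1)^{t(m)}e^{ms}=\prod_{i=0}^{k-1}(1-e^{2^i s})$ read off at order $s^k$, and a shift by $1$ for positivity, justified by the binomial theorem. Your version is slightly more explicit than the paper's: you compute the degree-$k$ discrepancy as $k!\,(-1)^k 2^{k(k-1)/2}$ and check that the shift leaves it unchanged.
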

\begin{proof}
    For $r \in \{0, \dots, 2^k-1\}$, let $s_2(r)$ denote the sum of the binary digits of $r$. The Thue--Morse sequence partitions $\{0, \dots, 2^k-1\}$ into two sets $O$ and $E$ based on the parity of $s_2(r)$, each with $K$ elements. The differences of power sums of these sets are generated by the function
    \begin{equation}
        F(z) \coloneqq \sum_{r = 0}^{2^k - 1} (-1)^{s_2(r)} e^{rz}.
    \end{equation}
    The $j$th derivative of $F$ satisfies $F^{(j)}(0) = \sum_{r \in E} r^j - \sum_{r \in O} r^j$. Moreover, by considering the bitwise expansion of each $r$, one can show that $F$ can be rewritten as
    \begin{equation}
    \label{eq:prop-prouhet-1}
        F(z) = \prod_{j = 0}^{k-1} (1-e^{2^j z}).
    \end{equation}
    Writing $e^x = 1 + x + \dots$, it is apparent that the expression on the RHS of \Cref{eq:prop-prouhet-1} has no terms below degree $k$, but does have a non-zero degree-$k$ term. Consequently, for $j < k$, $\sum_{r \in E} r^j - \sum_{r \in O} r^j = 0$, and these moments are separated at degree $k$, as desired.

    While the sets $O$ and $E$ satisfy the desired moment-matching guarantees, $E$ contains $0$. To ensure all integers are positive, we simply increase all of them by $1$. One can verify that this preserves the moment-matching guarantees using, say, the binomial theorem.

\end{proof}

Combining this with \Cref{eqn:h} yields:
\begin{proposition} \label{prop:not}
    For $k$, $K$, $a$, $b$ as in \Cref{prop:prouhet}, there are $\beta_k \neq 0$, $\beta_{k+1}$ and $c_{k}>0$ such that
    \begin{equation}
         \left|\log\frac{f_b(z)}{f_a(z)}- \beta_k z^k - \beta_{k+1} z^{k+1}\right| \leq c_{k}z^{k+2}, \qquad \textnormal{whenever } |z| < \frac12\min_i\{a_i^{-1},b_i^{-1}\}.
    \end{equation}
\end{proposition}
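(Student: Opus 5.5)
Write $h(z)=\log f_b(z)-\log f_a(z)=\sum_{i=1}^K\bigl[\log(1+b_iz)-\log(1+a_iz)\bigr]$. The $(1+z)^{-K}$ factors cancel because both sequences have length $K$. Each term $\log(1+cz)$ with $c\in\{a_i,b_i\}$ is analytic on the disk $|z|<1/c$. So on the disk $|z|<r_0\coloneqq\min_i\{a_i^{-1},b_i^{-1}\}$ we have the absolutely convergent expansion \Cref{eqn:h}, namely $h(z)=\sum_j \gamma_j z^j$ with $\gamma_j=\tfrac{(-1)^{j+1}}{j}\bigl(\sum_i b_i^j-\sum_i a_i^j\bigr)$.

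By \Cref{prop:prouhet}, $\gamma_j=0$ for $j<k$. We set $\beta_k\coloneqq\gamma_k$, which is nonzero because the degree-$k$ power sums differ. We set $\beta_{k+1}\coloneqq\gamma_{k+1}$, which may be zero; the statement does not require otherwise. What remains is the tail $h(z)-\beta_kz^k-\beta_{k+1}z^{k+1}=\sum_{j\ge k+2}\gamma_jz^j$.

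To bound the tail, let $M\coloneqq\max_i\{a_i,b_i\}=1/r_0$. Then $|\gamma_j|\le \tfrac{2K}{j}M^j\le 2KM^j$. For $|z|<\tfrac12 r_0$ we have $|Mz|<1/2$, so
\[
\Bigl|\sum_{j\ge k+2}\gamma_jz^j\Bigr|\le 2K\sum_{j\ge k+2}(M|z|)^j=2K\frac{(M|z|)^{k+2}}{1-M|z|}\le 4KM^{k+2}|z|^{k+2}.
\]
This gives $c_k\coloneqq4KM^{k+2}=2^{k+1}M^{k+2}$. Since $k$ is even, $z^{k+2}=|z|^{k+2}$, so the bound matches the stated right-hand side $c_kz^{k+2}$, and it is meaningful for negative $z$ as well. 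For $z$ real (the intended use, with $z$ an eigenvalue of $\wt{J}_t$), the logarithms are real because $1+cz>1/2>0$ on this range.

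There is no real obstacle here. The only points needing care are that the $(1+z)$ factors cancel exactly because $a$ and $b$ have the same length, and that the radius $\tfrac12 r_0$ keeps every $|cz|\le1/2$, which makes the geometric tail bound uniform with a constant depending only on $k$.
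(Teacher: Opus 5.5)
Your proof is correct and follows essentially the paper's route. The paper obtains this proposition directly by combining the power-series expansion \Cref{eqn:h} (where the $(1+z)^{-K}$ factors cancel) with the Prouhet--Tarry--Escott matching of \Cref{prop:prouhet}, leaving the remainder estimate implicit; your geometric-tail bound $c_k=4KM^{k+2}$ on $|z|<\tfrac12 r_0$ makes that step explicit, and your remark that $z^{k+2}=|z|^{k+2}$ needs $k$ even matches the paper's restriction to even $k$.
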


Let us finally define the mixtures of states we will consider. A natural strategy would be to draw $\bX \sim \PRP_d$, and then normalize by $\Tr(\bX)$. However, understanding the $n$-fold tensor moments becomes much harder after such normalization. Moreover, to make use of \Cref{prop:not}, we wish to design mixtures such that these moments of the random states are proportional to those of the underlying products of random projections. We will thus draw $\bX$ from a ``tilted'' distribution designed to ensure this property.

\begin{definition}  \label{def:normalize}
    Fix $n \in \N$ and suppose $\bX \sim \PRP_d(a)$.  We write $\wt{\bX} \sim \wt{\PRP}^{(n)}_d(a)$ to denote that $\wt{\bX}$ is drawn from the tilted distribution whose density with respect to~$\bX$ is $\Tr(\bX)^n/\E[\Tr(\bX)^n]$.
    The random variable for a state sampled from the tilted law is denoted by $\brho\coloneqq\wt{\bX}/\Tr(\wt{\bX})$.
    Finally, the tilt is designed so that
    \begin{equation}
        \ol{\rho}^{(n)} \coloneqq \E[\brho^{\otimes n}] \propto \E[\bX^{\otimes n}] \propto \prod_{t=1}^n f_a(\wt{J}_t)
    \end{equation}
    (the last step using \Cref{prop:f}).
\end{definition}

The probability of an event under the untilted and tilted law are related as follows: for any measurable set $A$ with indicator function $\mathbb{I}_{A}$,

\begin{equation}\label{eq:tiltedprob}
\Pr(\wt{\bX}_e\in A)=\frac{\E[\Tr(\bX_{e})^n\mathbb{I}_{A}]}{\E[\Tr(\bX_{e})^n]}.
\end{equation}

A simple consequence of this expression is the following relation:

\begin{proposition}\label{prop:prob_tilt}
For any measurable set $A$,
\begin{equation}
\Pr(\wt{\bX}_e\in A)\leq \nu_e^{-n}\Pr({\bX}_e\in A)\,.
\end{equation}
\begin{proof}
Since $0\preceq\bX_e \preceq \Id$, $\Tr(\bX_e)^n\leq d^n$, while by Jensen's inequality and \Cref{eq:average1}, $\E [\Tr(\bX_e)^n]\geq \E [\Tr(\bX_e)]^n =(d\nu_e)^{n}$. Using these relations to bound the rhs in \Cref{eq:tiltedprob} proves the claim.
\end{proof}
\end{proposition}

\begin{notation}
    Henceforth in this paper we fix the notation from \Cref{prop:not}.  We also let $\bX_a \sim \PRP_d(a)$, $\bX_b \sim \PRP_d(b)$ and use the natural associated notation $\wt{\bX}_a$, $\wt{\bX}_b$, $\brho_a$, $\brho_b$, $\ol{\rho}^{(n)}_a$, $\ol{\rho}^{(n)}_b$ as in \Cref{def:normalize}.
\end{notation}
Our two hard-to-distinguish states will be $\brho_a$, $\brho_b$.  Our goals will be to show that, on one hand,
\begin{equation}
    n = o(d^{2-2/(k/2+2)}) \quad\implies\quad \Dtr{\ol{\rho}^{(n)}_a}{\ol{\rho}^{(n)}_b} = o(1);
\end{equation}
and, on the other hand, there exists $\eps_k > 0$ such that
\begin{equation}
    n = o(d^2) \quad\implies\quad \dtvsorted{\mathrm{spec}(\brho_a)}{\mathrm{spec}(\brho_b)} \geq \eps_k \text{ whp}.
\end{equation}

For our entropy estimation and rank-testing lower bounds, we will similarly show that the entropies and ranks of these states are separated with high probability. 

\subsection{Proofs of main results}

To prove our main results, we will first show the following statistical indistinguishability result.

\begin{theorem}
\label{thm:prp-statistical-indistinguishability}
    For $n = o(d^{2 - \frac{4}{k+4}})$,
    \begin{equation}
        \Dtr{\ol{\rho}_a^{(n)}}{\ol{\rho}_b^{(n)}} = o(1).
    \end{equation}
\end{theorem}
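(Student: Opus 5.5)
Since $\ol{\rho}_a^{(n)}$ and $\ol{\rho}_b^{(n)}$ are both functions of the commuting Jucys--Murphy elements, they are simultaneously diagonalizable. Every divergence between them is therefore a classical divergence between their eigenvalue distributions, indexed by joint eigenvectors with joint eigenvalues $(\wt{J}_1,\dots,\wt{J}_n)=(z_1,\dots,z_n)$. I will bound the triangular discrimination
\[
\Delta(p,q)=\sum \frac{(p-q)^2}{p+q},
\]
which dominates the squared trace distance up to constants, and show that $\Delta=o(1)$.

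\textbf{Step 1: good and bad regions.} Let $\delta=\frac14\min_i\{a_i^{-1},b_i^{-1}\}$. Split the joint eigenspace into a good region $G$, on which $|z_t|\le\delta$ for all $t$, and its complement $B$. Because $\Delta(p,q)\le \sum_{B}(p+q)$, the contribution of $B$ is at most $\Tr[(\ol{\rho}_a^{(n)}+\ol{\rho}_b^{(n)})\Pi_B]$. By Markov's inequality this is at most $\delta^{-2m}\sum_t\Tr[(\ol{\rho}_a^{(n)}+\ol{\rho}_b^{(n)})\wt{J}_t^{2m}]$ for some $m$ to be chosen. To control this I need a general moment bound. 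Expand $\wt{J}_t^{j}=d^{-j}\sum_{i_1,\dots,i_j<t}(i_1\,t)\cdots(i_j\,t)$. Each word is a permutation $\pi$ with $|\pi|\le j$ and $|\pi|\equiv j \pmod 2$, and it is supported on at most $j+1$ points. Also $\Tr[\ol{\rho}^{(n)}\pi]=\E\prod_{\tau}\Tr(\brho^{\mathrm{len}\tau})$.

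\textbf{Step 2: the moment bound.} For the tilted states I need $\E\prod_\tau \Tr(\brho^{\ell})\lesssim d^{-|\pi|}$, with constants depending only on $j$ and $k$. Since $\ol{\rho}^{(n)}\propto\E[\bX^{\otimes n}]$, I will compute this as $\Tr[\E\bX^{\otimes n}\,\pi]/\Tr[\E\bX^{\otimes n}]$. Using $\bX\preceq \Id$ and $\Tr(\bX^\ell)\le\Tr\bX$, this reduces to a ratio of moments of $\Tr \bX$. By \Cref{prop:prob_tilt} together with Lipschitz concentration, $\Tr\bX$ stays within a constant factor of $d\nu$ except on an event of probability $e^{-\Omega(d)}\nu^{-n}$, which is negligible for $n=o(d^2)$. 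Alternatively, I would bound the ratio directly through the Jucys--Murphy formula of \Cref{prop:f}. Summing over words gives
\[
\sum_t\Tr[\ol{\rho}\,\wt{J}_t^{j}]\lesssim \sum_{s\le j,\ s\equiv j}n^{s+1}d^{-j-s}.
\]
Counting terms: there are about $n^{s+1}$ choices of $t$ and of the support, the prefactor is $d^{-j}$, and the trace factor is $d^{-s}$. Since $n\le d^2$, the dominant term is $n\,(n/d^2)^{?}$. For even $j$ the $s=0$ term gives $n\cdot(n/d^2)^{j/2}$ with identity contributions, so the total is $\lesssim n(n/d^2)^{j/2}+n^{j+1}d^{-2j}\lesssim n(n/d^2)^{j/2}$. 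The bad region can then be made $o(1)$ by taking $m$ large, because $n(n/d^2)^m\to0$ once $m$ is large enough relative to $\gamma$.

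\textbf{Step 3: the good region.} On $G$, write $\ol{\rho}_b=\ol{\rho}_a e^{\lambda_0+H}$ with
\[
H=\beta_k\calP_k+\beta_{k+1}\calP_{k+1}+R,\qquad |R|\le c_k\calP_{k+2},
\]
using \Cref{prop:not}. Then
\[
\frac{(p-q)^2}{p+q}\le p\,\frac{(e^{\lambda_0+H}-1)^2}{1+e^{\lambda_0+H}}.
\]
The scalar $\lambda_0$ is the obstacle: it is a fixed normalization constant, not small a priori. I would eliminate it by optimizing over it, or by showing it is small because $\Tr\ol{\rho}_b=1$ forces $\E_p[e^{\lambda_0+H}]\approx1$ (this is presumably the role of the cited lemma on triangular discrimination). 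The quantity is then bounded by about $\E_p[\min(H^2,1)]$ plus the variance of $H$. The main term is $\E_p[\calP_k^2]$. By Step 2 and the fact that $\calP_k^2\lesssim n\sum_t \wt{J}_t^{2k}$ is too crude, I instead treat the $(\sum_t\wt J_t^k)^2$ cross terms through centered versions. The genuinely nonconstant fluctuation of $\calP_k$ should scale as $n^{k/2+2}/d^{k+4}$ (this is my guess for the variance count). This is $o(1)$ exactly when $n=o(d^{2-4/(k+4)})$, which matches the theorem's threshold.

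The hardest step is this last variance computation. Its mean part must be absorbed into $\lambda_0$, which in turn requires careful combinatorics on cross terms $\wt J_s^k\wt J_t^k$ with overlapping supports. I would first try to compute $\mathrm{Var}_p(\calP_k)$ by classifying pairs of words according to how their supports intersect. Disjoint supports factorize up to $O(1/d^2)$ correlations from the tilt, so the leading contributions come from overlapping pairs.
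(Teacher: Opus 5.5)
Your outer architecture matches the paper's. You reduce the commuting states to a classical problem, bound the triangular discrimination, split off a bad set where some $|\wt{J}_t|>\delta$ using Markov and even Jucys--Murphy moments, and expand the log-ratio on the good set via \Cref{prop:not}. Step~2 reaches the right conclusion, $\sum_t\Tr[\ol{\rho}\,\wt{J}_t^{2m}]\lesssim n(n/d^2)^m$ for $n\le c\,d^2$. Two small corrections there: the number of words with $|\pi|=s$ is $n^{1+(j+s)/2}$, via $2q\le j+|\pi|$ from \Cref{lem:length-of-swap-products}, not $n^{s+1}$. And the bound $\E[(\Tr\bX)^{c}]/\E[(\Tr\bX)^{n}]\le(\E\Tr\bX)^{c-n}$ follows directly from monotonicity of $L_p$ norms, with no concentration needed.

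\textbf{The genuine gap is Step~3, where the whole threshold is decided.} You must show that the non-constant part of $\beta_k\calP_k+\beta_{k+1}\calP_{k+1}$ has $o(1)$ second moment under both $\ol{\rho}_a^{(n)}$ and $\ol{\rho}_b^{(n)}$, and you only guess its size. The guess is also inconsistent. Since $n^{k/2+2}/d^{k+4}=(n/d^2)^{k/2+2}$ is $o(1)$ for every $n=o(d^2)$, it cannot be what produces $d^{2-4/(k+4)}$. The true contributions are larger: already for $k=2$, the diagonal pairs $\sigma=\tau^{-1}$ in the $3$-cycle class sum contribute order $n^3/d^4$. You also never treat the $\calP_{k+1}$ term, even though in the paper's analysis the overlap case $u=c(\mu)+1$ for odd $m=k+1$ sits exactly at the threshold. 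Finally, your plan to center by the true mean would need covariance bounds for disjoint-support permutations under the tilted law ("factorize up to $O(1/d^2)$"), which you do not have.

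The paper closes this in two steps.

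\emph{First}, \Cref{lem:triangular-upper} makes $\lambda_0$ irrelevant. Write $\xi=\tanh(\tfrac12\log(q/p))$, which is $\tfrac12$-Lipschitz in $\log(q/p)$. The identity $\E_\mu\xi=0$ forces $\tanh(\lambda/2)$ to be small, which gives $\E_\mu\xi^2\le C\bigl(\mu(G^c)+\E_\mu[(\log\tfrac qp-\lambda)^2\mathbbm{1}_G]\bigr)$ for every scalar $\lambda$. The paper then takes $\lambda$ to absorb only the identity coefficients of $\calP_k$ and $\calP_{k+1}$, so no true centering or covariance estimate is ever needed.

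\emph{Second}, \Cref{lem:Jucys--Murphy-small-moments} proves $\Tr(\ol{\rho}_e^{(n)}(\calP_m^\circ)^2)=o(1)$ for $n=o(d^{\vartheta_m})$ and $m\in\{k,k+1\}$. It combines three ingredients:
\begin{itemize}
\item the class-sum expansion $d^m\calP_m=\sum_\mu A_{m,\mu}K_\mu$ with $A_{m,\mu}=O(n^{\delta(\mu)})$ (\Cref{lem:power-sum-conjugacy-decomposition});
\item the pointwise bound $\Tr(\ol{\rho}_e^{(n)}\pi)\le(C_0/d)^{|\pi|}$ (\Cref{lem:pointwise-decay});
\item genus-inequality lower bounds on $|\sigma\tau|$ for $\sigma,\tau\in C_\mu$ with support difference $u$, namely $|\sigma\tau|\ge\max\{2,2u-2c(\mu)\}$, improved to $4$ when $u=c(\mu)+1$ and $|\mu|$ is even (\Cref{lem:support-pair-estimates}).
\end{itemize}
Balancing the $O(n^{s(\mu)+u+2\delta(\mu)})$ pair counts against these decays is exactly what yields the exponent $2-4/(k+4)$, and without it Step~3 is not a proof. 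Once this is in place, the remainder is routine with your Step~2: $(R\mathbbm{1}_G)^2\preceq c_k^2\,n\,\calP_{2k+4}$ gives $O(n^{k+4}/d^{2k+4})$.
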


The proof of the above theorem is deferred to \Cref{sec:indistinguishability}. We will use this to prove both \Cref{thm:intro-spec,thm:intro-entropy}. First, we will need the following spectral separation guarantee.

\begin{proposition}
\label{prop:typical-spectral-separation}
Assume that $n=o(d^2)$. There exist sets
$\mathcal T_a^{(d)}$ and $\mathcal T_b^{(d)}$ of spectra and a
constant $\delta_k^{\mathrm{spec}}>0$ such that, for every
$e\in\{a,b\}$,
\begin{equation}
\Pr\left(
\spec(\brho_e)\in\mathcal T_e^{(d)}
\right)
\geq
1-\exp\left(-\Omega_k(d^2)\right),
\label{eq:typical-spectrum-probability}
\end{equation}
and
\begin{equation}
\dtvsorted{\alpha}{\beta}
\geq
\delta_k^{\mathrm{spec}}
\label{eq:typical-spectrum-distance}
\end{equation}
for every $\alpha\in\mathcal T_a^{(d)}$ and
$\beta\in\mathcal T_b^{(d)}$. In particular,
\begin{equation}
\Pr\left(
\dtvsorted{
\spec(\brho_a)
}{
\spec(\brho_b)
}
\geq
\delta_k^{\mathrm{spec}}
\right)
\geq
1-\exp\left(-\Omega_k(d^2)\right).
\label{eq:random-spectrum-separation}
\end{equation}
\end{proposition}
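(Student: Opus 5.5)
We follow the route sketched in the technical overview: concentration of normalized power sums of $\bX_e$ under the untilted law, transfer to the tilted law via \Cref{prop:prob_tilt}, and conversion of a power-sum gap into sorted-TV distance via \Cref{prop:power-sum-to-TV}.

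\emph{Moment proxies and the first unmatched degree.} For $e\in\{a,b\}$ and $j\ge1$, define the moment proxies
\begin{equation}
\Psi_{e,j}=\frac{\E[\tr(\bX_e^j)]}{\E[\tr(\bX_e)]^j}=\nu_e^{-j}\,\E[\tr(\bX_e^j)].
\end{equation}
Here $\E\tr(\bX_e^j)$ is computed from \Cref{prop:f} via \Cref{fact:perm-product}, by pairing $\E[\bX_e^{\otimes j}]$ with a $j$-cycle. Its large-$d$ limit is the $j$th free moment of the product of free projections of ranks $1/e_i$, and it is a polynomial in the power sums $\sum_i e_i^m$, $m\le j$, plus $O(1/d)$ corrections. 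The relevant degree-$j$ term is the coefficient extracted from $\prod_t f_e(\wt J_t)$ on a long cycle; up to $O_k(1/d)$, this is a universal polynomial in $p_m(e)=\sum_i e_i^m$. Since $p_m(a)=p_m(b)$ for $m<k$ and $p_k(a)\ne p_k(b)$ (\Cref{prop:prouhet}), I expect an index $L\le k+1$ with
\begin{equation}
\Psi_{a,j}=\Psi_{b,j}+O_k(1/d)\quad (j<L), \qquad |\Psi_{a,L}-\Psi_{b,L}|\ge c_k>0.
\end{equation}
Establishing this gap is where I expect the real work. The cleanest approach is to compute the free cumulants of the product of free projections and show that the $L$th moment differs by a nonzero multiple of $p_k(a)-p_k(b)$ plus terms that vanish by lower-order matching. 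Failing a direct formula, I would use the leading large-$d$ term of the cycle expansion.

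\emph{Concentration under the untilted law.} The map $(U_1,\dots,U_K)\mapsto \tr(\bX_e^j)$ is $O(j\sqrt{K}/\sqrt d)$-Lipschitz in Hilbert--Schmidt norm, since $\|\bX_e\|\le1$ and the normalized trace costs a factor $1/\sqrt d$. By \cite[Thm.~5.17]{Meckes_2019} (exponent $d u^2/\Lambda^2\sim d^2u^2$),
\begin{equation}
\Pr\bigl[|\tr(\bX_e^j)-\E\tr(\bX_e^j)|\ge u\bigr]\le 2e^{-\Omega_k(d^2u^2)}.
\end{equation}
We apply this for $j=1,\dots,L$ with a small constant $u$.

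For the sorted-TV step we also need an entry bound $\lambda_i(\brho_e)\le B/d$. Since $\|\bX_e\|\le1$, we have $\lambda_i(\brho_e)\le 1/\Tr\bX_e$, so it suffices that $\tr(\bX_e)\ge\nu_e/2$; this follows from the same concentration. Define $\mathcal T_e^{(d)}$ as the set of spectra of $X/\Tr X$ for which all these deviations are at most $u$. On this set,
\begin{equation}
\mathrm{pow}_j(\rho)=\frac{\tr(X^j)}{\tr(X)^j}=\Psi_{e,j}+O_k(u).
\end{equation}

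\emph{Transfer to the tilted law and conclusion.} By \Cref{prop:prob_tilt}, the tilted failure probability is at most
\begin{equation}
\nu_e^{-n}e^{-\Omega_k(d^2u^2)}=\exp\bigl(n\log(1/\nu_e)-\Omega_k(d^2u^2)\bigr)=\exp(-\Omega_k(d^2)),
\end{equation}
using $n=o(d^2)$. For $\alpha\in\mathcal T_a^{(d)}$ and $\beta\in\mathcal T_b^{(d)}$, the gap gives $|\mathrm{pow}_L(\alpha)-\mathrm{pow}_L(\beta)|\ge c_k-O_k(u)-O_k(1/d)\ge c_k/2$ once $u$ is chosen small and $d$ is large. \Cref{prop:power-sum-to-TV} with $B=2/\nu_{\min}$ then yields $\delta_k^{\mathrm{spec}}=c_k/(4LB^{L-1})$. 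The ``in particular'' statement follows by a union bound over $e\in\{a,b\}$.

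Note that the matching of moments below degree $L$ is not actually needed for the separation; only the gap at degree $L$ matters. So the proof reduces entirely to exhibiting some degree at which $\Psi_a$ and $\Psi_b$ differ by a constant. As a fallback, it suffices to show that the limiting spectral distributions differ, which should follow from $f_a\ne f_b$ determining distinct $S$-transforms.
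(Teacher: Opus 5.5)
\textbf{There is a genuine gap: the constant moment gap, which is the crux of the separation, is never established.} The rest of your architecture matches the paper's and is fine. That includes concentration of $\tr(\bX_e^j)$ via \cite{Meckes_2019} with Lipschitz constant $O(j\sqrt{K/d})$, the entry bound from $\tr(\bX_e)\ge\nu_e/2$, transfer to the tilted law through \Cref{prop:prob_tilt} at cost $\nu_e^{-n}=e^{O(n)}$, and \Cref{prop:power-sum-to-TV}. The missing step is the one you flag yourself: you never show that some $\Psi_{a,L}-\Psi_{b,L}$ is bounded away from zero. Knowing that $\lim_d\Psi_{e,j}$ is a universal polynomial in the $p_m(e)$ is not enough. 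You would also need that at degree $k+1$ this polynomial depends on $p_k$ with a nonzero coefficient, and your proposal leaves that as ``I expect''.

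The paper closes this at finite $d$, starting from the formula you already wrote down. With $\zeta_j=(1\,2\,\cdots\,j)$, \Cref{fact:perm-product} and \Cref{prop:f} give $\Psi_j(e)=d^{-1}\Tr\bigl(\zeta_j\prod_{t=1}^j f_e(\wt J_t)\bigr)$ on $(\C^d)^{\otimes j}$. By \Cref{prop:not} and $f_e(0)=1$, $f_a(z)-f_b(z)=-\beta_kz^k+O(z^{k+1})$. Also $\|\wt J_t\|_\infty\le(t-1)/d$, so telescoping the product gives an operator-norm difference $O(d^{-k})$. Since $\|\zeta_j\|_1=d^j$, the proxies differ by $O(d^{j-k-1})=O(1/d)$ for $j\le k$. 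For $j=k+1$ the order-one part is $-\beta_k\sum_{t\le k+1}d^{-1}\Tr(\zeta_{k+1}\wt J_t^k)$. Among the orderings of $(1\,k{+}1),\dots,(k\,k{+}1)$ in $J_{k+1}^k$, exactly one equals $\zeta_{k+1}^{-1}$, and it contributes exactly $1$. Every other product $\pi$ of $k$ transpositions has $\zeta_{k+1}\pi\neq\mathrm{id}$, hence at most $k$ cycles and contribution $O(1/d)$. So $\Psi_{k+1}(a)-\Psi_{k+1}(b)=-\beta_k+O(1/d)$ with $\beta_k\neq0$, i.e.\ $L=k+1$.

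Your $S$-transform fallback also works once made precise. Let $\mu_e$ be the limiting normalized law, that of $\bx_e/\nu_e$, where $\bx_e$ is the free sandwiched product of projections with traces $1/e_i$. Then $S_{\mu_e}(z)=\prod_i\frac{1+z}{1+e_iz}=1/f_e(z)$, so $\log S_{\mu_a}-\log S_{\mu_b}=h(z)=\beta_kz^k+O(z^{k+1})$. This forces equal moments through degree $k$ and a gap of $-\beta_k$ at degree $k+1$. That route, however, additionally needs $\Psi_{e,j}\to\int t^j\,d\mu_e$ from asymptotic freeness (as in \Cref{lem:free-limit}). The Jucys--Murphy computation is self-contained and gives explicit $O(1/d)$ errors.
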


The proof of this spectral separation is deferred to \Cref{sec:spectral-separation}. Given \Cref{thm:prp-statistical-indistinguishability,prop:typical-spectral-separation}, our spectrum estimation lower bound follows immediately.

\begin{proof}[Proof of \Cref{thm:intro-spec}]
    Suppose we have an algorithm for spectrum estimation to within precision $\delta_k^{\spec}/3$ (defined as in \Cref{prop:typical-spectral-separation}), which succeeds with probability at least $\frac23$. By \Cref{prop:typical-spectral-separation}, this can be used to distinguish between $\brho_a$ and $\brho_b$ by simply choosing the closer of the two sets $\calT_a^{(d)}$ and $\calT_b^{(d)}$. Such a test fails only if either the spectrum estimation algorithm fails or if $\spec(\brho_e) \notin \calT_e^{(d)}$. Thus, by a union bound, the total failure probability is at most $\frac13 + \exp(-\Omega_k(d^2)) \leq \frac25$ for $d$ sufficiently large. Consequently, this tester can distinguish between $\brho_a$ and $\brho_b$ with probability at least $\frac35$; by \Cref{thm:prp-statistical-indistinguishability}, this must have the claimed copy complexity. 
    
\end{proof}

Next, we will show that states from these ensembles also have separated entropies.

\begin{proposition}
\label{prop:typical-entropy-separation}

Let $N_p$ be the depolarizing channel, i.e., $N_p(\rho) \triangleq p\rho + (1-p)\frac{\mathbbm{1}}{d}$. Assume that $n=o(d^2)$. There exists a constant
$p_k\in(0,1)$, sets $\mathcal U_a^{(d)}$ and
$\mathcal U_b^{(d)}$ of states, and a constant
$\delta_k^{\mathrm{ent}}>0$ such that, for every $e\in\{a,b\}$,
\begin{equation}
\Pr\left(
\brho_e\in\mathcal U_e^{(d)}
\right)
\geq
1-\exp\left(-\Omega_k(d^2)\right),
\label{eq:typical-entropy-probability}
\end{equation}
and
\begin{equation}
\left|
S\left(\mathcal{N}_{p_k}(\rho_a)\right)
-
S\left(\mathcal{N}_{p_k}(\rho_b)\right)
\right|
\geq
\delta_k^{\mathrm{ent}}
\label{eq:typical-entropy-separation}
\end{equation}
for every $\rho_a\in\mathcal U_a^{(d)}$ and
$\rho_b\in\mathcal U_b^{(d)}$. In particular,
\begin{equation}
\Pr\left(
\left|
S\left(\mathcal{N}_{p_k}(\brho_a)\right)
-
S\left(\mathcal{N}_{p_k}(\brho_b)\right)
\right|
\geq
\delta_k^{\mathrm{ent}}
\right)
\geq
1-\exp\left(-\Omega_k(d^2)\right).
\label{eq:random-entropy-separation}
\end{equation}
\end{proposition}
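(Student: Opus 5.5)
The plan is to reuse the concentration machinery behind \Cref{prop:typical-spectral-separation}: define $\mathcal U_e^{(d)}$ as the set of states whose normalized power sums $\mathrm{pow}_j(\rho)$, $j\le L$, lie within a small tolerance $\eta$ of the moment proxies $\Psi_{e,j}$. We also require the largest eigenvalue of $\rho$ to be at most $B/d$ for a constant $B=B_k$. For the untilted law, each $F_j(\bU_1,\dots,\bU_K)=\tr(\bX_e^j)$ is Lipschitz with constant $O_k(1/\sqrt d)$ in the Hilbert--Schmidt metric, since $0\preceq \bX_e\preceq \Id$. Meckes' concentration bound (Theorem~5.17) therefore gives deviations of size $\eta$ with probability $\exp(-\Omega(d^2\eta^2))$, and the same holds for $\tr(\bX_e)$. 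This controls the ratio $\tr(\bX_e^j)/\tr(\bX_e)^j=\mathrm{pow}_j(\brho_e)$, using $\E\tr(\bX_e)=\nu_e=\Theta_k(1)$. By \Cref{prop:prob_tilt}, the tilted probability of the bad event is at most $\nu_e^{-n}e^{-\Omega(d^2\eta^2)}$. Since $n=o(d^2)$, this is still $\exp(-\Omega_k(d^2))$, which establishes \Cref{eq:typical-entropy-probability}. For the bound on the top eigenvalue, I would control $\|\bX_e\|_{\mathrm{op}}\le 1$ together with $\Tr(\bX_e)\ge d\nu_e/2$ with high probability, which gives eigenvalues of $\brho_e$ at most $2/(\nu_e d)$.

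Next I would check that $\Psi_{e,j}$ match. By \Cref{prop:f} with $n=j$, $\E\tr(\bX^j)=\nu^j\cdot\frac1d\Tr(\gamma_j\prod_t f(\wt J_t))$ for a $j$-cycle $\gamma_j$. Expanding $\prod_t f_e(\wt J_t)=\exp(\sum_t \log f_e(\wt J_t))$ as a series in $1/d$, the leading ($d\to\infty$) term is a polynomial in the coefficients of $\log f_e$. Those coefficients agree for $a,b$ up to degree $k-1$ and differ at degree $k$. So $\Psi_{a,j}-\Psi_{b,j}=O_k(1/d)$ for $j\le k$, while at $j=k+1$ the leading term differs by a nonzero multiple of $\beta_k$. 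Here I expect the relevant free-probability quantity, the $(k+1)$th free cumulant, to be essentially $\sum_i a_i^k-\sum_i b_i^k$ up to sign. I take this first mismatched moment as given from the machinery of \Cref{sec:spectral-separation}: $\lim_d|\Psi_{a,L}-\Psi_{b,L}|=c_k>0$ at $L=k+1$, with all lower ones matching asymptotically.

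For the entropy separation, write $\sigma=\mathcal N_p(\rho)$ with eigenvalues $\frac{1}{d}(1-p+p\,x_i)$, where $x_i=d\lambda_i\in[0,B]$. Then
\[
S(\sigma)=\log d-\frac1d\sum_i \phi(1-p+p x_i),\qquad \phi(u)=u\log u .
\]
Expand $\phi$ around $u=1$: with $u=1+p(x_i-1)$, we have $\phi(u)=\sum_{m\ge1}c_m p^m (x_i-1)^m$, and the series converges absolutely because $p(B+1)<1$ for small $p=p_k$. Each $\frac1d\sum_i (x_i-1)^m$ is a fixed linear combination of $\mathrm{pow}_1,\dots,\mathrm{pow}_m$. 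Therefore
\[
S(\mathcal N_p(\rho_a))-S(\mathcal N_p(\rho_b))=\sum_m c_m p^m\sum_{j\le m}\binom{m}{j}(-1)^{m-j}\bigl(\mathrm{pow}_j(\rho_b)-\mathrm{pow}_j(\rho_a)\bigr).
\]
Terms with $m<L$ are $O(\eta+1/d)$ on the typical sets. The $m=L$ term has magnitude at least $|c_L|p^L(c_k-2\eta)$, with $c_L=(-1)^L/(L(L-1))\ne0$. The tail $m>L$ is bounded by $\sum_{m>L}|c_m|p^m(B+1)^m=O_k(p^{L+1})$. Choosing $p_k$ small enough that this tail is below half the main term, and then $\eta$ small relative to $p_k^L$, gives $\delta^{\mathrm{ent}}_k\asymp p_k^L c_k$.

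The main obstacle is controlling the power sums in the tilted law uniformly and rigorously. Ratios of Lipschitz functions require $\Tr(\bX_e)$ to be bounded below, which I would handle by intersecting with the high-probability event above. A further difficulty is establishing that the first mismatched proxy moment is genuinely nonzero in the limit. The $O(1/d)$ corrections from the non-leading terms in the Jucys--Murphy expansion are harmless once $d$ is large.
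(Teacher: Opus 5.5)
Your proposal is correct and follows essentially the same route as the paper. It defines typical sets by the operator-norm bound $B_k/d$ together with closeness of $\mathrm{pow}_j$ to $\Psi_j(e)$ for $j\le k+1$. It obtains their probability from Lipschitz concentration plus the tilt transfer of \Cref{prop:prob_tilt}, takes the first-mismatch input from \Cref{lem:first-spectral-gap}, and uses the centered-moment expansion of $S(\mathcal N_p(\rho))$, choosing $p_k$ before $\eta_k$ and then taking $d$ large. The only differences are cosmetic. Your convergence condition $p(B+1)<1$ is slightly looser than the paper's $p\max\{1,B-1\}<1$. In the degree-$(k+1)$ term, the lower-order power sums add an $O(\eta+d^{-1})$ error beyond your stated $2\eta$, which is absorbed by the same choice of constants.
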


The proof of this entropy separation is deferred to \Cref{sec:entropy-separation}. Note that the separation does not hold for the original ensembles, but only after the states undergo a depolarizing channel; however, by the data-processing inequality, these states can only be more indistinguishable. Thus, one can obtain a proof of  \Cref{thm:intro-entropy} that is nearly identical to that of \Cref{thm:intro-spec}, except with \Cref{prop:typical-entropy-separation} instead of \Cref{prop:typical-spectral-separation}, which we omit.

Lastly, we show the desired rank separation.

\begin{proposition}[Rank separation]
\label{prop:rank-separation}
Assume $n=o(d^2)$ and set $r = d/2^k$. There is a constant $\delta^{\mathrm{rank}}>0$ such that
\begin{equation}
\label{eq:yes-rank}
\operatorname{rank}(\brho_a)=r
\qquad\text{almost surely},
\end{equation}
and except with probability at most $\exp(-\Omega(d^2))$, it holds that $\Dtr{\brho_b}{\sigma} \geq \delta^{\mathrm{rank}}$ for all $d$-dimensional states $\sigma$ of rank at most~$r$.
\end{proposition}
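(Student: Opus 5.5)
We take the label $a$ to be the sequence from \Cref{prop:prouhet} (after the shift by $1$) that contains the entry $2^k$. This entry occurs in exactly one of the two Thue--Morse classes, so $\max_i a_i = 2^k$, while every $b_i\leq 2^k-1$. The two claims of the proposition are then a rank statement for $\brho_a$ and a tail-mass statement for $\brho_b$, and we prove them separately.

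\emph{Rank of $\brho_a$.} We have $\operatorname{rank}(\bX_a)=\operatorname{rank}(\bW)$ with $\bW=\bPi_K\cdots\bPi_1$. We show by induction on $j$ that $\operatorname{rank}(\bPi_j\cdots\bPi_1)=\min_{i\leq j} d/a_i$ almost surely. Conditioned on $V=\operatorname{im}(\bPi_{j-1}\cdots\bPi_1)$, the rank of $\bPi_j|_V$ is $\min(\dim V,\operatorname{rank}\bPi_j)$ unless the Haar-random subspace $\operatorname{im}\bPi_j$ lies in a proper algebraic subvariety of the Grassmannian. That subvariety is a non-degenerate-intersection condition and has measure zero. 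Hence $\operatorname{rank}(\bX_a)=d/2^k=r$ almost surely. The tilted law of \Cref{def:normalize} is absolutely continuous with respect to the untilted one, so the statement also holds for $\wt{\bX}_a$ and for $\brho_a$.

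\emph{Distance of $\brho_b$ from rank-$r$ states.} Exactly as in the proof of \Cref{lem:warmup-separation},
\[
\inf_{\operatorname{rank}\sigma\leq r}\Dtr{\brho_b}{\sigma}=\sum_{i>r}\lambda_i(\brho_b).
\]
It therefore suffices to find $\eta,c>0$ (depending only on $k$) such that, except with probability $\exp(-\Omega(d^2))$, two things hold. First, at least $(\tfrac{1}{b_{\max}}-2^{-k}-c)\,d$ eigenvalues of $\bX_b$ are at least $\eta$; here $\tfrac1{b_{\max}}-2^{-k}\geq \tfrac{1}{2^k-1}-\tfrac1{2^k}>0$. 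Second, $\Tr\bX_b\leq 2\nu_b d$. Given both, at least $(\tfrac1{b_{\max}}-2^{-k}-c)d$ eigenvalues of size at least $\eta$ lie outside the top $r$. Their mass in $\brho_b$ is then at least $\eta(\tfrac1{b_{\max}}-2^{-k}-c)/(2\nu_b)$, which we take as $\delta^{\mathrm{rank}}$.

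To establish these two events we work under the untilted law and transfer to the tilted law with \Cref{prop:prob_tilt}. The transfer costs a factor $\nu_b^{-n}=e^{O(n)}=e^{o(d^2)}$, which is absorbed because $n=o(d^2)$. For the trace, $\tr(\bX_b)$ is a Lipschitz function of $(\bU_1,\dots,\bU_K)$ with mean $\nu_b$, so the concentration bound of \cite[Theorem~5.17]{Meckes_2019} quoted in the overview gives an $\exp(-\Omega(d^2))$ tail. For the eigenvalue count, we use the Lipschitz test function $g_\eta(x)=\min(1,x/\eta)$. Then $F=\tr\, g_\eta(\bX_b)$ is $O(1/\eta)$-Lipschitz in the unitaries, because $\bX_b$ is Lipschitz in them and $g_\eta$ is Lipschitz, via a Hoffman--Wielandt-type bound. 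Hence $F$ concentrates at scale $u$ with probability $1-\exp(-\Omega(d^2u^2\eta^2))$. Moreover $\tr\,\mathbb I[\bX_b\geq\eta]\geq 2\tr\,g_{2\eta}(\bX_b)-1$ is too lossy to use directly, so instead we bound the count below by $\tr\,g_\eta(\bX_{b})-\tr\,\mathbb I[0<\bX_b<\eta]$. Equivalently, it is cleaner to run the argument with $g$ replaced by a piecewise-linear function equal to $0$ on $[0,\eta]$ and $1$ on $[2\eta,1]$.

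\emph{Main obstacle.} The remaining, genuinely analytic step is to show that $\E\,\tr\,g(\bX_b)\geq \tfrac1{b_{\max}}-c$ for a suitable dimension-free $\eta$, i.e.\ that the nonzero part of the spectrum does not pile up near $0$. Our plan is as follows. By asymptotic freeness of independent Haar projections, the empirical spectral measure of $\bX_b$ converges, in expectation against continuous test functions, to the free multiplicative convolution $\mu_b$ of Bernoulli$(1/b_i)$ laws. By the almost-sure rank computation above, $\mu_b$ has an atom at $0$ of mass exactly $1-\tfrac{1}{b_{\max}}$. Since $\mu_b((0,2\eta))\to0$ as $\eta\to0$ by continuity of measure, we can choose $\eta$ with $\mu_b([2\eta,1])\geq \tfrac1{b_{\max}}-c/2$. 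Convergence of $\E\,\tr\,g(\bX_b)$ then gives the bound for $d$ large. If invoking free probability is undesirable, a fallback is to lower-bound the smallest nonzero singular value of $\bW$ in a constant fraction of directions using moments $\tr(\bX_b^j)$. That route is less clean, and we would try the free-probability limit first.
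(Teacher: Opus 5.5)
\textbf{There is a genuine gap, at the step you yourself call the main obstacle.} The rest of your plan matches the paper's. That includes the general-position rank count for $\brho_a$, the tail-mass formula for the distance to rank-$r$ states, concentration under the untilted law followed by the $\nu_b^{-n}$ transfer of \Cref{prop:prob_tilt}, and asymptotic freeness to identify the limiting spectral law. Replacing the paper's polynomial approximation of a cutoff with a Lipschitz cutoff and Hoffman--Wielandt concentration is a legitimate variation.

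You assert that, ``by the almost-sure rank computation above,'' the limit $\mu_b$ has an atom at $0$ of mass exactly $1-1/b_{\max}$. The finite-$d$ rank count only gives this inequality in the wrong direction. Since $\{0\}$ is closed, weak convergence yields $\mu_b(\{0\})\geq\limsup_d \frac1d\dim\ker\bX_b=1-1/b_{\max}$. Nothing in that count stops the $d/b_{\max}$ nonzero eigenvalues of $\bX_b$ from drifting to $0$ as $d\to\infty$ and enlarging the limiting atom. For comparison, $(1-\frac1b)\delta_0+\frac1b\delta_{1/d}\Rightarrow\delta_0$. The bound you actually need, $\mu_b(\{0\})\leq 1-1/b_{\max}$, is exactly the ``no pile-up near $0$'' statement you set out to prove. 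So the argument is circular here, and continuity of measure in $\eta$ only helps once the atom is already known.

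\textbf{The missing ingredient} is a computation of the atom that is intrinsically free-probabilistic. The paper uses Belinschi's theorem: $(\mu\boxtimes\nu)(\{0\})=\max\{\mu(\{0\}),\nu(\{0\})\}$ for probability measures on $[0,\infty)$. It iterates this over the Bernoulli$(1/b_i)$ laws, whose free multiplicative convolution is the limit law of $\bX_b$. An equivalent input is the general-position fact $\tau(p\wedge q)=\max\{0,\tau(p)+\tau(q)-1\}$ for free projections. You would apply it iteratively to the range projection of $p_{j-1}\cdots p_1$ and the kernel of $p_j$, and use faithfulness of $\tau$. With this fact inserted, your argument closes.

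\textbf{Two smaller fixes.}
\begin{enumerate}
\item Your first event should require at least $(1/b_{\max}-c)d$ eigenvalues $\geq\eta$, as in your later paragraph, not $(1/b_{\max}-2^{-k}-c)d$. The smaller number of eigenvalues could all lie among the top $r$.
\item $\tr\,g(\bX_b)$ is $O(1/(\eta\sqrt d))$-Lipschitz in the unitaries, not $O(1/\eta)$. Your stated tail $\exp(-\Omega(d^2u^2\eta^2))$ is the one that corresponds to the correct constant.
\end{enumerate}
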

The proof of this is deferred to \Cref{sec:rank-separation}.  For the specific value $\beta = 1/2^k$, \Cref{thm:intro-rank} immediately follows, as with \Cref{thm:intro-spec}.  One can then obtain any other constant $0 < \beta < 1$ by a suitable simple padding of the dimension~$d$.

\section{Statistical Indistinguishability}
\label{sec:indistinguishability}

The main result of this section is \Cref{thm:prp-statistical-indistinguishability}, which shows that our mixtures are statistically indistinguishable. To prove indistinguishability, we will appeal to the triangular discrimination.

\begin{definition}[Triangular Discrimination]
    \label{def:triangular-discrimination}
    Given two discrete distributions $p = (p_y),q = (q_y)$ on some finite domain $\Omega$, define
    \begin{equation}
        \mu_y \coloneqq \frac{p_y + q_y}{2}, \quad \xi_y \coloneqq \frac{q_y - p_y}{q_y + p_y},
    \end{equation}
    setting $\xi_y = 0$ when $p_y = q_y = 0$. Then, the triangular discrimination between $p$ and $q$ is given by 
    \begin{equation}
        \Ex_{y \sim \mu} \xi_y^2 = \frac{1}{2}\sum_y \frac{(q_y-p_y)^2}{p_y + q_y},
    \end{equation}
    and $\xi_y$ satisfies
    \begin{equation}
    \label{eq:dtv-triangular-discrimination}
        \Ex_{y \sim \mu} \xi_y = 0, \quad \Ex_{y \sim \mu} |\xi_y| = \dtv{p}{q}.
    \end{equation} 
\end{definition}

The triangular discrimination is an $f$-divergence, and is equivalent up to constant factors to the Hellinger squared distance as well as the Jensen-Shannon $\chi^2$-divergence. We will relate this triangular discrimination to the log-likelihood ratio using the following elementary lemma.

\begin{lemma}
\label{lem:triangular-upper}
    Let $G \subseteq \Omega$ be a set on which $p_y,q_y > 0$. Fix some $\lambda \in \mathbb{R}$, and define
    \begin{equation}
       \mathrm{bad}(G) \coloneqq \mu(G^c), \quad\mathrm{dev}_{\lambda,G} \coloneqq \Ex_\mu \left[\left(\log \frac{q_y}{p_y} - \lambda\right)^2 \mathbbm{1}_G\right].
    \end{equation}
    Then,
    \begin{equation}
        \Ex_\mu \xi^2 \leq C(\mathrm{bad}(G) +\mathrm{dev}_{\lambda,G}),
    \end{equation}
    for some absolute constant $C > 0$.
\end{lemma}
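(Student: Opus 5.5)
We will prove the lemma by splitting the expectation $\Ex_\mu \xi^2$ over $G$ and $G^c$, and then showing that on $G$ the triangular weight $\xi_y^2$ is controlled by the squared log-ratio after an optimal shift. Writing $\ell_y \coloneqq \log(q_y/p_y)$, we have on $G$ that $\xi_y = \tanh(\ell_y/2)$. Since $|\xi_y|\le 1$ everywhere, the contribution from $G^c$ is at most $\mu(G^c) = \mathrm{bad}(G)$. It therefore remains to show
\begin{equation}
\Ex_\mu[\xi^2 \mathbbm{1}_G] \le C\bigl(\mathrm{bad}(G) + \mathrm{dev}_{\lambda,G}\bigr).
\end{equation}

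The difficulty is the shift $\lambda$: $\tanh(\ell/2)^2 \le \ell^2/4$ only handles $\lambda=0$. To remove $\lambda$ we use the normalization constraint, namely the identity $\Ex_\mu \xi = 0$ from \Cref{eq:dtv-triangular-discrimination}. Set $t \coloneqq \tanh(\lambda/2)\in(-1,1)$ and $g(\ell)\coloneqq\tanh(\ell/2)$, a $1/2$-Lipschitz function bounded by $1$. On $G$ this gives
\begin{equation}
\xi_y^2 \le 2(\xi_y - t)^2 + 2t^2 \le \tfrac12(\ell_y-\lambda)^2 + 2t^2.
\end{equation}
Hence
\begin{equation}
\Ex_\mu[\xi^2\mathbbm{1}_G] \le \tfrac12 \mathrm{dev}_{\lambda,G} + 2t^2\mu(G),
\end{equation}
and the whole task reduces to bounding $t^2$.

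For this we use $0=\Ex_\mu\xi = \Ex_\mu[\xi\mathbbm{1}_G] + \Ex_\mu[\xi\mathbbm{1}_{G^c}]$. It follows that
\begin{equation}
|t|\,\mu(G) \le \Bigl|\Ex_\mu[(\xi - t)\mathbbm{1}_G]\Bigr| + \mathrm{bad}(G) \le \tfrac12\sqrt{\mu(G)\,\mathrm{dev}_{\lambda,G}} + \mathrm{bad}(G),
\end{equation}
where the last step is Cauchy--Schwarz combined with the Lipschitz bound. We then split into cases.
\begin{itemize}
\item If $\mu(G)\ge 1/2$, squaring yields $t^2\mu(G) \le C'(\mathrm{dev}_{\lambda,G} + \mathrm{bad}(G)^2/\mu(G)) \le C''(\mathrm{dev}_{\lambda,G}+\mathrm{bad}(G))$, using $\mathrm{bad}\le 1$.
\item If $\mu(G)<1/2$, then $\mathrm{bad}(G)>1/2$. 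The claim is then trivial because $\Ex_\mu\xi^2\le 1\le 2\,\mathrm{bad}(G)$.
\end{itemize}
Combining the two cases with the $G^c$ contribution gives the lemma with an absolute constant $C$.

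The main obstacle is the centering step, i.e.\ converting an arbitrary shift $\lambda$ into a bound. It is resolved by the mean-zero property of $\xi$ together with the fact that $\tanh$ compresses $\lambda$ into $(-1,1)$. The remaining work is routine bookkeeping of constants.
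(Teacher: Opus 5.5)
Your proposal is correct and is essentially the paper's proof. The paper likewise writes $\xi_y=\phi(\log(q_y/p_y))$ with $\phi(x)=\frac{e^x-1}{e^x+1}=\tanh(x/2)$ and uses $\phi'\le 1/2$ to compare $\xi$ with the center $\phi(\lambda)$ on $G$; it then bounds that center via $\mathbb{E}_\mu\xi=0$, the triangle inequality and Cauchy--Schwarz, and splits on whether $\mathrm{bad}(G)\ge 1/2$, with only cosmetic differences in the constants (the paper ends with $\frac52\,\mathrm{dev}_{\lambda,G}+9\,\mathrm{bad}(G)$).
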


\begin{proof}
    Define $\phi(x) \coloneqq \frac{e^x-1}{e^x+1}$ and $a \coloneqq \phi(\lambda)$. One can rewrite $\xi_y = \phi(\log(q_y/p_y))$. Note that $\phi^\prime(x) = \frac{2e^x}{(e^x+1)^2} \leq \frac12$. Consequently,
    \begin{equation}
        E \coloneqq \Ex_\mu [(\xi - a)^2 \mathbbm{1}_G] = \Ex_\mu [(\phi(\log(q_y/p_y)) - \phi(\lambda))^2 \mathbbm{1}_G] \leq \frac14 \Ex_\mu [(\log(q_y/p_y) -\lambda)^2 \mathbbm{1}_G] = \frac14\mathrm{dev}_{\lambda,G}.
    \end{equation}
    Let $W =  \mu(G) = 1 -\mathrm{bad}(G).$ Recall that $\Ex_\mu \xi = 0$. Thus,
    \begin{align}
        |a|W &= |\Ex_\mu[a \mathbbm{1}_G]| = |\Ex_\mu[a\mathbbm{1}_G - \xi]| = |\Ex_\mu[(a-\xi)\mathbbm{1}_G] - \Ex_\mu[\xi \mathbbm{1}_{G^c}]| \\&\leq |\Ex_\mu[(a-\xi)\mathbbm{1}_G]| + |\Ex_\mu[\mathbbm{1}_{G^c}]| \leq \sqrt{WE} +\mathrm{bad}(G),
    \end{align}
    where we used the triangle inequality and then Cauchy--Schwarz.

    Now, if we had $\mathrm{bad}(G) \geq \frac12$, the claimed bound would be immediate for $C = 2$ as $|\xi| \leq 1$. So, assume $\mathrm{bad}(G) \leq \frac12$, i.e., $W \geq \frac12$. Thus,
    \begin{equation}
        a^2 W \leq 2a^2W^2 \leq 4 W E + 4\mathrm{bad}(G)^2 \leq\mathrm{dev}_{\lambda,G} + 4\mathrm{bad}(G),
    \end{equation}
    as $W,\mathrm{bad}(G) \leq 1$ and $E \leq \frac14\mathrm{dev}_{\lambda,G}$. Finally, we have
    \begin{align}
        \Ex_\mu [\xi^2] = \Ex_\mu [\xi^2 \mathbbm{1}_G] + \Ex_\mu [\xi^2 \mathbbm{1}_{G^c}] &\leq 2 \Ex_\mu[(\xi - a)^2 \mathbbm{1}_G] + 2 \Ex[a^2 \mathbbm{1}_G] +\mathrm{bad}(G) \\&= 2E + 2a^2W +\mathrm{bad}(G) \\&\leq \frac52\mathrm{dev}_{\lambda,G} + 9\,\mathrm{bad}(G),
    \end{align}
    as desired.
\end{proof}

The above lemma allows us to split up the triangular discrimination into two components: 1) the deviation of the log-likelihood from a fixed scalar on the set where this deviation is sufficiently small, 2) a worst-case contribution when the deviation is large, but this occurs with small probability. Now, to handle the former case, \Cref{prop:f,prop:not} motivate us to bound high-order power sums of Jucys--Murphy elements. 

For the $j$th power-sum, let $\calP_j^{\circ} \coloneqq \calP_j - \eta_j \mathbbm{1}$, where $\eta_j$ is the component of $\calP_j$ associated with the identity permutation, i.e., we have removed all identity terms from $\calP_j$ to obtain $\calP_j^{\circ}$. Then, we show the following upper bound:

\begin{lemma}
    \label{lem:Jucys--Murphy-small-moments}
    For fixed $m \geq 2$, set
    \begin{equation}
        \vartheta_m \coloneqq 2 - \frac{2}{\lfloor m/2 \rfloor + 2}.
    \end{equation}
    Then, for $e \in \{a,b\}$,
    \begin{equation}
        \Tr(\ol{\rho}_e^{(n)} (\calP_m^\circ)^2) = o(1),
    \end{equation}
    whenever $n = o(d^{\vartheta_m})$.
\end{lemma}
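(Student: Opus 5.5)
We expand $\calP_m^\circ$ in the permutation basis and bound the second moment term by term. Writing $\wt J_t^m = d^{-m}\sum_{i_1,\dots,i_m<t}(i_1\,t)(i_2\,t)\cdots(i_m\,t)$, each word is a permutation supported on $\{i_1,\dots,i_m,t\}$. We remove the identity terms and square, so $(\calP_m^\circ)^2 = d^{-2m}\sum w\,w'$ over pairs of non-identity words $w,w'$ attached to indices $t,t'$. Each product $\sigma = ww'$ is a permutation. We then use
\begin{equation}
\Tr(\ol\rho_e^{(n)}\sigma) = \frac{\E[\Tr(\wt\bX_e)^{n-|\supp\sigma|}\prod_{\tau\in\cyc(\sigma),\,\mathrm{len}(\tau)\ge2}\Tr(\wt\bX_e^{\mathrm{len}\tau})]}{\E[\Tr(\wt\bX_e)^n]},
\end{equation}
or more directly $\Tr(\ol\rho_e^{(n)}\sigma)=\E\prod_{\tau}\Tr(\brho_e^{\mathrm{len}\tau})$ by \Cref{fact:perm-product}. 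Since $\brho_e$ has eigenvalues at most $1/\mathrm{rank}$ on a typical event, and we want a crude deterministic bound anyway, we use $\Tr(\brho^\ell)\le \|\brho\|_\infty^{\ell-1}$. This gives $\Tr(\ol\rho^{(n)}_e\sigma)\le \E\|\brho_e\|_\infty^{|\sigma|}$. We must show that $\|\brho_e\|_\infty\le B/d$ with overwhelming probability under the tilted law. We get this from \Cref{prop:prob_tilt} together with Haar concentration: $\Tr(\bX_e)\ge d\nu_e/2$ except with probability $e^{-\Omega(d^2)}$, and $\|\bX_e\|\le1$. On the complement, $e^{-\Omega(d^2)}\nu_e^{-n}=o(1)\cdot d^{-\text{anything}}$ because $n=o(d^2)$. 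So effectively $\Tr(\ol\rho_e^{(n)}\sigma)\lesssim_k (B/d)^{|\sigma|}$, up to negligible error.

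The counting step follows. A pair $(w,w')$ uses at most $2m+2$ distinct indices. Let $v$ be the number of distinct indices in $\{i_1,\dots,i_m,t,i'_1,\dots,i'_m,t'\}$, so the number of such tuples is $O(n^v)$. The weight of the pair is $d^{-2m}\cdot d^{-|ww'|}$. We need, for every admissible pattern,
\begin{equation}
n^{v}\, d^{-2m-|ww'|}=o(1)\quad\text{when } n=o(d^{\vartheta_m}).
\end{equation}
Two constraints drive this. First, a word $w$ with $s$ distinct letters has $|w|\ge s$ (it moves $s+1$ points, connected through $t$), and it is non-identity. Second, if $w$ and $w'$ share no index, then $ww'$ is a product of disjoint permutations, and $\Tr(\ol\rho\,ww')$ nearly factorizes. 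The disjoint terms are therefore the dangerous ones. Here the subtraction of the identity component alone is not enough, so we exploit that $\Tr(\ol\rho w)\Tr(\ol\rho w')$ is tiny by the one-word bound. The variance-type cancellation is not needed: we bound the full sum directly. The exponent $\vartheta_m$ comes out of optimizing $v\vartheta_m \le 2m+|ww'|$ over patterns. In the worst case, a word with $s$ letters and $m$ multiplications reaches parity-matched Cayley length $|w|\ge s$ with $|w|\equiv m \pmod 2$. Combining two disjoint such words with $s=\lfloor m/2\rfloor+1$ letters each gives $v=2s+2$, weight $d^{-2m-2s'}$. For a single word this yields $n^{s+1}d^{-m-|w|}$, optimized when $m$ transpositions mostly cancel. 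I would enumerate via the bound $|w|\ge \max(s,\,m-2(m-s))$ and $|w|\equiv m$ mod 2, and verify that $\vartheta_m$ is exactly the threshold.

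The main obstacle is the combinatorial accounting for overlapping pairs, where $|ww'|$ can be much smaller than $|w|+|w'|$ because of cancellation. I would handle this by a graph argument. Consider the multigraph on indices with an edge $\{i,t\}$ for each transposition. The $2m$ edges of $w$ and $w'$ form a connected-per-component graph with $v$ vertices, so it has at least $v-c$ edges, where $c$ is the number of components ($c\le2$). A product of $2m$ transpositions with Cayley length $\ell$ satisfies $\ell\equiv 2m \pmod 2$, and every edge not "used" must appear in cancelling pairs. This gives roughly $2m\ge 2(v-c)-\ell$. Plugging this in yields $v\le m+c+\ell/2$. The required inequality $(m+c+\ell/2)\vartheta_m<2m+\ell$ then holds for $\ell\ge1$, $c\le2$, and also covers $\ell=0$, where $v\le m+2$ and the weight is $d^{-2m}$. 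Checking that $\vartheta_m=2-2/(\lfloor m/2\rfloor+2)$ is exactly the binding exponent in this case analysis is the step I expect to need the most care.
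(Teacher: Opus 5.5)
Your route is different from the paper's, and it can be made to work. The paper writes $d^m\calP_m=\sum_\mu A_{m,\mu}K_\mu$ with $A_{m,\mu}=O(n^{\delta(\mu)})$ and reduces to $\Tr(\overline{\rho}_e^{(n)}K_\mu^2)$ class by class. It then bounds $|\sigma\tau|$ for pairs in $C_\mu$ using the genus inequality, with a case split on $u=|\mathrm{supp}\,\sigma\setminus\mathrm{supp}\,\tau|$. You stay at the level of words and use a single merge/split inequality on the union multigraph, $v\le m+c+\ell/2$, where $\ell=|ww'|$ and $c\in\{1,2\}$ is the number of components. That inequality is correct: every component merge forces a cycle merge, so $2m=M+S=2M-\ell\ge 2(v-c)-\ell$. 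It is a cleaner bookkeeping device than the paper's case analysis. Your decay estimate $\Tr(\overline{\rho}_e^{(n)}\sigma)\le (B/d)^{|\sigma|}+e^{-\Omega(d^2)}$, obtained via the tilt and Haar concentration, is also valid, because the additive error is swamped by the polynomially many terms. The paper's deterministic Lyapunov bound $\Tr(\pi\overline{\rho}_e^{(n)})\le (C_0/d)^{|\pi|}$ is simpler and needs no concentration. Your remarks about near-factorization are unnecessary: for index-disjoint $w,w'$ one has $|ww'|=|w|+|w'|$, and the pointwise bound already suffices.

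The gap is in the final accounting, which is exactly where $\vartheta_m$ gets determined.
\begin{itemize}
\item The claim ``$|w|\ge s$'' is false. For example, $(1\,t)(2\,t)(1\,t)=(1\,2)$ has two letters but $|w|=1$. The correct bound is $|w|\ge 2s-m$, and your $\max(s,2s-m)$ equals $s$, so it repeats the error.
\item The inequality $(m+c+\ell/2)\vartheta_m<2m+\ell$ does not hold for all $\ell\ge 1$, $c\le 2$. It fails for even $m$ with $c=2$, $\ell=2$.
\item Your treatment of $\ell=0$ with $v\le m+2$ gives terms $n^{m+2}d^{-2m}$. These are not $o(1)$ throughout $n=o(d^{\vartheta_m})$; for $m=2$ this would require $n=o(d)$ rather than $o(d^{4/3})$.
\end{itemize}

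The missing observation is the following. Each word is a connected star, so $c=2$ holds if and only if $w$ and $w'$ are index-disjoint.
\begin{itemize}
\item In the disjoint case, $\ell=|w|+|w'|$. Both words are non-identity with $|w|\equiv m \pmod 2$, so $\ell\ge 4$ for even $m$ and $\ell\ge 2$ for odd $m$. The terms $n^{m+2+\ell/2}d^{-2m-\ell}$ at this minimal $\ell$ are $o(1)$ exactly when $n=o(d^{2-4/(m+4)})$, respectively $n=o(d^{2-4/(m+3)})$, that is, when $n=o(d^{\vartheta_m})$. Larger $\ell$ only contributes extra factors $n/d^2\le 1$. These disjoint pairs are the binding case.
\item Every other pair has $c=1$, including every $\ell=0$ pair, since $w'=w^{-1}\neq\mathrm{id}$ forces a shared index. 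Then $v\le m+1+\ell/2$, and the contribution is at most $n^{m+1}d^{-2m}(n/d^2)^{\ell/2}=o(1)$. This uses $2m/(m+1)\ge\vartheta_m$ for $m\ge 2$ and $n\le d^2$.
\end{itemize}
With this two-case split, together with the fact that $\ell$ is always even, your argument closes.
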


Moreover, for the highest-order term appearing in \Cref{prop:not}, we will make use of the following bound for large even moments.

\begin{lemma}
    \label{lem:Jucys--Murphy-large-moments}
    For fixed $t \geq 1$ and $e \in \{a,b\}$,
    \begin{equation}
        \Tr(\ol{\rho}_e^{(n)} \calP_{2t}) \leq C_{k,t} \frac{n^{t+1}}{d^{2t}},
    \end{equation}
    whenever $n \leq c_k d^2$ for some constant $c_k > 0$.
\end{lemma}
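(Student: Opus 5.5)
We plan to expand $\calP_{2t} = d^{-2t}\sum_{s=1}^n J_s^{2t}$ in the permutation basis and evaluate $\Tr(\ol{\rho}_e^{(n)}\pi)$ term by term. By \Cref{fact:perm-product}, for every $\pi\in S_n$ we have $\Tr(\ol{\rho}_e^{(n)}\pi) = \Ex\prod_{\tau\in\cyc(\pi)}\Tr(\brho_e^{\mathrm{len}(\tau)})$, and this quantity is nonnegative. We also have $J_s^{2t} = \sum_{i_1,\dots,i_{2t}<s}(i_1\,s)\cdots(i_{2t}\,s)$. Every such word is a permutation supported on $\{s\}\cup\{i_1,\dots,i_{2t}\}$. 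If the word contains $m$ distinct indices $i_j$, its support has size at most $m+1$, and there are at most $n^m$ choices of these indices, with $O_t(1)$ patterns for each $m$.

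The key step is a uniform bound on the weight of a permutation $\pi$ with non-fixed cycle lengths $\ell_1,\dots,\ell_c$. We aim to show
\begin{equation}
\Ex\prod_{\tau}\Tr(\brho_e^{\mathrm{len}(\tau)}) \leq C_k^{|\pi|}\, d^{-|\pi|},
\end{equation}
where the fixed points contribute $\Tr\brho_e=1$. We cannot bound $\brho_e\preceq$ a multiple of $\Id/d$ pointwise, so we work with the tilt directly. By \Cref{def:normalize},
\begin{equation}
\Tr(\ol{\rho}_e^{(n)}\pi)=\frac{\Ex[\Tr(\pi\bX_e^{\otimes n})]}{\Ex[\Tr(\bX_e)^n]}.
\end{equation}
For the denominator, write it as $\Tr(\Ex\bX_e^{\otimes n}) = \nu_e^n\Tr\prod_t f_e(\wt J_t)$ using \Cref{prop:f}. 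For the numerator, note that $\pi$ multiplies $\Ex\bX_e^{\otimes n}$ by an element of the center. An alternative that may be cleaner is to use operator monotonicity: since $0\preceq\bX_e\preceq\Id$, we have $\Tr(\bX_e^\ell)\leq\Tr(\bX_e)$. Together with Jensen-type comparisons, this should give
\begin{equation}
\Ex[\Tr(\bX_e)^{n-|\pi|-c}\prod_\tau\Tr\bX_e^{\ell_\tau}] \lesssim \Ex[\Tr(\bX_e)^{n-|\pi|}].
\end{equation}
The ratio $\Ex[\Tr(\bX)^{n-|\pi|}]/\Ex[\Tr(\bX)^n]$ then needs to be at most $(C/d)^{|\pi|}$. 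This follows from concentration of $\Tr\bX_e$ around $d\nu_e$ with fluctuations of order $O(1)$, by \cite{Meckes_2019}, together with Jensen's inequality. The fluctuation is what forces $n\le c_kd^2$: the moment generating function $\Ex e^{(n/d)\cdot O(1)\cdot(\text{fluct})}$ stays bounded only when $n^2/d^2\lesssim d^2$ in exponent form. I expect this weight bound to be the main obstacle. The fallback is to compute the ratio exactly in Jucys--Murphy form via log-concavity of $n\mapsto\Ex\Tr(\bX)^n$.

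With the weight bound in hand, we do the counting. A word of length $2t$ whose resulting permutation has Cayley length $L$ involves $m\le L$ free indices besides $s$ when the word is reduced. In general each new index forces an increase in the support, so $|\pi|\ge m$ when $\pi\neq\mathrm{id}$ and the support size is $m+1$. The identity terms occur only when every index appears an even number of times, so $m\le t$; these give $\sum_s s^{t} d^{-2t}\lesssim n^{t+1}d^{-2t}$, which is the main term. A non-identity term with $m$ indices and $|\pi|=L\ge m$ contributes at most $n^{m+1}d^{-2t}(C/d)^{L}$, and $m\le\min(2t,L)$. Using $n\le c_kd^2$, we get $n^{m+1}d^{-L}\le n^{m+1}d^{-m}\le n\,(n/d)^{m}$. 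For $m\le t$ this is at most $n^{t+1}d^{-t}$ up to constants, but we need better. Here we use parity: a product of $2t$ transpositions is even, and the relation between the number of distinct indices and the Cayley length gives $L\ge 2(m-t)$ when $m>t$ and $L\geq m$. So for $m>t$ the term is at most $n^{m+1}d^{-2t}d^{-(2m-2t)}\cdot$ const, which after $n\le c_kd^2$ is $\lesssim n^{t+1}d^{-2t}(n/d^2)^{m-t}\lesssim n^{t+1}d^{-2t}$. For $m\le t$ and $\pi\ne\mathrm{id}$, the term is directly at most $n^{m+1}d^{-2t}\le n^{t+1}d^{-2t}$. Summing the $O_{t}(1)$ patterns gives $C_{k,t}n^{t+1}/d^{2t}$.
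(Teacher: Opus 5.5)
\textbf{Main gap: the weight bound.} Your skeleton matches the paper's: expand $\calP_{2t}$ into star words, bound each $\Tr(\pi\,\ol{\rho}_e^{(n)})$ by $(C/d)^{|\pi|}$, and count words through the number $m$ of distinct indices. Your reduction of the weight to $\mathbb{E}[T^{n-L}]/\mathbb{E}[T^{n}]$, with $T=\Tr(\bX_e)$ and $L=|\pi|$, via the pointwise bound $\Tr(\bX_e^{\ell})\le\Tr(\bX_e)$, is also exactly right.

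The gap is the final step, which you leave open, and the route you sketch does not close it. Suppose you control $\mathbb{E}[T^{n-L}]$ through the moment generating function of the $O(1)$ fluctuations of $T$, and bound $\mathbb{E}[T^n]\ge(\nu_e d)^n$ by Jensen. This yields only $\mathbb{E}[T^{n-L}]/\mathbb{E}[T^n]\le(\nu_e d)^{-L}e^{O(n^2/d^2)}$. That blows up as soon as $n\gg d$, which is exactly the range $d\ll n\le d^{2-\gamma}$ where the lemma is used. So fluctuations are not what forces $n\le c_k d^2$.

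Your fallback also has the wrong sign: $N\mapsto\log\mathbb{E}[T^N]$ is convex, not concave, and that convexity is the missing idea. By monotonicity of $L^p$ norms, $\mathbb{E}[T^{n-L}]\le\mathbb{E}[T^n]^{(n-L)/n}$ and $\mathbb{E}[T^n]^{1/n}\ge\mathbb{E}[T]=\nu_e d$. Hence the ratio is at most $\mathbb{E}[T^n]^{-L/n}\le(\nu_e d)^{-L}$ for every $n$, with no concentration at all. This is \Cref{lem:pointwise-decay} with $C_0=\max_e\nu_e^{-1}$. The hypothesis $n\le c_k d^2$ enters only in the counting: in your version for the terms with $m>t$, and in the paper to sum the geometric series $\sum_{\ell\le 2t}(C_0\sqrt{n}/d)^{\ell}$.

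\textbf{Minor errors in the counting.} Several side claims there are false, though none is load-bearing.
\begin{itemize}
\item The claims $|\pi|\ge m$ and ``support size $m+1$'' fail for $(1\,s)(1\,s)(2\,s)(3\,s)=(2\,s)(3\,s)$. Here $m=3$, but $|\pi|=2$ and the support has size $3$.
\item The identity can arise with indices used an odd number of times, e.g.\ $((1\,s)(2\,s))^3=\mathrm{id}$.
\end{itemize}
Your final case split only uses $m\le t$ for identity words and $|\pi|\ge 2(m-t)$ when $m>t$. Both follow from $2m\le 2t+|\pi|$ in \Cref{lem:length-of-swap-products}. With that inequality and the weight bound above, your estimate goes through. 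It is the same count as the paper's, which groups words by Cayley length $\ell$ and uses that there are at most $(2t)^{2t}n^{1+t+\ell/2}$ of them.
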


Let us now use these lemmas to prove the main result of this section.
\begin{proof}[Proof of \Cref{thm:prp-statistical-indistinguishability}]
   Throughout this proof, we denote $\gamma \coloneqq \frac{4}{k+4}$.
   Now, to bound the trace distance between $\ol{\rho}_a^{(n)}$ and $\ol{\rho}_b^{(n)}$, first recall that these states can be expressed as operator-valued functions of Jucys--Murphy elements, and thus commute. Consequently, their trace distance is the same as the TV distance between their eigenvalues, and we can directly bound the latter. 

   Concretely, let $\Omega$ be some finite set that indexes the common eigenbasis of all Jucys--Murphy elements. For $y \in \Omega$ and $t \in [n]$, we define
   \begin{equation}
       \wt{j}_t(y) \coloneqq \bra{y} \wt{J}_t \ket{y}
   \end{equation}
   to be they $y$-th eigenvalue of $\wt{J}_t$. $\ol{\rho}_a^{(n)}, \ol{\rho}_b^{(n)}$ are also diagonal in this basis, and we define $\{p_y\}_{y \in \Omega}, \{q_y\}_{y \in \Omega}$ to be their respective eigenvalues. Defining $\mu, \xi$ as in \Cref{def:triangular-discrimination}, we can thus write
   \begin{equation}
       \Dtr{\ol{\rho}_a^{(n)}}{\ol{\rho}_b^{(n)}}^2 = \dtv{p}{q}^2 \leq \Ex_{y \sim \mu} \xi_y^2, \label{eq:dtv-triangular-2}
   \end{equation}
   where the inequality follows from \Cref{eq:dtv-triangular-discrimination} and Cauchy--Schwarz. It thus suffices to bound this triangular discrimination, for which we will use \Cref{lem:triangular-upper}.

   First, we will pick the set $G \subseteq \Omega$. We start by fixing some $\delta > 0$ satisfying
   \begin{equation}
       \delta < \frac12 \min_i\{a_i^{-1},b_i^{-1}\}.
   \end{equation}
   Then, the good set consists of all indices with bounded Jucys--Murphy eigenvalues:
   \begin{equation}
       G \coloneqq\{y \in \Omega: \max_{1 \leq t \leq n} |\wt{j}_t(y)| \leq \delta \}.
   \end{equation}
   Note that for each $y \in G$, $i \in [K]$, and $t \in [n]$,
   \begin{equation}
       |a_i \wt{j}_t(y)|, \textnormal{ } |b_i \wt{j}_t(y)| < \frac12 \implies p_y, q_y > 0,
   \end{equation}
   by \Cref{prop:f}. Define $B \coloneqq G^c$ to be the bad set, and let $\mathbbm{1}_B$ be the projector onto the corresponding eigenvectors. For each $y \in B$, and as $k$ is even,
   \begin{equation}
       \bra{y} \calP_{k+2} \ket{y} = \sum_{t = 1}^n |\wt{j}_t(y)|^{k+2} > \delta^{k+2} \implies \mathbbm{1}_B \preceq \delta^{-(k+2)}\calP_{k+2}. 
   \end{equation}

   We will use this to bound $\mathrm{bad}(G)$, i.e., $\mu(B)$.
   \begin{align}
       \mathrm{bad}(G) &= \mu(B) = \Tr\left(
            \frac{\ol{\rho}_a^{(n)} + \ol{\rho}_b^{(n)}}{2} \cdot \mathbbm{1}_B
       \right)
       \\& \leq \frac{\delta^{-(k+2)}}{2} \sum_{e \in \{a,b\}} \Tr(\ol{\rho}_e^{(n)} \calP_{k + 2})
       \\&\leq C_k \frac{n^{\frac{k}{2} + 2}}{d^{k+2}},
   \end{align}
   for some $C_k > 0$, where we used \Cref{lem:Jucys--Murphy-large-moments} in the last inequality and absorbed the $\delta$-factors into $C_k$. Consequently, $\mathrm{bad}(G) = o(1)$ whenever $n = o(d^{2 - \gamma})$.
   
    It remains to bound the second term in \Cref{lem:triangular-upper}, i.e., $\mathrm{dev}_{\lambda,G}$. Now, by \Cref{def:normalize}, the log ratio satisfies (on the subspace corresponding to $G$)
    \begin{equation}
        H \coloneqq \log(\ol{\rho}_b^{(n)}) -  \log(\ol{\rho}_a^{(n)}) = \lambda_0 \mathbbm{1} + \sum_{t = 1}^n (\log(f_b(\wt{J}_t)) - \log(f_a(\wt{J}_t))),
    \end{equation}
    for some fixed scalar $\lambda_0$. Define 
    \begin{equation}
        R \coloneqq H - \lambda_0 \mathbbm{1} - \beta_k\calP_k - \beta_{k + 1}\calP_{k+1};
    \end{equation}
    then, by \Cref{prop:not}, we have
    \begin{equation}
        |R \cdot \mathbbm{1}_G| \preceq c_k \calP_{k+2}, \qquad \textnormal{as $|\wt{j}_t(y)| \leq \delta$ on $G$,}
    \end{equation}
    for some $c_k > 0$ depending only on $k$. As in \Cref{lem:Jucys--Murphy-small-moments}, we write $\calP_j = \eta_j \mathbbm{1} + \calP_j^\circ$; then, we choose $\lambda$ such that
    \begin{equation}
        H - \lambda \mathbbm{1} = \beta_k \calP_k^\circ + \beta_{k+1} \calP_{k+1}^\circ + R.
    \end{equation}
    Now, we can write
    \begin{align}
       \mathrm{dev}_{\lambda,G} &= \Ex_\mu \left[
            \left(
                \log \frac{q_y}{p_y} - \lambda
            \right)^2 \mathbbm{1}_G
        \right] = \frac12 \sum_{e \in \{a,b\}} \Tr(\ol{\rho}_e^{(n)} (H - \lambda \mathbbm{1})^2 \mathbbm{1}_G)
        \\& \leq \frac32 \sum_{e \in \{a,b\}} \beta_k^2 \Tr(\ol{\rho}_e^{(n)} (\calP_{k}^\circ)^2) + \beta_{k+1}^2 \Tr(\ol{\rho}_e^{(n)} (\calP_{k+1}^\circ)^2) + \Tr(\ol{\rho}_e^{(n)} (R \cdot \mathbbm{1}_G)^2), \label{eq:nu}
    \end{align}
    where we use $(A+B+C)^2 \preceq 3(A^2 + B^2 + C^2)$ for Hermitian $A,B,C$. As $k$ is even, $\vartheta_k = \vartheta_{k+1} = 2 - \gamma$, and thus, by \Cref{lem:Jucys--Murphy-small-moments}, the first two terms in \Cref{eq:nu} are $o(1)$ for $n = o(d^{2 - \gamma})$.

 Moreover, 
the inequality $(\sum_{i=1}^n a_i)^2\leq n\sum_{i=1}^n a_i^2$ holds for any real numbers $a_i$, and applying it to the eigenvalues of $\left( \sum_{t = 1}^n \wt{J}_t^{k+2}\right)^2$, which commutes with  with $(R \cdot \mathbbm{1}_G)^2$, we can write
    \begin{equation}
        (R \cdot \mathbbm{1}_G)^2 \preceq c_k^2 (\sum_{t = 1}^n \wt{J}_t^{k+2})^2 \preceq c_k^2 \cdot n \calP_{2k+4}.
    \end{equation}
    Thus, by \Cref{lem:Jucys--Murphy-large-moments}, for each $e \in \{a,b\}$,
    \begin{equation}
        \Tr(\ol{\rho}_e^{(n)} (R \cdot \mathbbm{1}_G)^2) \leq O\left(
            \frac{n^{k+4}}{d^{2k+4}}
        \right) = o(1),
    \end{equation}
    for $n = o(d^{2 - \gamma})$. 
    
    We have shown that both $\mathrm{dev}_{\lambda,G}$ and $\mathrm{bad}(G)$ are $o(1)$ for $n = o(d^{2 - \gamma})$; with \Cref{eq:dtv-triangular-2} and \Cref{lem:triangular-upper}, this concludes the proof.
\end{proof}

It remains now to prove \Cref{lem:Jucys--Murphy-small-moments,lem:Jucys--Murphy-large-moments}; first, we will prove some necessary combinatorial lemmas.

\subsection{Useful Lemmas}
\label{sec:useful-lemmas}

In this section, we provide several combinatorial lemmas that will be important for our proofs of \Cref{lem:Jucys--Murphy-small-moments,lem:Jucys--Murphy-large-moments}. For the rest of this section, when describing permutations, we may omit $1$-cycles and only consider non-fixed cycle types. For any non-fixed cycle type $\mu$, let $C_\mu$ denote the conjugacy class of all permutations in $S_n$ corresponding to $\mu$. Note that all permutations in $C_\mu$ have the same support size, number of non-trivial cycles, and Cayley length; we will thus denote the first two by $s(\mu)$ and $c(\mu)$ respectively and extend notation to denote the latter by $|\mu|$. We also define the class sum, $K_\mu \coloneqq\sum_{\pi \in C_\mu} \pi$. 

We begin with bounds on the Cayley lengths of products of transpositions, which arise naturally in the study of high-order moments of Jucys--Murphy elements.

\begin{lemma}
    \label{lem:length-of-swap-products}
    Consider a product of $r$ transpositions with a common center $t$, i.e., a permutation of the form $\pi = (i_1 t) \dots (i_r t)$, for $t \neq i_j, j \in [r]$. Let $q$ be the number of unique elements in the multiset $\{i_1, \dots, i_r\}$. Then,
    \begin{equation}
        2q \leq r + |\pi|.
    \end{equation}
\end{lemma}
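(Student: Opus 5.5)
We prove $2q \leq r + |\pi|$ by induction on $r$, tracking how the number of cycles changes as we multiply by one transposition at a time. Write $\pi_s \coloneqq (i_1\ t)\cdots(i_s\ t)$ for $0 \leq s \leq r$, with $\pi_0 = \Id$, and let $q_s$ be the number of distinct elements among $i_1,\dots,i_s$. The claim is the case $s = r$ of the stronger statement
\begin{equation}
  2q_s - |\pi_s| \leq s .
\end{equation}
Equivalently, since multiplying by a transposition changes $|\cdot|$ by exactly $\pm 1$, the potential $\Phi_s \coloneqq 2q_s - |\pi_s| - s$ should be nonpositive. We have $\Phi_0 = 0$, so it suffices to show $\Phi_s \leq \Phi_{s-1}$ at each step.

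Consider the step $\pi_s = \pi_{s-1}(i_s\ t)$. There are two cases.

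\emph{Case 1: $i_s$ is new}, meaning $i_s \notin \{i_1,\dots,i_{s-1}\}$. Every factor of $\pi_{s-1}$ fixes $i_s$, so $i_s$ is a fixed point of $\pi_{s-1}$. The element $t$ lies in some cycle of $\pi_{s-1}$, and $t \neq i_s$, so $t$ and $i_s$ are in different cycles. Multiplying by a transposition that joins two distinct cycles merges them. Hence $|\pi_s| = |\pi_{s-1}| + 1$ and $q_s = q_{s-1} + 1$, giving $\Phi_s = \Phi_{s-1} + 2 - 1 - 1 = \Phi_{s-1}$.

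\emph{Case 2: $i_s$ is repeated.} Then $q_s = q_{s-1}$ and $|\pi_s| \geq |\pi_{s-1}| - 1$. This gives $\Phi_s \leq \Phi_{s-1} + 1 - 1 = \Phi_{s-1}$.

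In both cases $\Phi$ does not increase, so $\Phi_r \leq 0$, which is exactly $2q \leq r + |\pi|$.

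The only step needing care is Case 1, where we must confirm that $i_s$ and $t$ really lie in different cycles of $\pi_{s-1}$, regardless of whether the transposition multiplies on the left or the right. This holds because $i_s$ is fixed by $\pi_{s-1}$ while $t \neq i_s$. Multiplying by a transposition merges the two cycles if $i_s$ and $t$ are in different cycles and splits a cycle if they are in the same one, and this holds for either side of multiplication. So the argument does not depend on the composition convention.
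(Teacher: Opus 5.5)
Your proof is correct. It uses the same merge/split framework as the paper: append the transpositions one at a time and track how the cycle count changes. The key counting step is different, though.

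Your potential argument amounts to showing that the number of merges $M$ satisfies $M \geq q$. The first occurrence of each index $i_s$ merges its fixed point into the cycle of $t$. Then $r + |\pi| = (M+S) + (M-S) = 2M \geq 2q$, where $S$ is the number of splits.

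The paper instead lower-bounds the number of splits. It shows $S \geq c-1$, where $c$ is the number of cycles of $\pi$ restricted to the index set $A$, by arguing that every final cycle avoiding $t$ must at some point have split off from the cycle containing $t$. It then combines this with $r = |\pi| + 2S$ and $c = q+1-|\pi|$.

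The two bounds are equivalent through these identities. Yours is more local and avoids the somewhat informal ``must have split off'' step. The paper's global identity $r = |\pi| + 2S$ is reused later, for the parity claim in the class-decomposition lemma and in the support-pair estimates. Your argument gives that parity just as directly, since each step changes $|\cdot|$ by exactly $\pm 1$.
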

\begin{proof}
Let $A$ be the set of unique indices included in the product of transpositions above, including the center $t$. By the hypothesis of the lemma, $|A| = q+1$. 

Let us start with the identity permutation on $A$ and multiply each transposition in the product in order. Each such transposition will either merge two distinct cycles or split a cycle into two. Let $M$ denote the number of merges and $S$ denote the number of splits. Trivially, we have $r = M + S$. Let $c$ be the number of cycles in $\pi_{|A}$, i.e., $\pi$ restricted to the set $A$. Then, 
\begin{equation} 
    c = q+1 - M + S,
\end{equation}
as we start with $q+1$ cycles, each merge reduces the number of cycles, and each split increases it. We can thus write
\begin{equation} \label{eq:length-swap-products-1}
    |\pi| = q+1 - c = M-S.
\end{equation}
Thus, we have
\begin{equation}
    r = |\pi| + 2S. \label{eq:length-swap-products-2}
\end{equation}
Now, in the final permutation $\pi_{|A}$, only one of the $c$ cycles contains the center $t$; however, each element of $A$, at some point, was in the same cycle as $t$. Consequently, each cycle not containing $t$ must have split off from it at some point, and so,
\begin{equation}
    S \geq c-1.
\end{equation}
Thus, by \Cref{eq:length-swap-products-1,eq:length-swap-products-2}
\begin{equation}
    r \geq |\pi| + 2(c-1) = 2q - |\pi|,
\end{equation}
proving the lemma.
\end{proof}

Next, we will show that the power-sums of Jucys--Murphy elements admit a class decomposition and prove bounds on the associated coefficients.

\begin{lemma}[Power-sum class decomposition]
\label{lem:power-sum-conjugacy-decomposition}
Let $m \geq 0$ be a constant integer. Then, the $m$th Jucys--Murphy power sum admits the following class decomposition.
\begin{equation}
    d^m \calP_m = A_{m,\emptyset} \Id + \sum_{\mu \neq \emptyset} A_{m,\mu} K_\mu,
\end{equation}
where the coefficient $A_{m,\mu} = 0$ whenever $|\mu| \not\equiv m \textnormal{ (mod } 2)$ or $|\mu| > m$; otherwise 
\begin{equation}
    A_{m,\mu} \leq O(n^{\delta(\mu)}), \quad \delta(\mu) \coloneqq\frac{m+2-|\mu|-2c(\mu)}{2}.
\end{equation}
\end{lemma}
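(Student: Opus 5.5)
We expand $d^m\calP_m = \sum_{t=1}^n J_t^m = \sum_{t}\sum_{i_1,\dots,i_m<t}(i_1\,t)(i_2\,t)\cdots(i_m\,t)$ and group the resulting permutations by conjugacy class. Since $\calP_m$ is a symmetric function of the Jucys--Murphy elements, it lies in the center of the group algebra. Hence it is a linear combination of class sums, so the coefficient of every $\pi\in C_\mu$ is the same. We therefore get $A_{m,\mu}=N_{m,\mu}/|C_\mu|$, where $N_{m,\mu}$ is the total number of tuples $(t,i_1,\dots,i_m)$ whose product lies in $C_\mu$.

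\textbf{Vanishing conditions.} A product of $m$ transpositions has sign $(-1)^m$, which forces $|\mu|\equiv m \pmod 2$. Cayley length is subadditive, so $|\pi|\le m$, which forces $|\mu|\le m$.

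\textbf{Bounding $N_{m,\mu}$.} Fix a tuple and let $q$ be the number of distinct values among the $i_j$. The product $\pi$ is supported inside $\{t\}\cup\{i_j\}$, which has $q+1$ elements. By \Cref{lem:length-of-swap-products}, $2q\le m+|\mu|$, so $q\le (m+|\mu|)/2$. Now count tuples by first choosing this set of at most $q+1$ indices, which can be done in $O(n^{q+1})$ ways, and then choosing a pattern of how the $m$ entries map onto the chosen indices, for which there are $O_m(1)$ options. This gives
\[
N_{m,\mu}\le O\bigl(n^{(m+|\mu|)/2+1}\bigr).
\]

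\textbf{Size of the class.} We have $|C_\mu| = \Theta_m(n^{s(\mu)})$, since we choose the $s(\mu)$ support points and then the finitely many arrangements of those points into cycles of type $\mu$. Moreover $s(\mu) = |\mu| + c(\mu)$: each nontrivial cycle of length $\ell$ contributes $\ell$ points to the support and $\ell-1$ to the Cayley length.

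\textbf{Combining.} Dividing the two bounds gives
\[
A_{m,\mu} = O\bigl(n^{(m+|\mu|)/2 + 1 - |\mu| - c(\mu)}\bigr) = O\bigl(n^{(m+2-|\mu|-2c(\mu))/2}\bigr) = O(n^{\delta(\mu)}),
\]
as claimed. The identity coefficient $A_{m,\emptyset}$ is not constrained by the statement, but it fits the same count with $c=0$.

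\textbf{Main obstacle.} The delicate point is the counting step. The bound on $N_{m,\mu}$ must be applied to the total over all tuples, not to the tuples landing on a single fixed $\pi$. Otherwise the freedom in choosing $t$ and the indices outside $\supp(\pi)$ is lost. One also has to use centrality to pass from this total count to a per-permutation coefficient. An alternative is to count directly, for a fixed $\pi\in C_\mu$, the tuples with product $\pi$. Their indices must cover $\supp(\pi)$, leaving $q+1-s(\mu)$ free indices, which gives $O(n^{q+1-s(\mu)})$. This route yields the same exponent and avoids the centrality argument, so I would present it this way for cleanliness.
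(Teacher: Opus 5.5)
Your proof is correct and is essentially the paper's argument. Both proofs get the vanishing conditions from $|\pi|\le m$ and from parity: you use the sign, the paper uses the merge/split count $r=|\pi|+2S$, and these amount to the same thing. Both then get the bound by combining $2q\le m+|\mu|$ from \Cref{lem:length-of-swap-products} with the observation that the index set must contain $\supp(\pi)$, so at most $q+1-s(\mu)\le\delta(\mu)$ indices are free. That is exactly your ``alternative'' fixed-$\pi$ count. Your primary route, dividing the total count $N_{m,\mu}$ by $|C_\mu|\ge (n/s(\mu))^{s(\mu)}$ using centrality, is an equally valid repackaging of the same estimate.
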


We note that exact expressions of the coefficients $A_{m,\mu}$ have appeared in prior work (see e.g., \cite[Theorem 6.4]{lassalle2013class}); however, we include a self-contained proof of our asymptotic bounds for completeness.

\begin{proof}
    We can rewrite the power sum as $d^m \calP_m = \sum_{t = 1}^n J_t^m$, i.e., a symmetric polynomial in the Jucys--Murphy elements. Recall that symmetric polynomials in the Jucys--Murphy elements are in the center of the permutation algebra, which is generated by class functions $K_\mu$. This shows the existence of the class decomposition. 
    
    We will now prove the desired properties of the coefficients $A_{m,\mu}$. Let us write $d^m \calP_m = \sum_{\pi \in S_n} c_\pi \pi$. By the class decomposition, for any fixed permutation $\pi \in C_\mu$, we have $c_\pi = A_{m,\mu}$. Now, we can write
    \begin{equation}
        \sum_{t = 1}^n J_t^m = \sum_{t = 1}^n \sum_{i_1, \dots, i_m < t} (i_1 t) \dots (i_m t), \label{eq:class-decomposition-1}
    \end{equation}
    i.e., a sum of products of transpositions. Thus, the coefficient $c_\pi$ is the number of such products of transpositions that yield the permutation $\pi$. 

    Note that such products cannot produce a permutation with $|\pi| > m$, by definition of the Cayley length. Further, by \Cref{eq:length-swap-products-2} in the proof of \Cref{lem:length-of-swap-products}, we can only obtain permutations whose Cayley length $|\pi|$ has the same parity as $m$. Thus, $A_{m,\mu} = 0$ whenever $|\mu| \not\equiv m \textnormal{ (mod } 2)$ or $|\mu| > m$, as claimed. 
    
    It remains to prove the claimed upper bound on the non-zero coefficients $A_{m,\mu}$. For a fixed permutation $\pi \in C_\mu$, as $A_{m,\mu} = c_\pi$, we wish to bound the number of choices of the indices $i_1, \dots, i_m, t$ in \Cref{eq:class-decomposition-1} that yield $\pi$. Let $A = \{t\} \cup \{i_1\} \dots \cup \{i_m\}$ be the set of indices acted upon by the product of transpositions. For these indices to correspond to $\pi$, it must be the case that $\supp(\pi) \subseteq A$. Thus, to prove our desired bound, we will aim to understand the number of remaining elements, i.e., $|A \setminus \supp(\pi)|$. 

    Now, consider the restricted permutation $\pi_{|A}$ that describes the effect of $\pi$ on the index set $A$. Let $b$ be the number of cycles in $\pi_{|A}$, and let $q = |A| - 1$. Then, by \Cref{lem:length-of-swap-products}, we have
    \begin{equation}
        2q \leq m + |\pi| \quad \implies \quad m \geq |\pi| + 2(b-1), 
    \end{equation}
    as $|\pi| = |\pi_{|A}| = q+1 - b$. Note that $b = c(\mu) +  |A \setminus \supp(\pi)|$, as each element fixed by $\pi_{|A}$ contributes one cycle. Thus,
    \begin{equation}
        |A \setminus \supp(\pi)| \leq \frac{m - |\pi| -2c(\mu) + 2}{2} = \delta(\mu),
    \end{equation}
    as $|\mu| = |\pi|$ for all $\pi$ with non-trivial cycle type $\mu$. 

    Recall that $A$ must contain $\supp(\pi)$. Thus, one can choose $i_1, \dots, i_m, t$ by first choosing the remaining elements, i.e., $A \setminus \supp(\pi)$, and then assigning valid values to $i_1, \dots, i_m,t$ from $A$. Given $A$, the number of such assignments is upper bounded by a finite function of $m$, which is a constant. Thus, the total number of choices is dominated by those of $A \setminus \supp(\pi)$, which is    
    at most $O(n^{|A \setminus \supp(\pi)|}) \leq O(n^{\delta(\mu)})$, as claimed.
\end{proof}

\begin{lemma}
\label{lem:support-pair-estimates}
    For a non-trivial cycle type $\mu \neq \emptyset$ and support size $s = s(\mu)$, the number of ordered pairs $(\sigma,\tau) \in C_\mu\times C_\mu $ with
    \begin{equation}
        |\supp(\sigma) \setminus \supp(\tau)| = |\supp(\tau) \setminus \supp(\sigma)| = u
    \end{equation} 
    is $O_s(n^{s+u})$. For any such pair, if $\sigma \neq \tau^{-1}$, then
    \begin{equation}
        |\sigma\tau| \geq \max\{2, 2u - 2c(\mu)\}.
    \end{equation}
    Moreover, if $u = c(\mu) + 1$ and $|\mu|$ is even, then
    \begin{equation}
        |\sigma\tau| \geq 4.
    \end{equation}
\end{lemma}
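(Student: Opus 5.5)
The plan has three parts, one for each assertion in the lemma: the counting bound, the general lower bound on $|\sigma\tau|$, and the refined bound in the parity case.

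\textbf{Counting.} First choose $\supp(\sigma)$, which is a set of size $s$; there are $O(n^s)$ choices. Next choose $\supp(\tau)$: it has $s-u$ elements inside $\supp(\sigma)$, which gives $O_s(1)$ choices, and $u$ elements outside, which gives $O(n^u)$ choices. Once both supports are fixed, the cycle structures of $\sigma$ and $\tau$ on them can be arranged in only $O_s(1)$ ways. Multiplying these gives $O_s(n^{s+u})$.

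\textbf{Lower bound on $|\sigma\tau|$.} The bound $|\sigma\tau|\ge 2$ comes from two observations. Since $\sigma\neq\tau^{-1}$, the product $\sigma\tau$ is not the identity, so $|\sigma\tau|\geq 1$. Also $|\sigma\tau|\equiv|\sigma|+|\tau|=2|\mu|\equiv 0 \pmod 2$, since $\sigma$ and $\tau$ lie in the same class. Together these force $|\sigma\tau|\ge 2$.

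For the bound $2u-2c(\mu)$, I would study how $\sigma\tau$ acts. Take $x\in\supp(\tau)\setminus\supp(\sigma)$; then $\sigma\tau(x)=\tau(x)$ unless $\tau(x)\in\supp(\sigma)$. Consider the cycles of $\tau$. Within each cycle, the elements lying outside $\supp(\sigma)$ form maximal runs, and each run ends at a point that $\tau$ maps into $\supp(\sigma)$. All elements of these runs stay in $\supp(\sigma\tau)$ unless some accident makes them fixed. It is cleaner to argue with a transposition count. The set $D=\supp(\sigma)\triangle\supp(\tau)$ has $2u$ elements, and $\supp(\sigma\tau)\supseteq D$, because points in exactly one support are moved by exactly one factor. (Check: if $x\in\supp(\tau)\setminus\supp(\sigma)$ and $\sigma\tau(x)=x$, then $\sigma(\tau(x))=x$ with $\tau(x)\neq x$, which forces $\tau(x)\in\supp(\sigma)$, and then $\sigma(\tau(x))\in\supp(\sigma)$, contradicting $x\notin\supp(\sigma)$. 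The case $x\in\supp(\sigma)\setminus\supp(\tau)$ is symmetric.) Hence $|\supp(\sigma\tau)|\ge 2u$.

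Then $|\sigma\tau| = |\supp(\sigma\tau)| - \#\{\text{nontrivial cycles of }\sigma\tau\}$, so it suffices to show that $\sigma\tau$ has at most $2c(\mu)$ nontrivial cycles meeting $D$, or more precisely at most $2c(\mu)$ cycles overall. The argument: every nontrivial cycle of $\sigma\tau$ must contain a point where the two supports interact. A cycle of $\sigma\tau$ lying entirely outside $\supp(\sigma)$ would be a cycle of $\tau$ lying outside $\supp(\sigma)$, and symmetrically for the other side. A cleaner route is the identity $|\sigma\tau|\ge |\supp(\sigma\tau)|-c(\sigma\tau)$ combined with a subadditivity argument: orbits of the group $\langle\sigma,\tau\rangle$ number at most $2c(\mu)$, and each orbit contributes at least (orbit size minus the cycles of $\sigma\tau$ in it). I would instead use the triangle-type inequality $|\sigma\tau|\ge |\supp(\sigma\tau)\cap O|-\#$ cycles, where the cycles of $\sigma\tau$ touching $D$ are bounded by relating each such cycle to a distinct cycle of $\sigma$ or $\tau$ through which it exits into $D$. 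This gives at most $2c(\mu)$ cycles and hence $|\sigma\tau|\ge 2u-2c(\mu)$. Making this injection rigorous is the main obstacle, and I would handle it by tracing each cycle of $\sigma\tau$ that contains a point of $D$ back to the $\tau$-cycle (or $\sigma$-cycle) of that point.

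\textbf{Parity refinement.} Suppose $u=c(\mu)+1$ and $|\mu|$ is even. The general bound then gives $|\sigma\tau|\ge 2$, and I must exclude $|\sigma\tau|=2$. A Cayley length of exactly $2$ means $\sigma\tau$ is a 3-cycle or a product of two transpositions, so $|\supp(\sigma\tau)|\le 4$. But $2u\le |\supp(\sigma\tau)|$ gives $u\le 2$, hence $c(\mu)\le 1$. So $\mu$ is a single cycle of length $\ell$ with $\ell-1$ even, and $u=2$, which forces $\ell\geq 3$. The remaining configurations — two $\ell$-cycles whose supports differ in exactly two points each, with product of length $2$ — are handled by direct case analysis. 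Such a product moves at least the $4$ points of $D$, so it must be a double transposition supported exactly on $D$. That in turn requires $\sigma$ and $\tau$ to agree on every other point, and this is impossible for cycles of odd length $\ell\ge3$ once the supports differ. That contradiction gives $|\sigma\tau|\ge 4$ by parity.
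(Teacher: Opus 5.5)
\textbf{Gap in the bound $|\sigma\tau|\ge 2u-2c(\mu)$.} Your counting bound, your parity argument for $|\sigma\tau|\ge 2$, and your observation that $\supp(\sigma\tau)\supseteq D$ with $|D|=2u$ are all correct. The problem is the reduction you use for $2u-2c(\mu)$. It needs at most $2c(\mu)$ cycles of $\sigma\tau$ to meet $D$, via an injection into the nontrivial cycles of $\sigma$ and $\tau$. That claim is false. Take $\tau=(1\,2\,3\,4\,5\,6)$ and $\sigma=(2\,7\,6\,8\,4\,9)$. Then $c(\mu)=1$, $u=3$, and $D=\{1,3,5,7,8,9\}$. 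However, $\sigma\tau=(1\,7\,6)(5\,8\,4)(3\,9\,2)$ has three cycles, each meeting $D$, while $\sigma$ and $\tau$ have only two nontrivial cycles between them.

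The same pattern scales. With $\tau=(1\,2\,\cdots\,2m)$ and $\sigma=(2m\;y_m\;2m{-}2\;y_{m-1}\;\cdots\;2\;y_1)$ you get $\sigma\tau=\prod_{i=1}^m(2i{-}1\;y_i\;2i{-}2)$, with $0\equiv 2m$. Your estimate $|\sigma\tau|\ge |D|-\#\{\text{cycles meeting }D\}$ then yields only $m$, which is below $2u-2c(\mu)=2m-2$ once $m\ge 3$. So no bookkeeping that charges only points of $D$ against the cycles meeting $D$ can prove the lemma. The true value $|\sigma\tau|=2m$ comes from the points of $\supp(\sigma)\cap\supp(\tau)$ that $\sigma\tau$ also moves.

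\textbf{The missing idea.} You need a global genus (Euler-characteristic) inequality: if $\langle\sigma,\tau\rangle$ has $o$ orbits on $N$ points, then $|\sigma|+|\tau|+|\sigma\tau|\ge 2(N-o)$. The paper proves this by inserting a minimal transposition factorization of $\tau$ into $\sigma$, and checking that every merge of connected components of the orbit graph is also a merge of cycles. Apply it on $N=s+u=|\mu|+c(\mu)+u$ points, with $o\le 2c(\mu)$ (every orbit contains a nontrivial cycle of $\sigma$ or $\tau$) and $|\sigma|=|\tau|=|\mu|$. This gives $|\sigma\tau|\ge 2u-2c(\mu)$ immediately. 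You do mention the orbit count, which is the right ingredient, but without this inequality it does not control $|\sigma\tau|$.

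\textbf{Your parity refinement is valid.} It is independent of the flawed step: it uses only $|\supp(\sigma\tau)|\ge 2u$, evenness, and a case check. The paper instead reapplies the genus inequality with $o\le 2c(\mu)-1$, since the supports share a point. Fix one slip: if $\sigma\tau$ fixes every point outside $D$, then $\tau=\sigma^{-1}$ off $D$, not $\tau=\sigma$. The contradiction still holds. Then $\tau$ maps $\supp(\sigma)\cap\supp(\tau)$, which has $\ell-2\ge 1$ points, into itself. That is impossible for a single $\ell$-cycle whose support is strictly larger.
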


Before proving the above lemma, let us state some necessary notation.

\begin{definition}
    Given two permutations $\sigma, \tau \in S_n$, let $\langle \sigma,\tau \rangle$ be the group of permutations generated by them. For any index $i \in [n]$, its orbit under $\langle \sigma,\tau \rangle$ is the set of elements that can be reached under any number of (potentially alternating) applications of $\sigma$ and $\tau$, including the starting index $i$ itself.
\end{definition}

Alternatively, we will view these orbits as the connected components of an appropriate graph.

\begin{definition}
\label{def:orbit-graph}
    Given permutations $\sigma, \tau \in S_n$, we define the graph $G_{\sigma,\tau}$ on vertices $1, \dots, n$ to contain an undirected edge $(i, j)$ whenever $\sigma(i) = j$ or $\tau(i) = j$. Traversal along the edges of this graph corresponds to the actions of $\sigma, \tau$. Thus, the number of orbits of $\langle \sigma, \tau \rangle$ is precisely the number of connected components of $G_{\sigma,\tau}$.
\end{definition}

The proof of \Cref{lem:support-pair-estimates} will rely on the following standard fact about the number of orbits of such groups $\langle \sigma, \tau\rangle$. We include the proof for completeness.

\begin{fact}
    \label{fact:genus-inequality}
    Let $\sigma, \tau \in S_n$. Let $o$ be the number of orbits of $\langle \sigma, \tau\rangle$. Then,
    \begin{equation}
        |\sigma| + |\tau| + |\sigma\tau| \geq 2(n-o).
    \end{equation}
\end{fact}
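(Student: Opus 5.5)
This is the standard Riemann--Hurwitz type genus inequality. I would prove it by reducing to the case of a single orbit and then counting via a cell decomposition of a surface, or more elementarily by an induction on the number of transpositions.

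\emph{Reduction to one orbit.} Let $O_1,\dots,O_o$ be the orbits of $\langle\sigma,\tau\rangle$, with sizes $n_1,\dots,n_o$. Both $\sigma$ and $\tau$ preserve each $O_\ell$, and so does $\sigma\tau$. Cayley length is additive over invariant blocks: $|\pi| = \sum_\ell (n_\ell - \numcyc(\pi|_{O_\ell}))$. It therefore suffices to show that if $\langle\sigma,\tau\rangle$ is transitive on a set of size $m$, then $|\sigma|+|\tau|+|\sigma\tau|\geq 2(m-1)$. Summing over orbits then gives $2\sum_\ell(n_\ell-1)=2(n-o)$.

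\emph{Transitive case.} I would use the subadditivity of Cayley length, $|\alpha\beta|\le|\alpha|+|\beta|$, together with $|\alpha^{-1}|=|\alpha|$. The first step is the elementary fact that if $\langle\sigma,\tau\rangle$ is transitive on $m$ points, then $|\sigma|+|\tau|\geq m-1$. To see this, write $\sigma$ and $\tau$ as products of $|\sigma|$ and $|\tau|$ transpositions, each of which stays within a cycle of its permutation. The graph whose edges are these transpositions has the same connected components as $G_{\sigma,\tau}$. A connected graph on $m$ vertices needs at least $m-1$ edges.

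The same reasoning applies to the pair $(\sigma,\sigma\tau)$, since $\langle\sigma,\sigma\tau\rangle=\langle\sigma,\tau\rangle$ is also transitive. It gives $|\sigma|+|\sigma\tau|\ge m-1$, and similarly $|\tau|+|\sigma\tau|\ge m-1$. Summing the three inequalities yields $2(|\sigma|+|\tau|+|\sigma\tau|)\geq 3(m-1)$. This is weaker than what we need, so a genuinely topological or parity argument is required.

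\emph{Main step: Euler characteristic.} For the sharp bound I would use the classical constellation argument. Build a 2-cell complex as a hypermap: one vertex for each cycle of $\sigma$, one face for each cycle of $\tau$, and one hyperedge for each cycle of $(\sigma\tau)^{-1}$, glued along the $m$ darts. Equivalently, take the bipartite map with black vertices given by the cycles of $\sigma$ and white vertices given by the cycles of $\tau$, edges given by the $m$ points, and faces given by the cycles of $\sigma\tau$, up to a choice of convention. Transitivity makes the surface connected. Euler's formula then gives
\[\numcyc(\sigma)+\numcyc(\tau)+\numcyc(\sigma\tau) - m = 2-2g \le 2.\]
Rewriting via $|\pi|=m-\numcyc(\pi)$ gives $|\sigma|+|\tau|+|\sigma\tau|\ge 2m-2$.

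The main obstacle is verifying that this gluing really produces a closed connected orientable surface with these vertex, edge and face counts. This is standard, via the Edmonds/Walsh correspondence between permutation triples with product identity and hypermaps. If I want to avoid topology, I would instead argue by induction on $|\tau|$. For $\tau=\mathrm{id}$ the claim is trivial. Write $\tau=\tau'(x\,y)$. Multiplying by a transposition changes $|\tau|$ and $|\sigma\tau|$ each by exactly $\pm1$. The number of orbits can increase by at most one, and it increases only when $x$ and $y$ lie in different orbits of $\langle\sigma,\tau'\rangle$. In that case $(x\,y)$ merges cycles in both $\tau'$ and $\sigma\tau'$, so both lengths increase by one, and the left side grows by $2$ exactly when $o$ drops by one. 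This case check is the crux of the induction.
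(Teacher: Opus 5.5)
Your proposal is correct, and your main step takes a different route from the paper. The paper neither reduces to a single orbit nor uses topology. It appends the transpositions $t_1,\dots,t_m$ of a minimal factorization of $\tau$ to $\sigma$ one at a time. Throughout, it keeps the invariant that each cycle of $\sigma t_1\cdots t_j$ lies inside one connected component of the graph $G_\sigma$ augmented by the edges $t_1,\dots,t_j$; by minimality of the factorization, the final components are exactly the $o$ orbits. It then notes that every merge of components forces a merge of cycles, so the number $M$ of cycle merges satisfies $M\ge \numcyc(\sigma)-o$. The identity $\numcyc(\sigma\tau)=\numcyc(\sigma)+|\tau|-2M$ finishes the proof.

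Your fallback induction on $|\tau|$ is essentially this argument unrolled one transposition at a time, since the orbits of $\langle\sigma,t_1\cdots t_j\rangle$ are exactly the components of that augmented graph. The hypermap argument gives more: the exact identity $|\sigma|+|\tau|+|\sigma\tau|=2(n-o)+2\sum_\ell g_\ell$, which identifies the slack as twice the total genus of the orbit surfaces. It does, however, rest on the Edmonds--Walsh correspondence as a black box, whereas the paper's argument is fully self-contained.

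If you write out the induction, tighten its wording. Choose $(x\,y)$ with $|\tau'|=|\tau|-1$, i.e.\ $x,y$ in different cycles of $\tau'$. Then the cycles of $\tau$ are those of $\tau'$ with the two cycles of $x$ and $y$ merged. So passing from $\tau'$ to $\tau$ can only \emph{decrease} the orbit count, and it drops by one exactly when $x,y$ lie in different orbits of $\langle\sigma,\tau'\rangle$; in that case both $|\tau|$ and $|\sigma\tau|$ rise by one. In the remaining case $o$ is unchanged, and the left side does not decrease because $|\tau|$ rises by one while $|\sigma\tau|$ drops by at most one. Your phrase ``grows by $2$ exactly when $o$ drops'' is not literally true, but only this one direction is needed.
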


\begin{proof}
    Let $\tau = t_1 \dots t_m$ be a decomposition of $\tau$ into the minimal number of transpositions $m = |\tau|$. We will add these transpositions to $\sigma$ one at a time, i.e., we will consider permutations $\pi_j \coloneqq\sigma t_1 \dots t_j$ for $1 \leq j \leq m$, and let $\pi_0 \coloneqq\sigma$. Now, adding the transposition $t_j$ to $\pi_{j-1}$ will either merge two cycles of $\pi_{j-1}$ into one, or split one cycle into two. Let $M$ and $S$ be the number of merges and splits, respectively, when going from $\sigma$ to $\sigma\tau$ in this manner, with $M + S = m$. We will also have
    \begin{equation} \label{eq:genus-inequality-1}
        \numcyc(\sigma\tau) = \numcyc(\sigma) - M + S.
    \end{equation}
    
    Now, let us define the graph $G_0 \coloneqq G_\sigma$ to be the connectivity graph of $\sigma$, i.e., the graph on vertices $1,\dots,n$ with vertices $i,j$ connected by an undirected edge whenever $\sigma(i) = j$. For each transposition $t_j = (a_j b_j)$, we define the corresponding transposition edge to be the one connecting the indices $a_j$ and $b_j$. We will recursively define graphs $G_1, \dots, G_m$ by adding a $t_j$-edge to $G_{j-1}$.
    
    The resultant graph $G_m$ allows traversal along $\sigma$ and any transposition of $\tau$; as these transpositions generate $\tau$ itself, one can actually perform any $\langle \sigma, \tau\rangle$-traversal along the graph. Requiring $t_1 \dots t_m$ to be a minimal decomposition of $\tau$ is crucial here to ensure that this graph \emph{only} allows  $\langle \sigma,\tau \rangle$-traversal and nothing more. For instance, if $\tau$ consisted of only one cycle, say, $(1 \dots k)$, then a minimal decomposition would have $k-1$ transpositions, each corresponding exactly to a traversal permissible under $\tau$ or $\tau^{-1}$; this generalizes to generic $\tau$ with multiple cycles by applying the same argument within each cycle and noting that a minimal decomposition contains no transpositions across cycles. Thus, the graph $G_m$ allows precisely the traversals under $\langle \sigma, \tau\rangle$, implying its number of connected components is the number of orbits of $\langle \sigma, \tau \rangle$, i.e., $o$.
    
    We will now show that the number of cycle-merges $M$ is at least the number of merges of components. First, as an invariant with respect to $j$, we will maintain that each cycle of $\pi_j$ is entirely contained in some component of $G_j$. Clearly, this is the case for the base graph $G_0$; it suffices to show that adding an edge maintains this invariant. Consider two cases when adding the edge for $t_j = (a_j b_j)$.

    First, imagine that $a_j$ and $b_j$ are already in the same component. Then, adding this edge does not change the connectivity of the graph. Further, this maintains our invariant, as the elements of the new cycle(s) of $\pi_j$ will still be contained in this component if those of $\pi_{j-1}$ were.

    Alternatively, imagine that $a_j$ and $b_j$ belonged to different components. The new edge will result in a merge of these two components. Further, by our invariant, $a_j$ and $b_j$ must have been in different cycles of $\pi_{j-1}$. Consequently, adding the transposition $t_j$ will merge these two cycles. This also maintains our invariant, as all elements of the merged cycle lie in the new merged component. 

    As we have shown, the invariant is maintained throughout, and the only case in which one obtains a merge of components also results in a merge of cycles. Consequently, 
    \begin{equation}
        M \geq \numcyc(\sigma) - o. \label{eq:genus-inequality-2}
    \end{equation}
    Using \Cref{eq:genus-inequality-1}, we write
    \begin{equation}
        \numcyc(\sigma\tau) = \numcyc(\sigma) - M + S = \numcyc(\sigma) + |\tau| - 2M,
    \end{equation}
    as $M + S = |\tau|$. Now, in terms of Cayley length,
    \begin{align}
        |\sigma\tau| &= n - \numcyc(\sigma\tau) = n - \numcyc(\sigma) - |\tau| + 2M 
        \\& \geq n + \numcyc(\sigma) - 2o - |\tau|
        \\&= 2n - 2o - |\sigma| - |\tau|,
    \end{align}
    where the inequality used \Cref{eq:genus-inequality-2}, thus proving the lemma.
\end{proof}

We now prove \Cref{lem:support-pair-estimates}.

\begin{proof}[Proof of \Cref{lem:support-pair-estimates}]
    To construct such an ordered pair, one can start by choosing the $s+u$ indices in $\supp(\sigma) \cup \supp(\tau)$; the number of permutations that can be constructed given these indices is at most a finite function of $s$. Thus, the number of ordered pairs is dominated by the number of choices for $\supp(\sigma) \cup \supp(\tau)$, yielding the $O_s(n^{s + u})$ bound. 

    We will now prove lower bounds on the Cayley length of $|\sigma\tau|$ whenever $\sigma \neq \tau^{-1}$. First, we will show that $|\sigma\tau| \geq 2$; for this, it suffices to show that $|\sigma\tau|$ is even, as the Cayley length is always non-negative and cannot be $0$ when $\sigma \neq \tau^{-1}$. 
    
    Without loss of generality, let us restrict the actions of $\sigma$ and $\tau$ to $\supp(\sigma) \cup \supp(\tau)$, and let $N = |\supp(\sigma) \cup \supp(\tau)| = s+u$ be the size of this union. Adopting the merging and splitting framework in the proof of \Cref{lem:length-of-swap-products}, let us incorporate each of the $|\tau|$ transpositions of $\tau$ into $\sigma$ one at a time and in order; suppose this involves $M$ merges of two distinct cycles and $S$ splits of a cycle into two, with $M+S = |\tau|$. Then,
    \begin{equation}
        \numcyc(\sigma\tau) = \numcyc(\sigma) - M + S = \numcyc(\sigma) - |\tau| + 2S.
    \end{equation}
    However, $|\sigma| = |\tau|$, and so
    \begin{equation}
        \numcyc(\sigma\tau) = \numcyc(\sigma) - (N - \numcyc(\sigma)) + 2S = 2\numcyc(\sigma) + 2S - N.
    \end{equation}
    We thus have
    \begin{equation}
        |\sigma\tau| = N - \numcyc(\sigma\tau) = 2N - 2S - 2\numcyc(\sigma) = 2|\sigma| - 2S,
    \end{equation}
    which is even, as desired. Thus, we have shown that $|\sigma\tau| \geq 2$.

    Next, we will show that $|\sigma\tau| \geq 2u - 2c(\mu)$. We will apply \Cref{fact:genus-inequality} to our pair of permutations. Note that the union $\supp(\sigma) \cup \supp(\tau)$ is entirely covered by the non-trivial cycles of $\sigma$ and $\tau$, implying that the number of orbits $o$ of $\langle \sigma, \tau\rangle$ is at most $2c(\mu)$. Further, we can write $N = s + u = |\mu| + c(\mu) + u$. Consequently, by \Cref{fact:genus-inequality}, 
    \begin{equation}
        |\sigma\tau| + |\sigma| + |\tau| \geq 2(|\mu| + c(\mu) + u - 2c(\mu)) \implies |\sigma\tau| \geq 2u - 2c(\mu),
    \end{equation}
    as $|\sigma| = |\tau| = |\mu|$.

    Lastly, it remains to handle the special case where $u = c(\mu) + 1$ and $|\mu|$ is even. As $\mu \neq \emptyset$, the latter assumption implies $|\mu| \geq 2$. Further, the intersection $\supp(\sigma) \cap \supp(\tau)$ has size
    \begin{equation}
        |\supp(\sigma) \cap \supp(\tau)| = s - u = |\mu| + c(\mu) - c(\mu) - 1 = |\mu| - 1 \geq 1.
    \end{equation}
    Thus, $\supp(\sigma)$ and $\supp(\tau)$ share at least one element. This element belongs to one non-trivial cycle each of $\sigma$ and $\tau$, implying that the corresponding elements in both cycles share an orbit. Thus, we obtain the refined bound $o \leq 2c(\mu) - 1$. Applying \Cref{fact:genus-inequality} again, we get
    \begin{equation}
        |\sigma\tau| + |\sigma| + |\tau| \geq 2(|\mu| + c(\mu) + u - 2c(\mu) + 1) = 2|\mu| + 4 \implies |\sigma\tau| \geq 4,
    \end{equation}
    where we used $u = c(\mu) + 1$ and $|\sigma| = |\tau| = |\mu|$. We have thus shown all desired lower bounds on the Cayley length $|\sigma\tau|$.
\end{proof}

Lastly, we will require the following lemma upper-bounding the inner products of our two mixtures with permutation operators.
\begin{lemma}
    \label{lem:pointwise-decay}
    For $e \in \{a,b\}$, and any permutation $\pi \in S_n$,
    \begin{equation}
        \Tr(\pi \ol{\rho}_e^{(n)}) \leq \left(\frac{C_0}{d}\right)^{|\pi|},
    \end{equation}
    for some $C_0 > 0$ that depends only on $k$.
\end{lemma}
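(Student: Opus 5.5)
Since $\ol{\rho}_e^{(n)} = \E[\brho_e^{\otimes n}]$ with $\brho_e = \wt{\bX}_e/\Tr(\wt{\bX}_e)$, the tilt gives
\begin{equation}
\Tr(\pi\,\ol{\rho}_e^{(n)}) = \frac{\E[\Tr(\pi \bX_e^{\otimes n})]}{\E[\Tr(\bX_e)^n]} = \frac{\E\bigl[\prod_{\tau\in\cyc(\pi)}\Tr(\bX_e^{\mathrm{len}(\tau)})\bigr]}{\E[\Tr(\bX_e)^n]},
\end{equation}
using \Cref{fact:perm-product}. We bound the numerator pointwise and compare it to the denominator.

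The key observation is that $0\preceq\bX_e\preceq\Id$. This gives $\Tr(\bX_e^{\ell})\le \Tr(\bX_e)$ for every $\ell\ge1$; each factor with $\ell\ge 2$ can thus be bounded by $\Tr(\bX_e)$. That bound alone loses the needed $d^{-(\ell-1)}$ factor, however.

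We therefore use a sharper route: $\Tr(\bX_e^\ell)\le \|\bX_e\|_\infty^{\ell-1}\Tr(\bX_e)\le \Tr(\bX_e)$, which still does not produce a $1/d$. Instead, work in normalized form. Since $\Tr(\pi\rho^{\otimes n}) = \prod_\tau \Tr(\rho^{\ell_\tau})$ for any state $\rho$, and $\Tr(\rho^\ell)\le \|\rho\|_\infty^{\ell-1}$, it suffices to show $\|\brho_e\|_\infty\le C_0/d$ in a suitably averaged sense. Pointwise we have $\|\brho_e\|_\infty\le 1/\Tr(\wt{\bX}_e)$ since $\|\bX\|_\infty\le1$. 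Hence
\begin{equation}
\Tr(\pi\,\ol{\rho}_e^{(n)})\le \E\bigl[\Tr(\wt{\bX}_e)^{-|\pi|}\bigr]=\frac{\E[\Tr(\bX_e)^{n-|\pi|}]}{\E[\Tr(\bX_e)^n]},
\end{equation}
where we used $\sum_\tau(\ell_\tau-1)=|\pi|$.

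It remains to show $\E[T^{n-m}]/\E[T^{n}]\le (C_0/d)^m$ for $T=\Tr(\bX_e)$ and $m=|\pi|\le n$. The ratio $m\mapsto \E[T^{n-m}]/\E[T^n]$ is log-convex, i.e. $\E[T^{j}]$ is log-convex in $j$ (Lyapunov). Therefore $\E[T^{n-m}]/\E[T^n]\le (\E[T^{n-1}]/\E[T^n])^m$, because $\E[T^{j-1}]/\E[T^j]$ is nonincreasing in $j$. The successive-ratio argument likewise gives $\E[T^{n-1}]/\E[T^n]\le \E[T^0]/\E[T^1] = 1/(d\nu_e)$. Taking $C_0=\max_e \nu_e^{-1}$, which depends only on $k$ through $a,b$, finishes the proof.

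The step to double-check is the log-convexity claim, namely that $j\mapsto\E[T^{j-1}]/\E[T^j]$ is nonincreasing. This follows from Cauchy--Schwarz, $\E[T^{j}]^2\le \E[T^{j-1}]\E[T^{j+1}]$. Since $T\ge0$, and $T>0$ almost surely when the $\E[T^0]$ ratio is needed (or trivially otherwise), this holds. The other point to check is the pointwise inequality $\Tr(\rho^\ell)\le\|\rho\|_\infty^{\ell-1}$, which is immediate from unit trace. Both are routine, so the main obstacle is only recognizing that normalizing by the tilt converts the permutation inner product into negative moments of $\Tr(\wt{\bX}_e)$.
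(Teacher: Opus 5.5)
Your reduction matches the paper's. With $T=\Tr(\bX_e)$, both arguments reach $\Tr(\pi\,\ol{\rho}_e^{(n)})\le \E[T^{\numcyc(\pi)}]/\E[T^n]$ (note $n-|\pi|=\numcyc(\pi)$), and both finish with a moment inequality plus $\E[T]=\nu_e d$. Incidentally, the bound $\Tr(\bX_e^{\ell})\le\Tr(\bX_e)$ that you dismissed does not lose anything; it is exactly what the paper uses. Your normalized route via $\|\brho_e\|_\infty\le 1/\Tr(\wt{\bX}_e)$ yields the identical intermediate bound. The factors of $1/d$ come from the denominator $\E[T^n]$, not from the individual cycle traces.

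The final step, however, uses an inequality in the wrong direction. Set $r_j\coloneqq \E[T^{j-1}]/\E[T^j]$; you correctly derive from Cauchy--Schwarz that $r_j$ is nonincreasing. That monotonicity gives $\E[T^{n-m}]/\E[T^n]=\prod_{j=n-m+1}^{n} r_j\ge r_n^m$, not $\le$. For example, take $T$ uniform on $\{1,2\}$ and $n=m=2$: the left side is $2/5$ while $r_2^2=9/25$. So the chain ``$\le (\E[T^{n-1}]/\E[T^n])^m\le (1/\E[T])^m$'' is invalid as written, even though its endpoint is true.

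The repair uses only what you already have. Bound each of the $m$ factors directly by $r_j\le r_1=1/\E[T]=1/(d\nu_e)$, which gives $\E[T^{n-m}]/\E[T^n]\le (d\nu_e)^{-m}$. This is equivalent to the paper's step, which applies $L_p$-monotonicity twice, $\E[T^{\numcyc(\pi)}]\le \E[T^n]^{\numcyc(\pi)/n}$ and $\E[T]\le\E[T^n]^{1/n}$, to get $\E[T^{\numcyc(\pi)}]\,\E[T]^{n-\numcyc(\pi)}\le\E[T^n]$. With that correction your proof is complete, with $C_0=\max_e\nu_e^{-1}$ as in the paper.
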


\begin{proof}
    Recall that averaged $n$-fold state can be written as
    \begin{equation}
        \ol{\rho}_e^{(n)} = \frac{\Ex \bX_e^{\otimes n}}{\Ex \Tr (\bX_e)^n},
    \end{equation}
    where $\bX_e$ is drawn from $\PRP_d(e)$. By \Cref{fact:perm-product},
    \begin{equation}
        \Tr(\pi \bX_e^{\ot n}) = \prod_{\tau \in \cyc(\pi)} \Tr(\bX_e^{\mathrm{len}(\tau)}) \leq \Tr(\bX_e)^{\numcyc(\pi)},
    \end{equation}
    as $0 \preceq \bX_e \preceq I$. Thus,
    \begin{equation}
        \Tr(\pi \ol{\rho}_e^{(n)}) \leq \frac{\Ex \Tr(\bX_e)^{\numcyc(\pi)}}{\Ex \Tr(\bX_e)^{n}} \leq (\Ex \Tr(\bX_e))^{\numcyc(\pi) - n} = (\Ex \Tr(\bX_e))^{-|\pi|}. \label{eq:pointwise-decay-1}
    \end{equation}
    To see why the second inequality holds, note that for $0 \leq k \leq n$, by the monotonicity of $L_p$-norms, $\Ex [\Tr(\bX_e) ^k] \leq \Ex[\Tr(\bX_e)^n]^{k/n}$. Applying this inequality twice with $k = \numcyc(\pi)$ and $k = 1$, we get
    \begin{equation}
        \Ex [\Tr(\bX_e)^{\numcyc(\pi)}] \Ex[\Tr(\bX_e)]^{n-\numcyc(\pi)} \leq \Ex[\Tr(\bX_e)^n]^{\numcyc(\pi)/n} \Ex[\Tr(\bX_e)^n]^{\frac{n-\numcyc(\pi)}{n}} = \Ex[\Tr(\bX_e)^n],
    \end{equation}
    yielding the second inequality in \Cref{eq:pointwise-decay-1} after dividing by $\Ex[\Tr(\bX_e)^n] \Ex[\Tr(\bX_e)]^{n-\numcyc(\pi)}$.
    
    Now, by \Cref{eq:pointwise-decay-1}, it suffices to show that $\Ex \Tr(\bX_e) \geq C_0^{-1} d$. Using \Cref{eq:average1}, we have
    \begin{equation}
        \Ex \Tr(\bX_e)= \nu_e d. 
    \end{equation}
    This proves the lemma for $C_0 = \max\{\nu_a^{-1},\nu_b^{-1}\}$, where $\nu_e^{-1} = e_1\dots e_K$ depends only on $k$.

\end{proof}

\subsection{Proofs of \Cref{lem:Jucys--Murphy-small-moments,lem:Jucys--Murphy-large-moments}}

Given \Cref{lem:power-sum-conjugacy-decomposition,lem:support-pair-estimates}, we can finally prove \Cref{lem:Jucys--Murphy-small-moments}.

\begin{proof}[Proof of \Cref{lem:Jucys--Murphy-small-moments}]
    \Cref{lem:power-sum-conjugacy-decomposition} implies
    \begin{equation}
        \calP_m^\circ = \sum_{\mu \neq \emptyset} \frac{A_{m,\mu}}{d^m} K_\mu,
    \end{equation}
    where the sum is over all possible non-trivial cycle types except for $\mu = \emptyset$. As the only non-zero coefficients correspond to cycle types with $|\mu| \leq m$, we must have $s(\mu) \leq 2m$. Consequently, the number of summands above is bounded by some finite function of $m$. We thus have
    \begin{equation}
        \Tr (\ol{\rho}_e^{(n)} (\calP_m^\circ)^2) \leq O_m \left(\sum_{\mu \neq \emptyset} \frac{A_{m,\mu}^2}{d^{2m}} \Tr(\ol{\rho}_e^{(n)} K_\mu^2) \right),
    \end{equation}
    where we use $(A_1 + \dots + A_N)^2 \leq N(A_1^2 + \dots + A_N^2)$ for Hermitian $A_1, \dots, A_N$.
    
    Now, for a fixed cycle type $\mu$, we write
    \begin{equation}
        \Tr (\ol{\rho}_e^{(n)} K_\mu^2) =  \sum_{\sigma,\tau \in C_\mu} \Tr (\ol{\rho}_e^{(n)}\sigma\tau) \leq \sum_{\sigma,\tau \in C_\mu} (C_0/d)^{|\sigma\tau|},
        \label{eq:variance-bound-1}
    \end{equation}
    where the upper bound is from \Cref{lem:pointwise-decay}. Keeping \Cref{lem:support-pair-estimates} in mind, we will further group the summation over ordered pairs $(\sigma,\tau)$ by the size of their difference, i.e., $|\supp(\sigma) \setminus \supp(\tau)|$. Thus, by~\eqref{eq:variance-bound-1},
    \begin{align}
        \frac{A_{m,\mu}^2}{d^{2m}} \Tr (\ol{\rho}_e^{(n)} K_\mu^2) \leq \sum_{u = 0}^{s(\mu)} \sum_{\substack{\sigma,\tau \in C_\mu\\ |\supp(\sigma) \setminus \supp(\tau)| = u}} \frac{A_{m,\mu}^2}{d^{2m}} (C_0/d)^{|\sigma\tau|}.
        \label{eq:variance-bound-2}
    \end{align}
    Let us split the above summation into terms with $\sigma = \tau^{-1}$ and those where this is not the case. Note that in the former case, both permutations have the same supports, corresponding to $u = 0$ and $|\sigma\tau| = 0$. Further, in the latter case, for fixed $\mu,u$, let $\ell_{\mu,u}$ be the best possible lower bound for $|\sigma\tau|$ implied by \Cref{lem:support-pair-estimates}. Then, by \Cref{lem:power-sum-conjugacy-decomposition} and the first part of \Cref{lem:support-pair-estimates}, we have
    \begin{equation}
        \eqref{eq:variance-bound-2} \leq O_m\left(\frac{n^{s(\mu) + 2\delta(\mu)}}{d^{2m}}\right) + \sum_{u = 0}^{s(\mu)} n^{s(\mu) + u + 2\delta(\mu)} \cdot O(d^{-1})^{2m + \ell_{\mu,u}}.
    \end{equation}
    Consequently, 
    \begin{align}
        \Tr (\ol{\rho}_e^{(n)} (\calP_m^\circ)^2) &\leq \sum_{\mu \neq \emptyset, A_{m,\mu} \neq 0} \left[ O_m\left(\frac{n^{s(\mu) + 2\delta(\mu)}}{d^{2m}}\right) + \sum_{u = 0}^{s(\mu)} n^{s(\mu) + u + 2\delta(\mu)} \cdot O_m(d^{-1})^{2m + \ell_{\mu,u}} \right]\\&\leq \max_{\substack{\mu \neq \emptyset, A_{m,\mu} \neq 0 \\ u \in \{0, \dots, s(\mu)\}}} O_m\left(\frac{n^{s(\mu) + 2\delta(\mu)}}{d^{2m}} + \frac{n^{s(\mu) + u + 2\delta(\mu)}}{d^{2m + \ell_{\mu,u}}}\right),\label{eq:variance-bound-3}
    \end{align}
    where the second inequality used that the number of summands above is at most a finite function of $m$ and can thus be absorbed into the $O_m$. Thus, it suffices to show that the two terms above are $o(1)$ for all choices of $\mu, u$, and any $m = O(1)$. 

    First, for the $u$-independent term, note that $s(\mu) + 2\delta(\mu) = m+2-c(\mu)$, as $s(\mu) = |\mu| + c(\mu)$. As $\mu \neq \emptyset,$ we must have $c(\mu) \geq 1$, i.e., $s(\mu) + 2\delta(\mu) \leq m+1$. Consequently, for the first term in \Cref{eq:variance-bound-3}, we have
    \begin{equation}
        O_m\left(\frac{n^{s(\mu) + 2\delta(\mu)}}{d^{2m}}\right) \leq O_m\left(\frac{n^{m+1}}{d^{2m}}\right). 
    \end{equation}
    Now, for constant $m$, the above is at most $o(1)$ unless $n \geq \Omega(d^{\frac{2m}{m+1}}) \geq \Omega(d^{\vartheta_m})$, as $\frac{2m}{m+1} \geq \vartheta_m$ whenever $m \geq 2$. Thus, under the hypothesis of the lemma, the above term is $o(1)$.

    It remains now to show that the latter term in \Cref{eq:variance-bound-3} is $o(1)$ whenever $n = o(d^{\vartheta_m})$ and for all valid choices of $\mu,u$ and any constant $m$. We will do this by splitting into three cases depending on $u$ and applying \Cref{lem:support-pair-estimates} to obtain $\ell_{\mu,u}$ in each case. Recall that each case only considers contributions from pairs with $\sigma \neq \tau^{-1}$. 

    \paragraph{Case 1: $u \leq c(\mu)$.} For the exponent of $n$, we have $s(\mu) + u + 2\delta(\mu) = m + 2 + u - c(\mu) \leq m+2$. For the exponent of $d$, \Cref{lem:support-pair-estimates} tells us $\ell_{\mu,u} \geq 2$, and so 
    \begin{equation}
        n^{s(\mu) + u + 2\delta(\mu)} \cdot O_m(d^{-1})^{2m + \ell_{\mu,u}} \leq n^{m+2} \cdot O_m(d^{-1})^{2m+2}.
    \end{equation}
    For constant $m$, the above is $o(1)$ unless $n \geq \Omega(d^{\frac{2m+2}{m+2}}) > \Omega(d^{\vartheta_m})$, as $\frac{2m+2}{m+2} > \vartheta_m$ for all $m \geq 2$. By the hypothesis of the lemma, this variance term is thus $o(1)$.

    \paragraph{Case 2: $u = c(\mu) + 1$.} Here, we have $s(\mu) + u + 2\delta(\mu) = m+3$. Let us first consider the case when $m$ is even. Then, by \Cref{lem:power-sum-conjugacy-decomposition}, the Cayley length $|\mu|$ must also be even. By the ``moreover'' part of \Cref{lem:support-pair-estimates}, we obtain $\ell_{\mu,u} \geq 4$. Thus, 
    \begin{equation}
        n^{s(\mu) + u + 2\delta(\mu)} \cdot O_m(d^{-1})^{2m + \ell_{\mu,u}} \leq n^{m+3} \cdot O_m(d^{-1})^{2m+4}.
    \end{equation}
    As before, the above term is $o(1)$ unless $n \geq \Omega(d^{\frac{2m+4}{m+3}}) > \Omega(d^{\vartheta_m})$, as $\frac{2m+4}{m+3} > \vartheta_m$ for all even $m \geq 2$. 
    
    Now, for odd $m$, it suffices to take the weaker bound of $\ell_{\mu,u} \geq 2$ from \Cref{lem:support-pair-estimates}, implying
    \begin{equation}
        n^{s(\mu) + u + 2\delta(\mu)} \cdot O_m(d^{-1})^{2m + \ell_{\mu,u}} \leq n^{m+3} \cdot O_m(d^{-1})^{2m+2}.
    \end{equation}
    Again, this is $o(1)$ unless $n \geq \Omega(d^{\frac{2m+2}{m+3}}) = \Omega(d^{\vartheta_m})$, as one can verify that $\frac{2m+2}{m+3} = \vartheta_m$ for all odd $m$. 
    
    Thus, under the hypothesis of the lemma, for both odd and even $m$, the variance contribution is $o(1)$ whenever $u = c(\mu) + 1$.

    \paragraph{Case 3: $u \geq c(\mu) + 2$.} Here, we will just write $s(\mu) + u + 2\delta(\mu) = m+2 + u - c(\mu)$, and, by \Cref{lem:support-pair-estimates}, $\ell_{\mu,u} \geq 2u - 2c(\mu)$. Consequently,  
    \begin{equation}
        n^{s(\mu) + u + 2\delta(\mu)} \cdot O_m(d^{-1})^{2m + \ell_{\mu,u}} \leq n^{m+2 + u -c(\mu)} \cdot O_m(d^{-1})^{2m+2u-2c(\mu)}.
    \end{equation}
    Now, this is again $o(1)$ unless $n \geq \Omega(d^{\frac{2m+2u-2c(\mu)}{m+2+u-c(\mu)}})$. However, this exponent satisfies
    \begin{equation}
        \frac{2m+2u-2c(\mu)}{m+2+u-c(\mu)} = 2 - \frac{4}{m+2 + u - c(\mu)} \geq 2 - \frac{4}{m+4} \geq 2-\frac{2}{\lfloor m/2 \rfloor + 2} = \beta_m,
    \end{equation}
    where the first inequality used that, in this case, $u \geq c(\mu) +2$. Consequently, for $n = o(d^{\vartheta_m})$, the variance contribution in this case is again $o(1)$. 
\end{proof}

Lastly, we prove \Cref{lem:Jucys--Murphy-large-moments}.

\begin{proof}[Proof of \Cref{lem:Jucys--Murphy-large-moments}]
    We will first rewrite the normalized power-sums in terms of products of swaps.
    \begin{equation}
        d^{2r} \calP_{2r} = \sum_{t = 1}^n J_t^{2r} = \sum_{t = 1}^n \sum_{i_1, \dots, i_{2r}} (i_1 t) \dots (i_{2r} t). \label{eq:even-moments-1}
    \end{equation}
    We will group the products of transpositions above by the Cayley length $\ell$ of the resulting permutation. For any such product, let $q$ denote the number of unique elements in the multiset $\{i_1, \dots, i_{2r}\}$. As shown in \Cref{lem:length-of-swap-products}, this will satisfy $q \leq r + \ell/2$. 
    
    Now, such a product can be chosen by first picking the index $t$, then the $q$ unique elements in $\{i_1,\dots, i_{2r}\}$, and then appropriately ordering these elements. There are $n$ ways to pick $t$, at most $\binom{n-1}{q} \leq n^q$ ways to pick the unique indices, and, crudely, at most $q^{2r} \leq (2r)^{2r}$ ways\footnote{We made no effort to optimize this bound as it depends only on $r$.} to order these indices. In total, there are at most $(2r)^{2r} \cdot n^{1 + r + \ell/2}$ summands in \Cref{eq:even-moments-1} that correspond to permutations of Cayley length $\ell$. Now, \Cref{lem:pointwise-decay} implies
    \begin{equation}
        d^{2r} \Tr(\ol{\rho}_e^{(n)}\calP_{2r}) \leq (2r)^{2r} n^{r+1} \sum_{\ell = 0}^{2r} n^{\ell/2} \cdot (C_0/d)^\ell \leq (2r)^{2r} n^{r+1} \cdot 2,
    \end{equation}
    whenever $n \leq \frac{d^2}{4C_0^2}$. Dividing both sides by $d^{2r}$ proves the lemma with $C_{k,r} = 2 \cdot (2r)^{2r}$ and $c_k = \frac{1}{4C_0^2}$.
\end{proof}

\section{Spectral Separation}
\label{sec:spectral-separation}

Throughout this section, we fix constant $k\in\N$, $K=K(k)$, and $a,b\in\Z_+^K$ as in \Cref{prop:prouhet}.
In this section, all $O(\cdot)$ notation permits constants that may depend on $k$, $a$, and $b$.

\subsection{Deterministic moment separation}\label{sec:det_moment}

In this section, we discuss how the Prouhet–Thue–Morse construction implies matching of the first~$k$ (suitably normalized) expected moments, and a large enough separation for the $(k+1)$th moment. We will show later that these quantities track the typical moments of the tilted ensembles.
For $e\in\{a,b\}$, we denote the normalized expected  $j$th moment as

\begin{equation}
\Psi_j(e)
\coloneqq
\frac{
\E\!\left[\tr(\bX_e^j)\right]
}{
\E[\tr(\bX_e)]^j
}\,,
\label{eq:spectral-moment-proxy}
\end{equation}
where we recall that $\tr(A)\coloneqq d^{-1}\Tr(A)$ denotes the normalized trace. Tracking how $\Psi_j(e)$ changes by increasing $j$ gives the following lemma.

\begin{lemma}
\label{lem:first-spectral-gap}
For $1\leq j\leq k$,
\begin{equation}
\Psi_j(a)-\Psi_j(b)
=
O(d^{-1}).
\label{eq:low-degree-proxy-matching}
\end{equation}
There is also a constant $\kappa_k\neq0$ such that
\begin{equation}
\Psi_{k+1}(a)-\Psi_{k+1}(b)
=
\kappa_k+O(d^{-1}).
\label{eq:first-proxy-gap}
\end{equation}
\end{lemma}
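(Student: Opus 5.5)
The plan is to compute $\Psi_j(e)$ exactly at the level of leading order in $1/d$, via the tensor moment formula of \Cref{prop:f}, and then to identify the limiting values with moments of a free multiplicative convolution whose $S$-transform is governed by $f_e$. Since $\E[\tr(\bX_e)] = \nu_e$ exactly by \Cref{eq:average1}, it suffices to analyze $\E[\tr(\bX_e^j)]/\nu_e^j$.

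\textbf{Step 1: trace of a full cycle.} By \Cref{fact:perm-product}, $\Tr(\bX^j) = \Tr(c_j \bX^{\otimes j})$ for the full cycle $c_j=(1\,2\,\cdots\,j)\in S_j$. \Cref{prop:f} with $n=j$ then gives
\begin{equation}
\frac{\E[\tr(\bX_e^j)]}{\nu_e^j} = \frac1d \Tr\Bigl(c_j \prod_{t=1}^j f_e(\wt J_t)\Bigr).
\end{equation}
This is a finite-dimensional identity. Since $\|\wt J_t\|\le (t-1)/d$, for $d$ large I can expand each $f_e(\wt J_t)$ as a convergent power series in $\wt J_t$. Expanding the product yields a sum of permutations $\pi\in S_j$ with coefficients carrying $d^{-(\text{number of transpositions})}$. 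Each such term contributes $\Tr(c_j\pi)=d^{\numcyc(c_j\pi)}$.

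\textbf{Step 2: leading order and the $O(d^{-1})$ error.} The standard genus bound $\numcyc(c_j\pi)\le j+1-|\pi|$ shows that every word of $r$ transpositions contributes at most $d^{\,1+j-|\pi|-r}/d = d^{\,j-|\pi|-r}$. The dominant terms are therefore the geodesic ones ($r=|\pi|$, with $\pi$ on a geodesic from the identity to $c_j^{-1}$, i.e.\ noncrossing), giving an $O(1)$ quantity $m_j(e)$. All other terms are $O(d^{-1})$, and in fact $O(d^{-2})$ by parity, since $r\equiv|\pi|$. The number of terms depends only on $j\le k+1$, so $\Psi_j(e)=m_j(e)+O(d^{-1})$.

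\textbf{Step 3: identifying $m_j(e)$ and the role of the Prouhet--Tarry--Escott matching.} Rather than evaluating the noncrossing sum by hand, I would identify $m_j(e)$ as the moments of the free multiplicative convolution of Bernoulli$(1/a_i)$ laws. This is the standard asymptotic freeness of independent Haar projections, and it is consistent with Step 2. A Bernoulli$(p)$ law has $S$-transform $(1+z)/(p+z)$, so the rescaled limit $\bX_e/\nu_e$ has $S$-transform $\prod_i a_i(1+z)/(1+a_i z)=1/f_e(z)$. Equivalently, and without invoking freeness, Step 2 shows directly that $m_j(e)$ is a universal polynomial in the Taylor coefficients of $\log f_e$ up to degree $j-1$. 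The moment--$S$-transform relation is triangular: $m_{j}$ depends on $S$-coefficients of order $<j$, and the coefficient of order $j-1$ enters linearly with a nonzero universal constant (e.g.\ $-\,j$ times a Catalan-type factor). By \Cref{prop:not}, $\log f_b-\log f_a=\beta_k z^k+O(z^{k+1})$. The $S$-transforms therefore agree through order $z^{k-1}$, which gives $m_j(a)=m_j(b)$ for $j\le k$. At order $k+1$, with all lower data matched, the difference is the universal nonzero constant times $\beta_k\neq0$, which defines $\kappa_k\neq 0$.

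\textbf{Main obstacle.} The hard part is Step 3's claim that the degree-$(k+1)$ moment depends on the order-$k$ coefficient with a nonzero universal factor. I would first try to verify this through the Lagrange-inversion formula relating $\psi(z)=\sum_j m_j z^j$ and $S$ via $\chi(z)=\frac{z}{1+z}S(z)$, $\psi(\chi(z))=z$. Linearizing around equal lower-order data then gives $\delta m_{k+1}=-\,\delta[\chi]_{k+1}\cdot(\chi_1)^{-(k+2)}\cdot(\text{nonzero})$. If that becomes messy, a fallback is to compute the geodesic sum directly: the only degree-$(k+1)$ discrepancy comes from the $\wt J^{k}$ coefficient of $\log f_e$ acting inside a single factor. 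There its contribution to $\frac1d\Tr(c_{k+1}\cdot)$ is the count of geodesic $k$-transposition words in $J_{k+1}^k$ producing $c_{k+1}^{-1}$, which is a positive count.
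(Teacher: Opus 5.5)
Your proposal is correct in substance, but your main route differs from the paper's. The paper's proof of this lemma never identifies the limits of $\Psi_j(e)$. It telescopes $\prod_{t\le j}f_a(\wt{J}_t)-\prod_{t\le j}f_b(\wt{J}_t)$ using $\|\wt{J}_t\|_\infty\le (t-1)/d$ and $f_a-f_b=-\beta_k z^k+O(z^{k+1})$, which gives an operator-norm bound $O(d^{-k})$. Then $|\tr(\zeta_j Y)|\le d^{j-1}\|Y\|_\infty$ yields $O(d^{j-1-k})=O(d^{-1})$ for $j\le k$, so no genus expansion is needed for the matching part. For $j=k+1$ the same telescoping isolates $-\beta_k\sum_t\wt{J}_t^k$. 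The only word with non-negligible normalized trace against $\zeta_{k+1}$ is $(1\,k{+}1)(2\,k{+}1)\cdots(k\,k{+}1)=\zeta_{k+1}^{-1}$, with coefficient $1$ in $J_{k+1}^k$, so $\kappa_k=-\beta_k$. That is exactly your fallback.

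Your primary route instead computes each limit separately: a leading-order noncrossing sum, identified via asymptotic freeness with moments of the free multiplicative convolution, as in \Cref{lem:free-limit}. You then compare the two limits through the $S$-transform, which for the rescaled limit is $1/f_e$. This gives a conceptual explanation: the Prouhet--Tarry--Escott condition is literally agreement of $S$-transforms to order $k-1$. The price is the freeness input, an error-controlled leading-order expansion, and the inversion step. The step you flag as the main obstacle is easy. Since $S_e(0)=1/f_e(0)=1$, one has $\chi_e(z)=z+O(z^2)$, and linearizing $\psi(\chi(z))=z$ gives $m_{k+1}(a)-m_{k+1}(b)=-[z^{k+1}](\chi_a-\chi_b)=-[z^k](S_a-S_b)=-\beta_k$. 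So the universal constant is exactly $-1$ (not a Catalan-type factor), in agreement with the paper.

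A few slips need fixing.
\begin{itemize}
\item In Step 2 the genus bound is misstated. $\numcyc(c_j\pi)\le j+1-|\pi|$ already fails for $\pi=c_j^{-1}$, and the resulting factor $d^{\,j-|\pi|-r}$ would not even be $O(1)$. The correct bound is $\numcyc(c_j\pi)\le 1+|\pi|$, from $|c_j\pi|\ge |c_j|-|\pi|$. So a word of $r$ transpositions contributes at most $d^{|\pi|-r}$, with equality iff $r=|\pi|$ and $\pi$ lies on a geodesic from the identity to $c_j^{-1}$.
\item The power series of $f_e$ is infinite, so ``the number of terms depends only on $j$'' needs a truncation. Words of total length $r\ge j$ contribute $O(d^{-1})$ in aggregate, since each has $|\tr(c_j\pi)|\le d^{j-1}$ and their total weights decay geometrically like $(Cj/d)^r$. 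Alternatively, use an operator-norm Taylor remainder, as the paper does.
\item $\prod_i a_i(1+z)/(1+a_iz)$ is the $S$-transform of the unnormalized free product. Rescaling by $1/\nu_e$ multiplies it by $\nu_e$ and gives $1/f_e$.
\end{itemize}
Finally, your ``without freeness'' remark already proves the matching half directly. For $j\le k$ the geodesic words have length at most $k-1$, so only Taylor coefficients on which $f_a$ and $f_b$ agree can enter $m_j(e)$. It does not, however, by itself identify $m_j(e)$ with the $S$-transform polynomial. For the gap at order $k+1$ you therefore need either freeness or the direct single-word count.
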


\begin{proof}
Fix $j\leq k+1$ and let $\zeta_j=(1~2~\cdots~j)$.  By
\Cref{eqn:tr},
\begin{equation}
\Tr(\zeta_j X^{\otimes j})=\Tr(X^j).
\end{equation}
Using \Cref{eq:average1} and \Cref{prop:f} we can simplify the $\nu_e^j$ term and write
\begin{equation}
\Psi_j(e)
=
\tr\parens*{\zeta_j \prod_{t=1}^j f_e(\wt{J}_t)}\,.
\end{equation}
We have the following asymptotic bound as a consequence of \Cref{prop:not} and the fact that $f_a(0)=f_b(0)=1$:
\begin{equation}
f_a(z)-f_b(z)=-\beta_{k}z^{k}+O(z^{k+1})\,.\label{eq:spectral-separation-1}
\end{equation}
Since every transposition acts unitarily on $(\C^d)^{\otimes n}$,
the triangle inequality gives
\begin{equation}
\|J_t\|_\infty
=
\left\|
\sum_{i=1}^{t-1}(i~t)
\right\|_\infty
\leq
\sum_{i=1}^{t-1}\|(i~t)\|_\infty
=
t-1.
\label{eq:jm-operator-norm}
\end{equation}
Consequently, $\|\wt{J}_t\|_\infty
\leq
\frac{t-1}{d}$ and, for $t\leq j\leq k+1$, $\|f_e(\wt{J}_t)\|_{\infty}$ is uniformly bounded by a $k$-dependent constant. We can then telescope the product and use submultiplicativity of the operator norm along with \Cref{eq:spectral-separation-1} to obtain:
\begin{equation}
\left\|\prod_{t=1}^jf_a(\wt{J}_t)-\prod_{t=1}^jf_b(\wt{J}_t)\right\|_{\infty}\leq \sum_{s=1}^{j}\left\|\prod_{t=1}^{s-1}f_a(\wt{J}_t)\left(f_a(\wt{J}_s)-f_b(\wt{J}_s)\right)\prod_{t'=s+1}^{j}f_b(\wt{J}_{t'})\right\|_{\infty}\leq O(d^{-k})\,.
\end{equation}
The cycle $\zeta_j$ is unitary, so $\|\zeta_j\|_{1}=d^{j}$, and it follows that
\begin{equation}
|\Psi_j(a)-\Psi_j(b)|\leq \frac{\|\zeta_j\|_{1}\left\|\prod_{t=1}^jf_a(\wt{J}_t)-\prod_{t=1}^jf_b(\wt{J}_t)\right\|_{\infty}}{d}
\leq\frac{d^{j}O(d^{-k})}{d}=O(d^{-1})\,,
\end{equation}
for $j \leq k-1$, proving the first claim.
The same telescoping trick and expansions also show that, for $j=k+1$,
\begin{equation}
\prod_{t=1}^jf_a(\wt{J}_t)-\prod_{t=1}^jf_b(\wt{J}_t)=-\beta_k\sum_{t=1}^{k+1}\wt{J}^k_t+E\,,\qquad \|E\|_{\infty}=O(d^{-k-1})
\end{equation}
We then have 
\begin{equation}
\Psi_{k+1}(a)-\Psi_{k+1}(b)
=-\beta_{k}\sum_{t=1}^{k+1}\tr\parens{\zeta_{k+1}\wt{J}_{t}^{k}}
+O(d^{-1}).
\end{equation}
The cycle $\zeta_{k+1}^{-1}$ can only be realized as a product of $k$ transpositions if such transpositions touch all the $k+1$ points moved by $\zeta_{k+1}$. Each term $\wt{J}_{t}^{k}$ can be expanded as a sum of products of transpositions with center $t$, but only $\wt{J}_{k+1}$ contains enough different transpositions to reconstruct $\zeta_{k+1}^{-1}$, indeed $(1\, k+1)(2\, k+1)\cdots (k\, k+1)=\zeta_{k+1}^{-1}$. It follows that the coefficient of $\zeta_{k+1}^{-1}$ in ${J}_{k+1}^{k}$ is $1$, since any other order would produce a different $k+1$ cycle. Moreover, all the other permutations $\pi$ in the expansion of ${J}^{k}_{t}$ are such that $\zeta_{k+1}\pi$ is not the identity permutation, therefore it has at most $k$ cycles, and $\Tr(\zeta_{k+1}\pi)\leq d^{k}$. Therefore, since the expansion contains only $O(1)$ terms,
\begin{equation}
\sum_{t=1}^{k+1}\tr\parens{\zeta_{k+1}\wt{J}_{t}^{k}}=1+O(d^{-1}),
\end{equation}
and the claim follows.
\end{proof}

\subsection{Probabilistic spectral separation}

The random product of projectors $\bX$ can be seen as a function of $K$ Haar-random unitaries $\bU_{1},\cdots, \bU_{K}$, through $\bPi_i=\bU_iP^{(0)}_{a_i}\bU_i^{\dagger}$, where $P^{(0)}_{a_i}$ is some fixed projector of rank $d/a_i$. Let $F:\mathrm{U}(d)^{\times K}\rightarrow \mathbb{R}$ be a real function that is $\Lambda$-Lipschitz with respect to the $L_2$-sum of Hilbert--Schmidt metrics on $U(d)$, i.e., for any $\{U_i\},\{U'_i\}\in\mathrm{U}(d)^{\times K}$

\begin{equation}
|F(U_1,\cdots, U_K)-F(U'_1,\cdots, U'_K)|\leq \Lambda \sqrt{\sum_{i=1}^{K}\|U_i-U'_i\|^2_{2}}.
\end{equation}
By a standard result on the concentration of Lipschitz functions of independent Haar-random unitaries, see e.g.~\cite[Theorem~5.17]{Meckes_2019}, we have
\begin{equation}\label{eq:concentration-lips}
\Pr(|F(\bU_1,\cdots, \bU_K)-\E F(\bU_1,\cdots, \bU_K))|\geq u)\leq 2e^{-\frac{(d-2)u^2}{24\Lambda^2}}\,.
\end{equation}

In the following, we use non-bold variables with the same functional relations as the corresponding random variables.

\begin{lemma}\label{lem:lipschitz}
The function $F_j(U_1,\cdots, U_K)\coloneqq\tr(X^j)$ is $\left(\frac{4Kj^2}{d}\right)^{1/2}$-Lipschitz.
\end{lemma}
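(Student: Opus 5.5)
The plan is to bound the change of $F_j$ under a perturbation of a single unitary $U_i$, and then combine the $K$ coordinates by Cauchy--Schwarz. Write $X = W^\dagger W$ with $W = \Pi_K\cdots\Pi_1$ and $\Pi_i = U_i P^{(0)}_{a_i} U_i^\dagger$. All $\Pi_i$ are projectors, so $\|\Pi_i\|_\infty \le 1$, $\|W\|_\infty\le 1$ and $0\preceq X\preceq \Id$.

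The first step is to control the projectors. For $U_i, U_i'$, we have
\[
\Pi_i-\Pi_i' = (U_i-U_i')P U_i^\dagger + U_i' P (U_i-U_i')^\dagger,
\]
which gives $\|\Pi_i-\Pi_i'\|_2\le 2\|U_i-U_i'\|_2$.

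The second step passes this to $X$. Telescoping over the $2K-1$ projector factors in the sandwiched word $X=\Pi_1\cdots\Pi_{K-1}\Pi_K\Pi_{K-1}\cdots\Pi_1$, each $\Pi_i$ with $i<K$ appears twice and $\Pi_K$ once. Using $\|ABC\|_2\le\|A\|_\infty\|B\|_2\|C\|_\infty$, we get
\[
\|X-X'\|_2 \le \sum_i 2\cdot 2\|U_i-U_i'\|_2 \le 4\sqrt{K}\Big(\sum_i\|U_i-U_i'\|_2^2\Big)^{1/2}.
\]
This constant is a bit larger than the target $(4Kj^2/d)^{1/2}$, so it needs tightening. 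One option is to handle $\tr(X^j)=\tr((WW^\dagger)^j)$ via $W$ rather than $X$, which removes the double appearance. Then $\|W-W'\|_2\le \sum_i 2\|U_i-U_i'\|_2\le 2\sqrt{K}(\sum\|\cdot\|^2)^{1/2}$.

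The third step handles the power. Telescoping gives
\[
X^j - X'^j=\sum_{s=0}^{j-1} X^{s}(X-X')X'^{j-1-s},
\]
so $|\Tr(X^j)-\Tr(X'^j)|\le j\,\|X-X'\|_1 \le j\sqrt{d}\,\|X-X'\|_2$ by Cauchy--Schwarz and $\|X\|_\infty\le1$. Using instead $X=W^\dagger W$, each $X-X'$ splits into two terms linear in $W-W'$, which contributes another factor $2$. Dividing by $d$ for the normalized trace yields
\[
|F_j-F_j'|\le \frac{2j\,\|W-W'\|_2}{\sqrt d} \le \frac{4j\sqrt{K}}{\sqrt d}\Big(\sum_i\|U_i-U_i'\|_2^2\Big)^{1/2}.
\]
This is $(16Kj^2/d)^{1/2}$.

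The main obstacle is getting exactly the constant $4$ rather than $16$. The dependence $\Lambda^2 = O(Kj^2/d)$ is what matters for \eqref{eq:concentration-lips}, but to match the stated constant I would sharpen two places. First, I would use $\|\Pi_i-\Pi_i'\|_2\le\sqrt2\|U_i-U_i'\|_2$, obtained from the rank-structured bound via the commutator form. Second, I would write $\Tr(X^j)=\Tr((WW^\dagger)^j)$ and differentiate, so that the two $W$-factors each contribute $j$ symmetric terms totalling $2j$ rather than $4j$. If the exact constant proves delicate, I would note that any $\Lambda=O_k(j\sqrt{K/d})$ suffices for the downstream concentration argument.
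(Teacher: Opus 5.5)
Your chain of estimates is essentially the paper's: Cauchy--Schwarz to pass from $\tr$ to a norm, telescoping the power $X^j$, a factor $2$ from $X=W^\dagger W$, telescoping $W$ over the projectors, and Cauchy--Schwarz over $i\in[K]$. But as written it only gives the constant $(16Kj^2/d)^{1/2}$, so it does not prove the lemma as stated.

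All of the loss is in your projector bound $\|\Pi_i-\Pi_i'\|_2\le 2\|U_i-U_i'\|_2$, and your proposed repairs do not close it. The $\sqrt2$ bound would give $(8Kj^2/d)^{1/2}$. The second ``sharpening'' via $\Tr((WW^\dagger)^j)$ gains nothing. Your final chain already counts only $2j$ occurrences of $W$ or $W^\dagger$, each contributing $\|W-W'\|_2$. The ``$4j$'' in your last display is that $2j$ times the factor $2$ from the projector step, so your two proposed improvements are the same one counted twice.

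The missing idea is that the projector step holds with constant $1$: $\|\Pi_i-\Pi_i'\|_2\le\|U_i-U_i'\|_2$. Write $P=P^{(0)}_{a_i}$ and $V=U_i^\dagger U_i'$. Unitary invariance of the Hilbert--Schmidt norm gives $\|U_iPU_i^\dagger-U_i'PU_i'^\dagger\|_2=\|P-VPV^\dagger\|_2=\|[P,V]\|_2=\|[P,V-\Id]\|_2$. For any $A$, $[P,A]=PA(\Id-P)-(\Id-P)AP$ consists exactly of the two off-diagonal blocks of $A$ with respect to $\mathrm{range}(P)\oplus\ker(P)$. These blocks are Hilbert--Schmidt orthogonal to each other and to the diagonal blocks, so $\|[P,A]\|_2^2=\|PA(\Id-P)\|_2^2+\|(\Id-P)AP\|_2^2\le\|A\|_2^2$. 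Taking $A=V-\Id$ and using $\|V-\Id\|_2=\|U_i'-U_i\|_2$ gives the claim.

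Substituting this into your chain gives $|F_j-F_j'|\le 2j\sqrt{K/d}\,(\sum_i\|U_i-U_i'\|_2^2)^{1/2}$, which is exactly $(4Kj^2/d)^{1/2}$; this is how the paper obtains the stated constant. You are right that only $\Lambda^2=O_{K,j}(1/d)$ matters for the concentration step in \Cref{lem:power-sum-concentration}, where a worse constant only shrinks $c_{\varepsilon,\ell}$. But that fallback proves a weaker statement than the one asked.
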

\begin{proof}
By Cauchy--Schwarz,

\begin{equation}
|F_j(U_1,\cdots, U_K)-F_j(U'_1,\cdots, U'_K)|= \left|\tr\parens{X^j-X'{}^j}\right|\leq\frac{\|X^j-X'{}^j\|_2}{\sqrt{d}}\,.
\end{equation}

Recalling $W = \Pi_K\dots\Pi_1$ and $X = W^\dag W$, we have the following chain of inequalities:
\begin{equation}
\|X^j-X'{}^j\|_2\leq j \|X-X'{}\|_2\leq 2j \|W-W'{}\|_2\leq 2j\sum_{i=1}^K\|\Pi_i-\Pi'_i\|_{2}\leq 2j\sum_{i=1}^{K}\|U_i-U'_i\|_2,
\end{equation}
where the first is using a telescoping sum, the second uses $\|ABC\|_2\leq \|A\|_{\infty}\|B\|_{2}\|C\|_\infty$ and that sandwiched projections are contractions, the third is again a telescoping trick; the last uses that, if $P$ is a projector, $\|U_iPU_{i}-U'_iPU'_{i}\|^2_2=\|[P,(U_iU_{i}'{}^{\dagger}-I)]]\|_2^2\leq \|U'_iU_{i}^{\dagger}-I\|^2_{2}=\|U_i-U'_{i}{}^{\dagger}\|^2_2$, with the inequality due to Hilbert-Schmidt orthogonality.
A further application of Cauchy--Schwarz gives
\begin{equation}
|F_j(U_1,\cdots, U_K)-F_j(U'_1,\cdots, U'_K)|\leq \left(\frac{4Kj^2}{d}\right)^{1/2}\sqrt{\sum_{i=1}^{K}\|U_i-U'_i\|^2_{2}}. \qedhere
\end{equation}
\end{proof}
We can then prove the main technical tool for proving spectral separation.
\begin{lemma}[Concentration of normalized power sums]
\label{lem:power-sum-concentration}
Fix a positive integer $\ell$. There is a constant $B>0$ such that, for every $\varepsilon>0$ and $e\in\{a,b\}$, the event
\begin{equation}
\wt{\mathcal E}_e(\varepsilon,\ell)
\coloneqq
\left\{
\|\brho_e\|_\infty>\frac{B}{d}
\quad\text{or}\quad
\max_{1\leq j\leq \ell}
\left|
\mathrm{pow}_j\parens{\spec(\brho_e)}
-\Psi_j(e)
\right|
>\varepsilon
\right\}
\label{eq:typical-spectrum-event}
\end{equation}
satisfies
\begin{equation}
\Pr\parens*{\wt{\mathcal E}_e(\varepsilon,\ell)}
\leq
\exp\parens{O(n)-\Omega(d^2)}.
\label{eq:atypical-spectrum-probability}
\end{equation}
The implicit constants may depend on $\varepsilon$ and $\ell$.
In particular, if $n=o(d^2)$, then
\begin{equation}
\Pr\parens*{\wt{\mathcal E}_a(\varepsilon,\ell)\,\cup\,\,  \wt{\mathcal E}_b(\varepsilon,\ell)}
\leq
\exp\parens{-\Omega(d^2)}. 
\end{equation} 
\end{lemma}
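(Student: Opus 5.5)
Since $\brho_e = \wt{\bX}_e/\Tr(\wt{\bX}_e)$, I will write both the event $\wt{\mathcal E}_e(\varepsilon,\ell)$ and its complement in terms of the untilted matrix $\bX_e$. Then \Cref{prop:prob_tilt} turns the tilted probability into $\nu_e^{-n}\Pr(\bX_e\in A)=\exp(O(n))\Pr(\bX_e\in A)$. So it suffices to show that, under the untilted law, the corresponding event has probability $\exp(-\Omega(d^2))$.

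The key identity is that for any PSD $X$ with $\Tr X>0$ and $\rho=X/\Tr X$,
\begin{equation}
\mathrm{pow}_j(\spec(\rho)) = d^{j-1}\frac{\Tr(X^j)}{\Tr(X)^j} = \frac{\tr(X^j)}{\tr(X)^j}, \qquad \|\rho\|_\infty = \frac{\|X\|_\infty}{d\,\tr(X)} \leq \frac{1}{d\,\tr(X)},
\end{equation}
using $0\preceq X\preceq \Id$. I will work on the good event
\begin{equation}
\mathcal G=\Bigl\{|\tr(\bX_e)-\nu_e|\le \eta\Bigr\}\cap\bigcap_{j\le \ell}\Bigl\{|\tr(\bX_e^j)-\E\tr(\bX_e^j)|\le\eta\Bigr\}.
\end{equation}
Here $\eta$ is a small constant depending on $\varepsilon,\ell,\nu_e$, with $\eta\le\nu_e/2$.

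On $\mathcal G$ we have $\tr(\bX_e)\ge\nu_e/2$, so $\|\brho_e\|_\infty\le 2/(\nu_e d)$, and $B\coloneqq 2\max_e\nu_e^{-1}$ works. For the power sums, note $\E\tr(\bX_e)=\nu_e$ by \Cref{eq:average1}, so $\Psi_j(e)=\E\tr(\bX_e^j)/\nu_e^j$. Since all the numerators are at most $1$ and the denominators are at least $(\nu_e/2)^j$, the map $(x,y)\mapsto x/y^j$ is Lipschitz with constant $O_{\ell}(\nu_e^{-\ell-1})$ on the relevant region. Choosing $\eta$ small enough then gives $|\mathrm{pow}_j-\Psi_j(e)|\le\varepsilon$ for all $j\le\ell$. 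Hence $\wt{\mathcal E}_e(\varepsilon,\ell)\subseteq\{\wt{\bX}_e\notin\mathcal G\}$.

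For the untilted bound, \Cref{lem:lipschitz} says each $F_j=\tr(X^j)$ is $\Lambda_j=(4Kj^2/d)^{1/2}$-Lipschitz. Plugging this into \Cref{eq:concentration-lips} with $u=\eta$ gives
\begin{equation}
\Pr\bigl(|F_j-\E F_j|\ge\eta\bigr)\le 2\exp\Bigl(-\tfrac{(d-2)\eta^2 d}{96Kj^2}\Bigr)=\exp(-\Omega(d^2)).
\end{equation}
A union bound over $j\le\ell$ (the case $j=1$ covers the trace) gives $\Pr(\bX_e\notin\mathcal G)\le\exp(-\Omega(d^2))$. One point to check: $\mathcal G$ is defined through $\bX_e$, and the tilted variable is a reweighting of the same random matrix, so \Cref{prop:prob_tilt} applies directly to the set $\mathcal G^c$. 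Combining, $\Pr(\wt{\mathcal E}_e)\le \nu_e^{-n}e^{-\Omega(d^2)}=\exp(O(n)-\Omega(d^2))$. When $n=o(d^2)$ this is $\exp(-\Omega(d^2))$, and a union bound over $e\in\{a,b\}$ gives the final claim.

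The only delicate step is the deterministic transfer from concentration of $\tr(\bX_e^j)$ to concentration of the ratios. That step needs the lower bound on $\tr(\bX_e)$ coming from the first moment, and it needs $\eta$ chosen uniformly in $d$. Both are routine because $\nu_e=\Theta_k(1)$. The main conceptual obstacle, beating the $\nu_e^{-n}$ tilt cost, is handled by the $d^2$ rate in the exponent: the Lipschitz constant is $O(d^{-1/2})$ while the concentration scale is $d$.
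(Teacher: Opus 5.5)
Your proposal is correct and follows essentially the same route as the paper. Both arguments reduce to the untilted law via \Cref{prop:prob_tilt}, use the lower bound on $\tr(\bX_e)$ to get $B=2/\min\{\nu_a,\nu_b\}$ together with a Lipschitz transfer for the ratios $\tr(\bX_e^j)/\tr(\bX_e)^j$, and finish with \Cref{lem:lipschitz}, the Haar concentration bound, and a union bound over $j\le\ell$.
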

\begin{proof}
Let us first observe that the desired statement involves the random variable $\brho_e$, which is a function of $\wt{\bX}_e$. On the other hand, the expected moments calculated in~\Cref{sec:det_moment} are for the random variables $\bX_e$. The bridge is established by \Cref{prop:prob_tilt}, which reduces our problem to bounding $\nu_e^{-n}\Pr\parens*{\mathcal E_e(\varepsilon,\ell)}$, where
\begin{equation}
{\mathcal E}_e(\varepsilon,\ell)
\coloneqq
\left\{
\left\|\frac{\bX_e}{\Tr[\bX_e]}\right\|_\infty>\frac{B}{d}
\quad\text{or}\quad
\max_{1\leq j\leq \ell}
\left|
\mathrm{pow}_j\parens*{\spec\left(\frac{\bX_e}{\Tr[\bX_e]}\right)}
-\Psi_j(e)
\right|
>\varepsilon
\right\}\,.
\label{eq:atypical-untilted-event}
\end{equation}
From~\Cref{eq:average1}, $\nu_e=\Theta_k(1)$. Let $\nu_{*}=\min\{\nu_a,\nu_b\}>0$ and set $B=\frac{2}{\nu_*}$. Since $\bX_e$ is a contraction, $\left\|\frac{\bX_e}{\Tr[\bX_e]}\right\|_{\infty}\leq \frac{1}{d\tr(\bX_e)}$. Therefore,

\begin{equation}
\left\|\frac{\bX_e}{\Tr[\bX_e]}\right\|_{\infty}> \frac{2}{d\nu_*}\implies \tr(\bX_e)\leq \nu_e/2\implies |F_1(\bX_e)-\E F_1(\bX_e)|\geq \nu_e/2\,,
\end{equation}
 where we used the definition of $F_j$ in \Cref{lem:lipschitz}. Similarly, it is immediate that

\begin{equation}
\max_{1\leq j\leq \ell}
\left|
\mathrm{pow}_j\parens*{\spec\left(\frac{\bX_e}{\Tr[\bX_e]}\right)}
-\Psi_j(e)
\right|=\max_{1\leq j\leq \ell}
\left|\frac{F_j(\bX_e)}{F_1(\bX_e)^j}- \frac{\E [F_j(\bX_e)]}{\E[F_1(\bX_e)]^j}
\right|,
\end{equation}

We now use that the map $(x,y)\mapsto y/x^j$ is uniformly
Lipschitz when $x$ is bounded away from zero. More explicitly, since $\E F_1(\bX_e)=\nu_e=\Theta_{k}(1)$, there exists a constant $\eta_{\varepsilon,\ell}$ such that
\begin{equation}
|F_j(\bX_e)-\E F_j(\bX_e)|\leq \eta_{\varepsilon,\ell}\,\, \forall\, 1\leq j\leq \ell, \implies \max_{1\leq j\leq \ell}
\left|
\mathrm{pow}_j\parens*{\spec\left(\frac{\bX_e}{\Tr[\bX_e]}\right)}
-\Psi_j(e)
\right|\leq \varepsilon\,.
\end{equation}
Collectively, we have
\begin{equation}
\mathcal{E}_e(\varepsilon,\ell) \subseteq \left\{\max_{1\leq j\leq \ell}|F_j(\bX_e)-\E F_j(\bX_e)|\geq \min\left(\eta_{\varepsilon,\ell},\frac{\nu_{*}}{2}\right)\,\right\}.
\end{equation}
We can now use the previously established concentration tools for the larger event. 

Applying \Cref{eq:concentration-lips} and \Cref{lem:lipschitz} to each
$F_j$, and then taking a union bound, gives
\begin{align}
&\Pr\parens*{\mathcal{E}_e(\varepsilon,\ell)}\leq \Pr\left(\max_{1\leq j\leq \ell}|F_j(\bX_e)-\E F_j(\bX_e)|\geq \min\left(\eta_{\varepsilon,\ell},\frac{\nu_{*}}{2}\right)
\right)
\nonumber\\
&\qquad\leq
2\ell
\exp\left(
-\frac{
d(d-2)\min(\eta_{\varepsilon,\ell},\frac{\nu_{*}}{2})^2
}{
96K\ell^2
}
\right)
=2\ell\exp\left(-c_{\varepsilon,\ell}d(d-2)\right)
\label{eq:power-sum-concentration}
\end{align}

where $c_{\varepsilon,\ell}\coloneqq \frac{\min(\eta_{\varepsilon,\ell},\frac{\nu_{*}}{2})^2
}{
96K\ell^2}$. To conclude, using the tilting relation recalled above,

\begin{align}
\Pr\parens*{\wt{\mathcal E}_a(\varepsilon,\ell)\,\cup\,\,  \wt{\mathcal E}_b(\varepsilon,\ell)}&\leq \nu_{a}^{-n}\Pr\parens*{{\mathcal E}_a(\varepsilon,\ell)}+\nu_{b}^{-n}\Pr\parens*{{\mathcal E}_b(\varepsilon,\ell)}\\&\leq 4\ell\nu_{*}^{-n}\exp\left(-c_{\varepsilon,\ell}d(d-2)\right)\\&=4\ell\exp\left(-n\log\nu_{*}-c_{\varepsilon,\ell}d(d-2)\right),
\end{align}
proving the claim.
\end{proof}

We are ready to prove the spectral separation result:

\begin{proof}[Proof of \Cref{prop:typical-spectral-separation}]
By~\Cref{lem:first-spectral-gap},
\begin{equation}
\left|
\Psi_{k+1}(a)-\Psi_{k+1}(b)
\right|
=
|\kappa_k|+O(d^{-1}),
\qquad
\kappa_k\neq0.
\end{equation}
We can therefore choose a constant $\eta_k>0$ such that, for all
sufficiently large $d$,
\begin{equation}
\left|
\Psi_{k+1}(a)-\Psi_{k+1}(b)
\right|
\geq
4\eta_k.
\label{eq:proxy-gap-eta}
\end{equation}

Let $B$ be the constant from
\Cref{lem:power-sum-concentration}, applied with $\ell=k+1$ and
$\varepsilon=\eta_k$, and define
\begin{equation}
\mathcal T_e^{(d)}
\coloneqq
\left\{
\alpha\in\mathbb R_{\geq0}^d:
\sum_{i=1}^d\alpha_i=1,\ 
\|\alpha\|_\infty\leq\frac{B}{d},\
\max_{1\leq j\leq k+1}
\left|
\mathrm{pow}_j(\alpha)-\Psi_j(e)
\right|
\leq\eta_k
\right\}.
\label{eq:typical-spectrum-sets}
\end{equation}
By~\Cref{lem:power-sum-concentration},
\begin{equation}
\Pr\left(
\spec(\brho_e)\notin\mathcal T_e^{(d)}
\right)
\leq
\exp\parens{O(n)-\Omega(d^2)}
=
\exp\left(-\Omega_k(d^2)\right),
\end{equation}
where the last equality uses $n=o(d^2)$. This proves
\eqref{eq:typical-spectrum-probability}.

Now let $\alpha\in\mathcal T_a^{(d)}$ and
$\beta\in\mathcal T_b^{(d)}$. By the triangle inequality and
\eqref{eq:proxy-gap-eta},
\begin{align}
\left|
\mathrm{pow}_{k+1}(\alpha)-\mathrm{pow}_{k+1}(\beta)
\right|
&\geq
\left|
\Psi_{k+1}(a)-\Psi_{k+1}(b)
\right|
-
\left|
\mathrm{pow}_{k+1}(\alpha)-\Psi_{k+1}(a)
\right|
\nonumber\\
&\hspace{2cm}
-
\left|
\mathrm{pow}_{k+1}(\beta)-\Psi_{k+1}(b)
\right|
\nonumber\\
&\geq
2\eta_k.
\label{eq:typical-power-sum-gap}
\end{align}
Moreover,
\begin{equation}
\|\alpha\|_\infty,\|\beta\|_\infty
\leq
\frac{B}{d}.
\end{equation}
Applying~\Cref{prop:power-sum-to-TV} with $j=k+1$ gives
\begin{equation}
\dtvsorted{\alpha}{\beta}
\geq
\frac{
2\eta_k
}{
2(k+1)B^k
}
=
\frac{\eta_k}{(k+1)B^k}.
\end{equation}
The claim follows by setting
\begin{equation}
\delta_k^{\mathrm{spec}}
\coloneqq
\frac{\eta_k}{(k+1)B^k}>0.
\end{equation}
Finally, \eqref{eq:random-spectrum-separation} follows from
\eqref{eq:typical-spectrum-probability} and a union bound over
$e\in\{a,b\}$.
\end{proof}

\section{Entropy Separation}
\label{sec:entropy-separation}

A slight modification of the ensembles of the previous section allows us to establish an entropy separation as well. The main issue with the previous class is that the logarithm of the eigenvalues may be to small to control the entropy in terms of its moment expansion. The solution is to suitably depolarize the ensembles. For $p\in(0,1)$, let $\mathcal{N}_p$ denote the depolarizing
channel, acting on states as
\begin{equation}
\mathcal{N}_p(\rho)
=
p\rho+(1-p)\frac{\Id}{d}.
\end{equation} 

By data processing, the average $n$-copy states can only be less distinguishable, therefore we don't need to prove a new bound on the trace distance: for $n = o(d^{2 - \frac{4}{k+4}})$, and the two ensembles at fixed $k$,
\begin{align}
\Dtr{\E \mathcal{N}_p(\brho_a)^{\otimes n}}{\E \mathcal{N}_p(\brho_b)^{\otimes n}}&=  \Dtr{\mathcal{N}_p^{\otimes n}(\E\brho_a^{\otimes n})}{{\mathcal{N}_p^{\otimes n}(\E\brho_b^{\otimes n})} }\\&\leq\Dtr{\E \brho_a^{\otimes n}}{\E\brho_b^{\otimes n}}\\&=\Dtr{\ol{\rho}_a^{(n)}}{\ol{\rho}_b^{(n)}} = o(1).
\end{align}

In the following, it will be useful to have the following definition for the centered normalized moments of a state $\rho$ with eigenvalues
$\lambda_1,\ldots,\lambda_d$:
\begin{equation}
\mathrm{cm}_j(\rho)
\coloneqq
\frac{1}{d}
\sum_{i=1}^d
(d\lambda_i-1)^j.
\label{eq:centered-normalized-moments}
\end{equation}
With the convention $\mathrm{pow}_0(\rho)=1$, the binomial theorem
gives
\begin{equation}
\mathrm{cm}_j(\rho)
=
\sum_{\ell=0}^j
\binom{j}{\ell}
(-1)^{j-\ell}
\mathrm{pow}_\ell(\rho).
\label{eq:centered-moment-power-sum}
\end{equation}

The following expansion relates the centered moments to the entropy.

\begin{lemma}
\label{lem:entropy-expansion}
Suppose that $\|\rho\|_\infty\leq B/d$ and set
\begin{equation}
C_B\coloneqq\max\{1,B-1\}.
\end{equation}
If $pC_B<1$, then
\begin{equation}
\log d-S\left(\mathcal{N}_p(\rho)\right)
=
\sum_{j=2}^{\infty}
\frac{(-1)^jp^j}{j(j-1)}
\mathrm{cm}_j(\rho),
\label{eq:entropy-expansion}
\end{equation}
where the series converges absolutely and uniformly over all such
states.
\end{lemma}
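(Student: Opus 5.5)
The plan is to expand the entropy eigenvalue by eigenvalue around the uniform value $1/d$. Write the eigenvalues of $\mathcal{N}_p(\rho)$ as $\mu_i = p\lambda_i + (1-p)/d = \frac{1}{d}(1 + p x_i)$, where $x_i \coloneqq d\lambda_i - 1$. Then
\begin{equation}
S(\mathcal{N}_p(\rho)) = -\sum_{i=1}^d \mu_i \log \mu_i = \log d - \frac{1}{d}\sum_{i=1}^d (1+px_i)\log(1+px_i),
\end{equation}
where I have used $\sum_i \mu_i = 1$. This reduces the claim to an expansion of $g(y) \coloneqq (1+y)\log(1+y)$ at $y = px_i$, averaged over $i$.

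The second step is the scalar Taylor series. For $|y|<1$,
\begin{equation}
g(y) = y + \sum_{j\geq 2} \frac{(-1)^j}{j(j-1)} y^j .
\end{equation}
This follows by multiplying $\log(1+y)=\sum_{m\geq1}(-1)^{m+1}y^m/m$ by $(1+y)$ and collecting the coefficient of $y^j$, which is $(-1)^{j+1}/j + (-1)^{j}/(j-1) = (-1)^j/(j(j-1))$. Substituting $y = px_i$ and averaging over $i$, the linear term drops out because $\frac1d\sum_i x_i = \sum_i \lambda_i - 1 = 0$. The remaining terms give $\sum_{j\geq2}\frac{(-1)^jp^j}{j(j-1)}\,\mathrm{cm}_j(\rho)$ by the definition \eqref{eq:centered-normalized-moments}. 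Exchanging the finite sum over $i$ with the series is legitimate once each series converges absolutely.

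The third step is convergence, which is the only point needing care. Since $0\le\lambda_i\le B/d$, we have $x_i\in[-1,B-1]$, so $|x_i|\le C_B$ and $|px_i|\le pC_B<1$. Hence $|\mathrm{cm}_j(\rho)|\le C_B^j$, and the $j$th term is bounded by $(pC_B)^j/(j(j-1))$. This bound is summable and depends on neither $\rho$ nor $d$, so the Weierstrass M-test gives absolute and uniform convergence over all admissible states.

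I do not expect a real obstacle. The one subtlety is that $x_i$ can equal $-1$, so the argument of the logarithm must stay positive. This holds because $1+px_i\ge 1-p>0$, and $p<1$ follows from $pC_B<1$ together with $C_B\ge1$.
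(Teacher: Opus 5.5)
Your proposal is correct and follows essentially the same route as the paper's proof: rewrite $S(\mathcal{N}_p(\rho))$ as $\log d - \frac{1}{d}\sum_i (1+px_i)\log(1+px_i)$ with $x_i = d\lambda_i - 1$, Taylor expand, cancel the linear term via $\sum_i x_i = 0$, and get uniform absolute convergence from $|px_i|\le pC_B<1$. Your explicit coefficient check and the positivity check $1+px_i\ge 1-p>0$ are fine additions that the paper leaves implicit.
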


\begin{proof}
The entropy of the depolarized state is

\begin{align}
S(\mathcal{N}_p(\rho))&=-\sum_{i=1}^d \left(p\lambda_i+\frac{(1-p)}{d}\right)\log \left(p\lambda_i+\frac{(1-p)}{d}\right)\\
&=\log d-\frac{1}{d}\sum_{i=1}^d \left(1+p(d\lambda_i-1)\right)\log \left(1+p(d\lambda_i-1)\right)\,.
\end{align}
Clearly, $d\lambda_i-1\geq -1$, while the assumption $\|\rho\|_{\infty}\leq \frac{B}{d}$ gives $d\lambda_i-1\leq B-1$. From the condition $pC_B<1$, it follows that $|p(d\lambda_i-1)|< 1$. For $|u|<1$,
\begin{equation}
(1+u)\log(1+u)
=
u+
\sum_{j=2}^{\infty}
\frac{(-1)^j}{j(j-1)}u^j.
\end{equation}
Therefore, summing over $i$ and rearranging
\begin{align}
S(\mathcal{N}_p(\rho))-\log d&=-\frac{1}{d}\sum_{i=1}^d \sum_{j=2}^{\infty}\frac{(-1)^j(p(d\lambda_i-1))^{j}}{j(j-1)}=-\sum_{j=2}^{\infty}\frac{(-1)^jp^j\mathrm{cm}_j(\rho)}{j(j-1)}\,
\end{align}
where the linear term cancels because $\sum_{i=1}^d p(d\lambda_i-1)=0$.
Uniform absolute
convergence follows from $p^j|\mathrm{cm}_j(\rho)|< (pC_B)^j$ with $pC_B<1$. 
\end{proof}

We can now prove the typical entropy separation.

\begin{proof}[Proof of \Cref{prop:typical-entropy-separation}]
Let $B_k$ be the operator-norm constant from
\Cref{lem:power-sum-concentration} and set
\begin{equation}
C_k'\coloneqq\max\{1,B_k-1\}.
\end{equation}
Choose $p_k>0$ sufficiently small that
\begin{equation}
p_kC_k'<1.
\end{equation}

We next choose a constant $\eta_k>0$, to be specified below, and
define
\begin{equation}
\mathcal U_e^{(d)}
\coloneqq
\left\{
\rho\in\mathcal D_d:
\|\rho\|_\infty\leq\frac{B_k}{d},\
\max_{1\leq j\leq k+1}
\left|
\mathrm{pow}_j(\rho)-\Psi_{e,j}
\right|
\leq\eta_k
\right\}.
\label{eq:typical-entropy-sets}
\end{equation}
By~\Cref{lem:power-sum-concentration},
\begin{align}
\Pr\left(
\brho_e\notin\mathcal U_e^{(d)}
\right)
&\leq
2(k+1)
\exp\left(
C_kn-c_{k,\eta_k}d(d-2)
\right)
\nonumber\\
&=
\exp\left(-\Omega_k(d^2)\right),
\label{eq:typical-entropy-set-probability}
\end{align}
where the last equality uses $n=o(d^2)$.

Now fix $\rho_a\in\mathcal U_a^{(d)}$ and
$\rho_b\in\mathcal U_b^{(d)}$. By
\Cref{eq:centered-moment-power-sum}, 
each centered moment is a fixed linear combination of the
power sums of equal or lower degree, and the coefficient of
$\mathrm{pow}_{k+1}$ in $\mathrm{cm}_{k+1}$ is one. Thus, by \Cref{lem:first-spectral-gap}, we have
\begin{align}
\mathrm{cm}_j(\rho_a)-\mathrm{cm}_j(\rho_b)
&=
O_k\left(\eta_k+d^{-1}\right),
\qquad
2\leq j\leq k,
\label{eq:centered-moment-matching}
\\
\mathrm{cm}_{k+1}(\rho_a)-\mathrm{cm}_{k+1}(\rho_b)
&=
\kappa_k
+
O_k\left(\eta_k+d^{-1}\right).
\label{eq:centered-moment-gap}
\end{align}
Now, applying~\Cref{lem:entropy-expansion} to $\rho_a$ and $\rho_b$ and
subtracting the two expansions gives
\begin{align}
&S\left(\mathcal{N}_{p_k}(\rho_a)\right)
-
S\left(\mathcal{N}_{p_k}(\rho_b)\right)
\nonumber\\
&\qquad=
\frac{
(-1)^{k+2}\kappa_kp_k^{k+1}
}{
k(k+1)
}
+
O_k\left(
\eta_k+d^{-1}+p_k^{k+2}
\right).
\label{eq:entropy-leading-term}
\end{align}
Here the first two error terms follow from
\eqref{eq:centered-moment-matching}--%
\eqref{eq:centered-moment-gap}. The last term bounds the remaining
tail, using
$|\mathrm{cm}_j(\rho_e)|< C'_{k}{}^j$ for any state in $\mathcal{U}_{a}^{(d)}$ or $\mathcal{U}_b^{(d)}$.

We can now complete the choices of constants. First choose $p_k>0$
sufficiently small that the $O_k(p_k^{k+2})$ term is at most one
quarter of the leading term. Then choose $\eta_k>0$ sufficiently
small that the $O_k(\eta_k)$ term is also at most one quarter of the
leading term. For all sufficiently large $d$, the $O_k(d^{-1})$
term satisfies the same bound. It follows from
\eqref{eq:entropy-leading-term} that
\begin{equation}
\left|
S\left(\mathcal{N}_{p_k}(\rho_a)\right)
-
S\left(\mathcal{N}_{p_k}(\rho_b)\right)
\right|
\geq
\frac{
|\kappa_k|p_k^{k+1}
}{
4k(k+1)
}.
\end{equation}
The claim follows by setting
\begin{equation}
\delta_k^{\mathrm{ent}}
\coloneqq
\frac{
|\kappa_k|p_k^{k+1}
}{
4k(k+1)
}
>0.
\end{equation}
Finally, \eqref{eq:random-entropy-separation} follows from
\eqref{eq:typical-entropy-probability} and a union bound over
$e\in\{a,b\}$.
\end{proof}

\section{Rank Separation}
\label{sec:rank-separation}
Throughout this section, we continue to fix $k$, $K$, $a$, and $b$ as in \Cref{prop:prouhet}. For most of the argument, we work with the untilted normalized states $\bX_e/\Tr(\bX_e)$, where $\bX_e\sim\PRP_d(e)$. At the end, we use \Cref{prop:prob_tilt} to transfer the resulting high-probability conclusion to $\brho_e=\wt{\bX}_e/\Tr(\wt{\bX}_e)$.

Label the two sequences $a,b$ so that $\max_i\{a_i\}=2^k$ and $\max_i\{b_i\}=2^k-1$. Set $r\coloneqq d/2^k$ and $\Delta\coloneqq 1/(2^k-1)-1/2^k$. Then $\operatorname{rank}(\brho_a)=r$ almost surely. Our goal is to show that $\brho_b$ is bounded away in trace distance from every state of rank at most $r$ with high probability.

\subsection{Limiting spectral mass analysis}

For an untilted draw $\bX_e$, let $\bL_{e,d}$ denote the empirical spectral law of $\bX_e/\Tr(\bX_e)$ after its eigenvalues are rescaled by a factor of $d$, so that the law is on an $O(1)$ scale.

\begin{lemma}[Limiting law and its atom at zero]
\label{lem:free-limit}
For each fixed parameter sequence $e\in\{a,b\}$, there is a compactly supported probability measure $\mu_e$ on $[0,\infty)$ such that, almost surely,
\begin{equation}
\label{eq:weak-convergence}
\bL_{e,d}\Longrightarrow\mu_e.
\end{equation}
Moreover,
\begin{equation}
\label{eq:zero-atom}
\mu_e(\{0\})=1-\frac{1}{\max_i\{e_i\}},
\end{equation}
and, for every fixed positive integer $j$,
\begin{equation}
\label{eq:Psi-limit}
\Psi_j(e)\longrightarrow\int_0^\infty t^j\,d\mu_e(t).
\end{equation}
\end{lemma}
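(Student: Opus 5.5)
We will use free probability for the product of independent Haar projections, combined with the concentration of power sums already established. The limit is identified through its moments. By \Cref{prop:f} and the argument in the proof of \Cref{lem:first-spectral-gap}, $\E[\tr(\bX_e^j)] = \nu_e^j\,\tr(\zeta_j\prod_{t=1}^j f_e(\wt J_t))$. Since $\|\wt J_t\|_\infty\le (t-1)/d$, we can expand each $f_e(\wt J_t)$ as a power series in $\wt J_t$. Every permutation $\pi$ in the expansion contributes $\tr(\zeta_j\pi)=d^{\numcyc(\zeta_j\pi)-1}$, so only finitely many terms survive at order $d^0$. Hence $\E\tr(\bX_e^j)\to m_j(e)$ for an explicit constant $m_j(e)$, and therefore $\Psi_j(e)\to m_j(e)/\nu_e^j$. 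To show these are the moments of a measure, we use asymptotic freeness (Voiculescu) of independent Haar-rotated projections $\bPi_1,\dots,\bPi_K$ with traces $1/e_i$. The matrix $\bX_e=\bPi_1\cdots\bPi_K\cdots\bPi_1$ has the same nonzero spectrum as the free multiplicative convolution $\pi_{1/e_1}\boxtimes\cdots\boxtimes\pi_{1/e_K}$ of Bernoulli laws. This gives a compactly supported limit $\mu'_e$ on $[0,1]$ for the spectral law of $\bX_e$. We then set $\mu_e$ to be the pushforward of $\mu'_e$ under $x\mapsto x/\nu_e$; its support is contained in $[0,1/\nu_e]$. Because $\mu_e$ is compactly supported, it is determined by its moments, and \eqref{eq:Psi-limit} follows.

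For the almost sure weak convergence \eqref{eq:weak-convergence}, we apply \Cref{lem:power-sum-concentration} to the untilted variables. More precisely, we use the concentration bound \eqref{eq:power-sum-concentration} with $n=0$: for each fixed $j$ and $\varepsilon>0$, the probability that $|\mathrm{pow}_j(\bX_e/\Tr\bX_e)-\Psi_j(e)|>\varepsilon$ is at most $e^{-\Omega(d^2)}$. This bound is summable in $d$, so Borel--Cantelli shows that almost surely every moment of $\bL_{e,d}$ converges to the corresponding moment of $\mu_e$. The laws $\bL_{e,d}$ are eventually supported in $[0,B]$ almost surely, so moment convergence upgrades to weak convergence. (The statement uses $d$ ranging over multiples of all $e_i$; any sequence works.)

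The main obstacle is the atom \eqref{eq:zero-atom}. The upper bound $\mu_e(\{0\})\ge 1-1/\max_i e_i$ is easy: $\mathrm{rank}(\bX_e)\le\min_i d/e_i$, so at least a $(1-1/\max_i e_i)$ fraction of eigenvalues vanish for every $d$. Since $\{0\}$ is closed, the portmanteau inequality $\limsup\bL_{e,d}(\{0\})\le\mu_e(\{0\})$ transfers this to the limit. The reverse inequality needs the free-probability fact that for free projections $p,q$ the atom at zero satisfies $\mu_{pqp}(\{0\})=\max(1-\tau(p),1-\tau(q))$, i.e., the kernel is no larger than forced; this is the Bercovici--Voiculescu description of atoms in free multiplicative convolution. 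We would iterate this fact: the compression $\bPi_1\cdots\bPi_{K}\cdots\bPi_1$ has a nonzero part of normalized dimension $\min_i 1/e_i$ in the limit. Concretely, we write $Y=\bPi_{K}\cdots\bPi_1$. Then $Y^\dagger Y$ and $YY^\dagger$ share nonzero spectrum, and we induct on $K$ using the two-projection atom formula applied to the range projection of the previous stage together with the next projection. If citing this result proves problematic, a fallback is to compute the $S$-transform. The $S$-transform of $\pi_{\alpha}$ is $(1+z)/(\alpha+z)$, and their product gives the $\psi$-function of $\mu'_e$. Reading off $\lim_{z\to-1}$ behavior of $\psi$ then yields the zero mass $1-\min_i(1/e_i)=1-1/\max_i e_i$.
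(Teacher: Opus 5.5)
Your proposal is correct, and its core is the same as the paper's: asymptotic freeness of the Haar-rotated projections, identification of the limit of $\bX_e$ with $\beta_1\boxtimes\cdots\boxtimes\beta_K$, rescaling by $\nu_e$, and the zero-atom formula $(\mu\boxtimes\nu)(\{0\})=\max\{\mu(\{0\}),\nu(\{0\})\}$. The paper cites Belinschi for this formula, and your range-projection induction is an operator-level way of iterating the same fact.

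You differ mainly in how you get the almost-sure statement. The paper cites almost-sure asymptotic freeness (Collins--Male) and then uses bounded convergence for \eqref{eq:Psi-limit}. You instead need only Voiculescu's in-expectation version to identify $\lim_d \Psi_j(e)$. You then obtain almost-sure moment convergence and eventual support in $[0,B]$ from the paper's own untilted bound \eqref{eq:power-sum-concentration} (\Cref{lem:power-sum-concentration} at $n=0$) plus Borel--Cantelli. This is more self-contained given the paper's machinery. The Jucys--Murphy expansion of $\E\tr(\bX_e^j)$ is then superfluous, since in-expectation freeness already gives the moment limits.

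For the atom, your rank-plus-portmanteau argument correctly gives $\mu_e(\{0\})\ge 1-1/\max_i e_i$. That is a lower bound on the atom, not an ``upper bound'' as you wrote. Only the reverse inequality is used later, in \eqref{eq:mu-b-zero} and the choice of $c$ in \Cref{sec:rank-separation}. So the elementary half does not reduce the dependence on the free atom formula; both you and the paper ultimately need it.
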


\begin{proof}
Write the independent random projections defining $\bX_e$ as
\(
\bPi_i=\bU_iP_i^{(0)}\bU_i^\dagger,
\)
where the $\bU_i$ are independent Haar-random unitaries and $P_i^{(0)}$ is a deterministic projection of normalized rank $1/e_i$. Asymptotic freeness of independent Haar conjugates implies that $(\bPi_1,\ldots,\bPi_K)$ converges almost surely in noncommutative distribution to freely independent projections $(\bp_1,\ldots,\bp_K)$ with $\tau(\bp_i)=1/e_i$.
See, for example, Collins and Male~\cite{CollinsMale}.

Consequently, the empirical law of $\bX_e$ converges almost surely to the law of
\begin{equation}
\label{eq:free-product}
\bx_e\coloneqq \bp_1\bp_2\cdots \bp_K\cdots \bp_2\bp_1.
\end{equation}
Also,
\begin{equation}
\label{eq:trace-limit}
\tr(\bX_e)\longrightarrow\tau(\bx_e)=\prod_{i=1}^K\tau(\bp_i)\eqqcolon\nu_e>0.
\end{equation}
Indeed, if $\by$ belongs to the algebra generated by $\bp_{j+1},\ldots,\bp_K$, then traciality and freeness give \(\tau(\bp_j\by\bp_j)=\tau(\by\bp_j)=\tau(\by)\tau(\bp_j)\), and one iterates this identity. It follows that $\mu_e$ is the law of $\bx_e/\nu_e$, proving \Cref{eq:weak-convergence}.

Let
\begin{equation}
\label{eq:bernoulli-laws}
\beta_i\coloneqq\left(1-\frac{1}{e_i}\right)\delta_0+\frac{1}{e_i}\delta_1
\end{equation}
be the law of $\bp_i$. Recursively viewing \Cref{eq:free-product} as $\bp_j\by\bp_j=\bp_j^{1/2}\by\bp_j^{1/2}$ shows that
\begin{equation}
\label{eq:free-convolution}
\operatorname{Law}(\bx_e)=\beta_1\boxtimes\beta_2\boxtimes\cdots\boxtimes\beta_K.
\end{equation}
For probability measures on $[0,\infty)$, the atom formula for free multiplicative convolution is
\begin{equation}
\label{eq:atom-formula}
(\mu\boxtimes\nu)(\{0\})=\max\{\mu(\{0\}),\nu(\{0\})\};
\end{equation}
see Belinschi~\cite{BelinschiAtoms}. Iterating \Cref{eq:atom-formula} in \Cref{eq:free-convolution} gives $\operatorname{Law}(\bx_e)(\{0\})=1-1/\max_i\{e_i\}$. The positive rescaling $\bx_e\mapsto\bx_e/\nu_e$ does not change the atom at zero, proving \Cref{eq:zero-atom}. Factors with $e_i=1$ are identities and may simply be omitted.

Finally, for each fixed $j$, $0\leq\tr(\bX_e^j)\leq1$, so almost-sure moment convergence and bounded convergence yield
\begin{equation}
\label{eq:expected-moment-limit}
\E\tr(\bX_e^j)\longrightarrow\tau(\bx_e^j).
\end{equation}
Since $\E\tr(\bX_e)=\nu_e$ exactly, the definition of $\Psi_j(e)$ gives
\begin{equation*}
\Psi_j(e)=\frac{\E\tr(\bX_e^j)}{\nu_e^j}\longrightarrow\tau\parens*{(\bx_e/\nu_e)^j}=\int t^j\,d\mu_e(t),
\end{equation*}
which is \Cref{eq:Psi-limit}.
\end{proof}

For $e=b$, \Cref{lem:free-limit} gives
\begin{equation}
\label{eq:mu-b-zero}
\mu_b(\{0\})=1-2^{-k}-\Delta.
\end{equation}
Choose $c=c_k>0$ sufficiently small that
\begin{equation}
\label{eq:c-choice}
\mu_b([0,2c])\leq1-2^{-k}-\frac{3\Delta}{4}.
\end{equation}
Such a choice exists by continuity from above at the singleton $\{0\}$. Define a continuous cutoff $\phi:[0,\infty)\to[0,1]$ by
\begin{equation}
\label{eq:phi}
\phi(t)\coloneqq
\begin{cases}
1,&0\leq t\leq c,\\
2-t/c,&c<t<2c,\\
0,&t\geq2c.
\end{cases}
\end{equation}
Then
\begin{equation}
\label{eq:phi-properties}
\mathbbm{1}_{[0,c]}\leq\phi\leq\mathbbm{1}_{[0,2c]},
\qquad
\int\phi\,d\mu_b\leq1-2^{-k}-\frac{3\Delta}{4}.
\end{equation}

\subsection{Polynomial approximation and the eigenvalue count}

Let $B>0$ be the constant from \Cref{lem:power-sum-concentration}, and set $\alpha\coloneqq\Delta/16$. By the Weierstrass approximation theorem, there is a polynomial $P(t)=\sum_{j=0}^{\ell}\gamma_jt^j$ of some finite degree $\ell=\ell(k)$ such that
\begin{equation}
\label{eq:poly-approx}
\sup_{0\leq t\leq B}|P(t)-\phi(t)|\leq\alpha.
\end{equation}
After $k$ is fixed, the degree $\ell$ and all the coefficients $\gamma_j$ are constants independent of $d$ and $n$. Choose $\varepsilon>0$ sufficiently small that
\begin{equation}
\label{eq:epsilon-choice}
\varepsilon\sum_{j=1}^{\ell}|\gamma_j|\leq\alpha.
\end{equation}
By \Cref{eq:Psi-limit}, for all sufficiently large $d$,
\begin{equation}
\label{eq:center-convergence}
\left|\sum_{j=0}^{\ell}\gamma_j\Psi_j(b)-\int P\,d\mu_b\right|\leq\alpha,
\qquad
\Psi_0(b)\coloneqq1.
\end{equation}

Define the good event
\begin{equation}
\label{eq:good-event}
\mathcal G\coloneqq
\left\{\left\|\frac{\bX_b}{\Tr(\bX_b)}\right\|_\infty\leq\frac{B}{d}\right\}
\cap
\left\{\max_{1\leq j\leq\ell}\left|\mathrm{pow}_j\parens*{\frac{\bX_b}{\Tr(\bX_b)}}-\Psi_j(b)\right|\leq\varepsilon\right\}.
\end{equation}
The untilted concentration estimate in the proof of \Cref{lem:power-sum-concentration}, namely \Cref{eq:power-sum-concentration}, implies that, whenever $n=o(d^2)$, $\mathcal G$ occurs except with probability at most $\exp\parens{-\Omega(d^2)}$.

On $\mathcal G$, write $\by_i\coloneqq d\lambda_i\left(\bX_b/\Tr(\bX_b)\right)$, so $0\leq\by_i\leq B$, and set
\begin{equation}
\label{eq:LP}
L_P\left(\frac{\bX_b}{\Tr(\bX_b)}\right)\coloneqq\frac{1}{d}\sum_{i=1}^dP(\by_i)=\sum_{j=0}^{\ell}\gamma_j\mathrm{pow}_j\parens*{\frac{\bX_b}{\Tr(\bX_b)}},
\qquad
\mathrm{pow}_0\coloneqq1.
\end{equation}
The moment bound \Cref{eq:epsilon-choice} and the center convergence \Cref{eq:center-convergence} give the single estimate
\begin{equation}
\label{eq:LP-close}
\left|L_P\left(\frac{\bX_b}{\Tr(\bX_b)}\right)-\int P\,d\mu_b\right|\leq2\alpha.
\end{equation}
Meanwhile, \Cref{eq:phi-properties,eq:poly-approx} give
\begin{equation}
\label{eq:count-vs-LP}
\frac{1}{d}\#\{i:\by_i\leq c\}\leq L_P\left(\frac{\bX_b}{\Tr(\bX_b)}\right)+\alpha,
\qquad
\int P\,d\mu_b\leq\int\phi\,d\mu_b+\alpha.
\end{equation}
Combining only these two displays,
\begin{equation}
\label{eq:small-count}
\frac{1}{d}\#\{i:\by_i\leq c\}\leq1-2^{-k}-\frac{\Delta}{2}.
\end{equation}
Indeed, the total approximation loss is $4\alpha=\Delta/4$.

\subsection{Conversion to rank distance}

We will now prove \Cref{prop:rank-separation}.

\begin{proof}
We first prove the yes-case. If $\bW_e=\bPi_K\cdots\bPi_1$, then independent Haar-random subspaces are in general position almost surely, and therefore
\begin{equation}
\label{eq:generic-rank}
\operatorname{rank}(\bW_e)=\min_i\operatorname{rank}(\bPi_i)=\frac{d}{\max_i\{e_i\}}
\qquad\text{almost surely}.
\end{equation}
For completeness, condition on the first $j-1$ projections and let $\bS$ be the image of their product. An independent Haar-random kernel of dimension $d-d/e_j$ intersects $\bS$ in the smallest dimension allowed by dimension counting, almost surely; hence applying $\bPi_j$ reduces the rank to $\min\{\dim\bS,d/e_j\}$. Induction proves \Cref{eq:generic-rank}. Since $\operatorname{rank}(\bW_e^\dagger\bW_e)=\operatorname{rank}(\bW_e)$, the same is true of $\bX_e$. The tilted law is absolutely continuous with respect to the untilted law, and scalar normalization does not alter rank. Since $\max_i\{a_i\}=2^k$, this proves \Cref{eq:yes-rank}.

For the no-case, first work under the untilted law. On $\mathcal G$, enumerate the eigenvalues of $\bX_b/\Tr(\bX_b)$ in nonincreasing order. The bottom $(1-2^{-k})d$ eigenvalues contain, by \Cref{eq:small-count}, at least $\Delta d/2$ values satisfying $\by_i>c$. Consequently,
\begin{equation}
\label{eq:tail-mass}
\sum_{i>r}\lambda_i\left(\frac{\bX_b}{\Tr(\bX_b)}\right)=\frac{1}{d}\sum_{i>r}\by_i\geq\frac{c\Delta}{2} \eqqcolon \delta^{\mathrm{rank}}.
\end{equation}
By \Cref{eq:rank-approximation}, this tail is precisely the trace distance from $\bX_b/\Tr(\bX_b)$ to the nearest state of rank at most $r$. Thus \Cref{eq:tail-mass} proves the desired distance bound on $\mathcal G$ under the untilted law. The $n$-tilted law is absolutely continuous with density proportional to $\Tr(\bX_b)^n$, so \Cref{prop:prob_tilt} transfers the failure probability at a cost of at most $\nu_b^{-n}=\exp\parens{O(n)}$. Together with the preceding failure bound, this gives the overally required probability bound when $n=o(d^2)$.
\end{proof}

\bibliographystyle{alpha}
\bibliography{references} 

\appendix

\end{document}